\documentclass[format=sigconf, review=false]{acmart}
\usepackage{acm-arxiv}

\usepackage{balance} 
\usepackage{tikz}

\DeclareMathOperator*{\argmax}{arg\,max}

\newtheorem{rthm}{Theorem}
\newenvironment{reptheorem}[1]
  {\renewcommand\therthm{\ref*{#1}}\rthm}
  {\endrthm}

\definecolor{paperBlue}{HTML}{CCE1F4}
\definecolor{paperGreen}{HTML}{CBEBDD}
\definecolor{paperRed}{HTML}{F4D5E1}
\definecolor{paperYellow}{HTML}{F6E7C1}
\newcommand{\vb}[2]{{\setlength{\fboxsep}{1pt}\colorbox{#1}{\vphantom{A}\textsf{#2}}}}

\newcommand{\cA}{\vb{paperBlue}{A}}
\newcommand{\cB}{\vb{paperGreen}{B}}
\newcommand{\cC}{\vb{paperRed}{C}}

\AtEndPreamble{%
  \theoremstyle{acmdefinition}%
}

\title[Trembling toward Duverger]{Trembling toward Duverger: Convergence and Cycles in Synchronous Plurality Dynamics}

\author{Maximilian Weigel}
\affiliation{
  \institution{ETH Z\"urich}
  \city{Z\"urich}
  \country{Switzerland}}
\email{mweigel@ethz.ch}

\author{Damien Berriaud}
\affiliation{
  \institution{ETH Z\"urich}
  \city{Z\"urich}
  \country{Switzerland}}
\email{dberriaud@ethz.ch}

\author{Roger Wattenhofer}
\affiliation{
  \institution{ETH Z\"urich}
  \city{Z\"urich}
  \country{Switzerland}}
\email{wattenhofer@ethz.ch}

\begin{abstract}
Duverger's law states that plurality elections tend to concentrate support on two candidates. Pivotal-voter models typically derive this from aggregate uncertainty about votes, so that two-candidate competition emerges in the limit of large electorates. We instead propose a model for a fixed, finite population. A poll reveals the current support counts, and each voter best responds under trembling-hand beliefs, assuming that every other voter may deviate with a small probability.
Starting from truthful votes, we show that a single update typically suffices for the dynamics to converge to a stable profile in which every voter supports her preferred candidate among the two poll leaders. For any number of candidates and any full-support tremble belief, a four-vote lead of the runner-up over all lower candidates suffices; for three candidates, a lead of two suffices. As a consequence, under random preferences, Duverger's law emerges after one update with probability $1-\mathcal O(N^{-1/2})$.
While these results may suggest that the dynamics always settle, convergence can fail. Mild period-two cycles already arise, with a few voters swapping ballots around an otherwise stable profile. We show that substantially richer behavior is possible, even with three candidates.
Period-three cycles arise with six voters changing their ballots, and no such cycle can involve fewer than four. Cycles can also alternate the plurality winner.
Moreover, larger electorates support ever longer cycles: for every odd $\ell\geq3$, truthful voting can lead to a cycle whose vote counts repeat every $\ell$ rounds and whose full ballot profile repeats every $2\ell$ rounds.
\end{abstract}

\keywords{Duverger's Law, Iterative Voting, Plurality, Synchronous Best-Response Dynamics, Trembling-hand Equilibrium}

\ccsdesc[500]{Theory of computation~Convergence and learning in games}
\ccsdesc[300]{Computing methodologies~Multi-agent systems}
\ccsdesc[300]{Computing methodologies~Cooperation and coordination}

\newcommand{\BibTeX}{\rm B\kern-.05em{\sc i\kern-.025em b}\kern-.08em\TeX}

\begin{document}
\maketitle

\section{Introduction}

A few weeks before an election, a poll shows two front-runners, a centrist narrowly in third place, and a fourth candidate far behind. Supporters of the fourth candidate do the sensible thing: they rally behind the centrist, their second choice. Meanwhile, some of the centrist's wavering supporters, who prefer her only slightly to one of the front-runners, conclude that she cannot win and defect. Both groups act on the same poll, at the same time. The defectors are then surprised by the next poll: the centrist, their favorite all along, is now in the lead. Everyone revises again, and the poll moves again. Do such revisions ever settle, or can polls and ballots keep chasing each other forever?

Each individual move in this story is the familiar logic behind Duverger's law: plurality rule tends to concentrate support on two candidates, because voters desert those who cannot win~\cite{duverger1954political,cox1997making}. 
Formal accounts of this coordination usually avoid the issue by relying on large electorates, deriving pivot probabilities from aggregate uncertainty about votes and obtaining two-candidate competition in the limit~\cite{palfrey1989duverger,myerson1993voting}. We instead fix a finite electorate, let all voters respond to the same poll simultaneously, and follow the resulting ballot profiles exactly.

The difficulty in a finite electorate is that most voters are not pivotal at the observed counts, so all ballots lead to the same outcome. Following Obraztsova et al.\ \cite{obraztsova2016trembling}, we resolve this indifference with trembling-hand beliefs: each voter treats the other ballots as subject to small independent deviations and chooses the ballot with the highest expected utility as the deviation probability tends to zero. These deviations are beliefs, not realized errors, and they select a deterministic response in every finite election. Applying this response to all voters simultaneously after each poll defines the dynamics we study, which Obraztsova et al.\ leave as an open direction.

\paragraph{Contributions.}
Our first contribution, in Section~\ref{sec:fixation}, identifies when one simultaneous revision ends the dynamics. If the runner-up leads every lower candidate by at least four votes, every voter switches to her preferred poll leader and no voter wants to move again: the profile becomes a sincere runoff between the leaders. This holds for any number of candidates and any full-support tremble belief; with three candidates a lead of two suffices, and parity decides the remaining cases. Failure thus requires a near-tie in the initial poll, which has probability $\mathcal O(N^{-1/2})$ under random preferences.

Our second contribution, in Section~\ref{sec:cycles}, shows that convergence can nevertheless fail, already with three candidates, and in escalating ways. Beyond the period-two swap in an election of \citet{obraztsova2016trembling},
period-three cycles with three candidates arise with six active voters, and never with fewer than four. Cycles can alternate the plurality winner, as in the story above. Finally, for every odd $\ell\geq3$, an election with $3(\ell+4)$ voters started from truthful votes enters a cycle whose poll has period $\ell$ and whose ballot profile has period $2\ell$.

\section{Related work}\label{sec:related_work}

\paragraph{Duverger's law and pivotal voting.}
Duverger's law asserts that plurality rule concentrates support on two candidates \cite{duverger1954political}, which Cox \cite{cox1997making} explains as strategic coordination. Formal derivations usually rely on large electorates, where pivot probabilities stem from aggregate uncertainty \cite{palfrey1989duverger,myerson1993voting}. Fey \cite{fey1997stability} shows that non-Duvergerian equilibria are unstable under poll-driven belief adjustment, and Messner and Polborn \cite{messner2011miscounts} obtain exactly two vote-getters in equilibria robust to rare miscounts. Forand and Maheshri \cite{forand_dynamic_2015} give a dynamic account based on party entry and exit rather than voter coordination. These works study equilibria and their stability; we follow individual ballots exactly in a fixed finite electorate.

\paragraph{Non-pivotal voters and trembling hands.}
In a finite election most voters are not pivotal, so Nash equilibrium predicts little. Refinements break this indifference through truth bias \cite{obraztsova_plurality_2013}, abstention \cite{desmedt_equilibria_2010}, or local dominance \cite{meir2014local,meir2015uncertainty}. Obraztsova et al.\ \cite{obraztsova2016trembling} adapt trembling-hand perfection \cite{selten1975perfectness} to plurality and characterize the resulting equilibria. Their Proposition~2 is the static precursor of our separation theorem; their Example~4 admits no equilibrium, and they list trembling-hand best-response dynamics as an open direction. We analyze exactly these dynamics under synchronous updates. Since our update is a deterministic map on a finite set (Lemma~\ref{lem:small-p}), their Example~4 must cycle; it does so with period two, as two voters swap ballots around an otherwise stable profile.

\paragraph{Iterative and poll-based voting.}
Under plurality, sequential best responses converge given suitable tie-breaking \cite{meir2010convergence,meir2017acyclic}, while other rules may not \cite{lev_convergence_2012}; see \cite{meir2018strategic} for a survey. Further work asks which equilibria are reached and how good they are \cite{thompson_empirical_2013,rabinovich_analysis_2015,branzei_how_2013,kavner_strategic_2021,kavner2024average}. Poll-based models let voters respond to published scores \cite{reijngoud2012polls,wilczynski2019poll}, which also shape human votes \cite{fairstein2019polls}. Closest to us, Vasselai \cite{vasselai2022icv} lets all voters respond synchronously to pivot probabilities estimated from current vote shares, proving convergence for large electorates and observing it in simulations of small ones. We derive responses from vanishing tremble beliefs instead, and exhibit cycles with only three candidates.

\paragraph{Cycles under simultaneous updates.}
Simultaneous revisions can cycle where sequential ones converge \cite{meir2017acyclic}. Kloeckner \cite{kloeckner2022cycles} proves that synchronous cycles are robust to small perturbations and exhibits bad cycles under approval voting. Goldberg et al.\ \cite{goldberg2026complexity} show that detecting convergence or cycles is NP-complete for iterative plurality, and build truthful period-$m$ cycles with two voters and $m$ candidates under top-bottom updates. We instead fix three candidates and obtain periods growing with the electorate, with count period $\ell$ and profile period $2\ell$.

\section{Model: Polls, Trembles, and Synchronous Update}\label{sec:model}

Let $M$ contain $m\geq2$ candidates and let $[N]=\{1,\ldots,N\}$ index the voters. Voter $i$ has fixed utilities $u_i:M\to\mathbb R$ with distinct values. At round $t$, the complete ballot profile is $S^t=(s_i^t)_{i\in[N]}$, and a poll publishes the counts $n_c^t=|\{i:s_i^t=c\}|$. Write $\mathbf e_c$ for the unit vector of candidate $c$ and $W(n)=\argmax_{c\in M}n_c$ for the plurality winner set. A nonempty tied set is valued by its uniform lottery $u_i(W)=\frac{1}{|W|}\sum_{c\in W}u_i(c).$
This convention is used only in the expected-utility calculation; no final winner is drawn.

\paragraph{Trembling-hand responses.}
After the poll, each voter reconsiders her ballot. She first removes her own ballot from the poll, giving the private counts $q_i^t=n^t-\mathbf e_{s_i^t}$; hence identical voters with different current ballots may respond differently. Taking the other ballots as certain would usually leave her indifferent, since a single ballot rarely changes the winner. Following Obraztsova et al.\ \cite{obraztsova2016trembling}, she instead believes that each other ballot independently stays with probability $1-p$ and otherwise switches uniformly to another candidate, so a specified directed tremble has probability $p/(m-1)$. These trembles are beliefs, not realized errors. If $\widetilde q$ denotes the resulting random private counts, the expected payoff of ballot $c$ is $U_i^p(c\mid q)=\mathbb E\bigl[u_i(W(\widetilde q+\mathbf e_c))\bigr]$.

For a fixed private count $q$, every pairwise payoff difference is a polynomial in $p$, and its first nonzero coefficient determines its sign for all sufficiently small $p>0$. To compute these coefficients, we call a realization $v$ of $\widetilde q$ a \emph{terminal} private vector. A \emph{flow} specifies how many of the other ballots tremble from each candidate to each other candidate; for instance, $k(A\mapsto B)$ denotes $k$ ballots trembling from $A$ to $B$. Many individual tremble assignments yield the same flow, and their number is its \emph{multiplicity}. In words, a voter compares ballots on the most likely flows under which her ballot matters. We write $\operatorname{BR}_i(q)$ for the resulting best response.

\paragraph{Synchronous update.}
All voters read the same poll and revise at the same time, so the next profile is $F(S)_i=\operatorname{BR}_i(n-\mathbf e_{s_i}).$
For a fixed $p>0$, let $F_p$ denote the analogous update with exact expected payoffs. The next lemma shows that, under strict utilities, both updates are well defined and coincide for small $p$, without any further assumption on utilities.

\begin{lemma}[Unique limiting responses] \label{lem:small-p}
    Under strict utilities and uniform tremble destinations, every private vector has a unique limiting best response. For each fixed finite election, $F_p=F$ on all profiles for all sufficiently small positive $p$.
\end{lemma}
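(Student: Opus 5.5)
The plan is to reduce everything to one claim: for a fixed private vector $q$ and two distinct ballots $c\neq d$, the payoff difference $D_{cd}(p)=U_i^p(c\mid q)-U_i^p(d\mid q)$ is \emph{not} the zero polynomial in $p$. Given that claim, the rest is standard. Each nonzero $D_{cd}$ has a first nonzero coefficient, so there is $p_{cd}>0$ such that $D_{cd}$ has constant strict sign on $(0,p_{cd})$. Put $p_0(q)=\min_{c\neq d}p_{cd}$. For every $p\in(0,p_0(q))$, the exact payoffs $U_i^p(\cdot\mid q)$ are pairwise distinct and induce one fixed strict order on $M$. That order is transitive because it is a genuine order of real numbers at each such $p$. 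Its maximum is the unique limiting best response $\operatorname{BR}_i(q)$, and it is also the exact argmax for all $p<p_0(q)$. A fixed finite election has finitely many profiles, voters and private vectors, so taking the minimum of $p_0(q)$ over all pairs (voter, private vector) that can occur gives a threshold below which $F_p=F$ on every profile.

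To prove the claim I would not work with low-order coefficients directly, since which flows matter depends delicately on $q$. Instead I would evaluate $D_{cd}$ at the single point $p^*=(m-1)/m$: a polynomial that is identically zero must vanish there. At $p^*$ each other ballot independently ends up uniformly distributed on $M$, whatever its starting candidate. Hence $\widetilde q$ has a law $X$ that does not depend on $q$ and is invariant under permuting candidates, in particular under the transposition $\sigma$ of $c$ and $d$.

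Write $W=W(X+\mathbf e_c)$. By exchangeability, $\mathbb E[u_i(W(X+\mathbf e_d))]=\mathbb E[u_i(\sigma W)]$. A direct computation with the uniform-lottery convention gives
\[
u_i(W)-u_i(\sigma W)=\bigl(u_i(c)-u_i(d)\bigr)\,h(X+\mathbf e_c),\qquad h(Y)=\frac{\mathbf 1[c\in W(Y)]-\mathbf 1[d\in W(Y)]}{|W(Y)|}.
\]
Since utilities are strict, $u_i(c)\neq u_i(d)$, so it suffices to show $\mathbb E[h(X+\mathbf e_c)]>0$. Symmetry also gives $\mathbb E[h(X+\mathbf e_d)]=-\mathbb E[h(X+\mathbf e_c)]$. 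It is therefore enough to show $\mathbb E[h(X+\mathbf e_c)]>\mathbb E[h(X+\mathbf e_d)]$.

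The main step, and the one needing the most care, is a pointwise monotonicity statement. Moving one vote from $d$ to $c$ never decreases $h$. I would check this by cases on whether $c$ and/or $d$ lie in the winner set before and after the move, tracking how $|W|$ changes. The possible changes are that $d$ can only leave $W$, $c$ can only enter it, and other candidates can only leave when $c$ overtakes them.

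For strictness I need an $X$ with positive probability where the inequality is strict. If the other ballots can be split so that $X_c=X_d$ is at least every other count, take such an $X$: then $h(X+\mathbf e_c)=1$ and $h(X+\mathbf e_d)=-1$. Otherwise, for example when $m=2$ and $N-1$ is odd, take $X_c=X_d+1$ with all other counts zero: then $h(X+\mathbf e_c)=1$ and $h(X+\mathbf e_d)=0$. Every such $X$ has positive probability at $p^*$, because each ballot's law there has full support. This yields $D_{cd}(p^*)\neq0$ and completes the argument.
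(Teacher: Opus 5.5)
Your proposal is correct and is essentially the paper's proof. Both arguments evaluate the payoff-difference polynomial at $p_\star=(m-1)/m$, where exchangeability reduces it to $(u_i(c)-u_i(d))$ times a positive quantity; your $\mathbb E[h(X+\mathbf e_c)]$ is exactly the paper's $\alpha-\beta$. That positivity comes from monotonicity under moving the focal ballot from $d$ to $c$ plus a near-tie witness, and both then take the minimum of finitely many small-$p$ thresholds.

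One small fix to your informal list of winner-set changes: other candidates can also \emph{enter} $W$ when $d$ was the unique leader and drops by one. In that case $h$ already equals its minimum $-1$, so your monotonicity claim itself is unaffected.
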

\begin{proof}
At $p_\star=(m-1)/m$, every other ballot is uniform on $M$, regardless of its current value. By symmetry, casting $c$ gives $c$ an expected lottery share $\alpha$ (the expectation of $\mathbf 1\{c\in W\}/|W|$) and every other candidate the same share $\beta$. Moreover, $\alpha>\beta$: moving the focal ballot from $c'$ to $c$ never decreases $c$'s share, and increases it strictly at terminal counts $(k,k,0,\ldots,0)$ or $(k,k+1,0,\ldots,0)$ in $(c,c',\ldots)$ coordinates, according as $|q|=2k$ or $2k+1$. Such counts have positive probability. Hence
\[
U_i^{p_\star}(c\mid q)-U_i^{p_\star}(c'\mid q)
=(\alpha-\beta)\bigl(u_i(c)-u_i(c')\bigr)\ne0.
\]
No pairwise polynomial is identically zero. Its sign is constant near zero, so the finitely many ballots have a strict ordering there. Taking the minimum of the finitely many bounds over voters and profiles proves the second claim.
\end{proof}

Thus $F$ is a deterministic map on the finite set of complete profiles, and every trajectory is eventually periodic. A lowest-order coefficient can vanish for special utilities; then the first later nonzero coefficient decides the comparison, so no generic-utility assumption is needed. The bound on $p$ can depend on the entire election and on $N$; we do not claim a uniform bound as electorates grow.

\paragraph{Convergence and cycles.}
Initialization is \emph{truthful} if every voter supports her favorite candidate. A profile is \emph{Duvergerian} if at most two candidates have positive support, and \emph{stable} if $F(S)=S.$
A stable profile is thus precisely a pure trembling-hand equilibrium in the sense of Obraztsova et al.\ \cite{obraztsova2016trembling}.
The dynamics \emph{converge} if they reach a stable profile. 
Otherwise, the trajectory enters a \emph{recurrent orbit} after finitely many rounds, its \emph{preperiod}. The \emph{period} is the smallest positive return time of the complete profile on this orbit, and an \emph{active} voter changes ballot somewhere on it. The period of the polls is the smallest positive $L$ for which $n^{t+L}=n^t$ at every recurrent phase; it divides the period of the profile.

With $m=2$, truthful voting is trivially stable, so the dynamics converge immediately; we assume $m\geq3$ from now on. We use $c,c'\in M$ for arbitrary candidates and $A,B,C$ for named candidates. For a single voter's ranking with three candidates, $X\succ Y\succ Z$ denotes her favorite, middle, and least-preferred candidates.

\section{One-Step Convergence to Duverger}\label{sec:fixation}

That such dynamics often converge, and towards Duverger, is perhaps not surprising. A voter compares ballots on the most likely configurations in which her ballot matters. If her favorite trails far behind, many trembles are needed before a ballot for that candidate could make a difference, whereas a few trembles already make the race between the two leaders close. Her decisive comparison should therefore involve only the leaders, and she should vote for the one she prefers. 
Two things could still go wrong though. First, the revised profile could trigger further revisions, as in the introductory story, where a narrowly trailing candidate gains and loses supporters at once; a large enough gap rules this out. Second, the leaders themselves could lose support; truthful initialization prevents this, since their supporters already vote for them and keep doing so. We formalize this argument in the remainder of the section.

\subsection{Arbitrary Number of Candidates} \label{sec:general-m}

We first consider an election with any number of candidates and a general tremble belief, and derive a simple sufficient condition for one-step convergence. 
In this subsection only, voters may believe that trembles are biased towards some candidates. A ballot for $c$ trembles to $c'\ne c$ with probability $p\lambda_{cc'}$, where the shares $\lambda_{cc'}>0$ sum to one over $c'\ne c$ and do not depend on $p$. Trembles remain independent, and uniform trembles are the case $\lambda_{cc'}=1/(m-1)$; expected payoffs are defined as before.
The following lemma then shows that a voter facing two leaders well separated from all other candidates has a unique best response---her preferred leader---independent of her tremble belief.

\begin{lemma}[Choice between separated leaders]
\label{lem:general-two-candidates}
If $A,B$ have the two highest private counts and
$\min\{q_A,q_B\}\geq q_c+3$, for every $c\notin\{A,B\},$
then the unique limiting response is the one of $A,B$ the voter prefers, for every choice of positive shares $\lambda$.
\end{lemma}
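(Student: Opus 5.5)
The plan is to fix a voter with private counts $q$ satisfying the hypothesis and assume without loss of generality that she prefers $A$, i.e.\ $u_i(A)>u_i(B)$. I will show that ballot $A$ beats every other ballot $x$ for all sufficiently small $p$. As in Lemma~\ref{lem:small-p}, each difference $U_i^p(A\mid q)-U_i^p(x\mid q)$ is a polynomial in $p$. A flow with $k$ trembles contributes at order $p^k$ times a positive constant. That constant is a multiplicity times a product of the shares $\lambda_{cc'}$, and it is positive whatever the shares are. So it suffices to find the lowest order $r$ at which some terminal vector makes the ballot matter, and to show that at that order every contributing flow favours $A$. Label the leaders so that $q_B\ge q_A$, and write $d=q_B-q_A\ge0$. (If instead $q_A>q_B$, the same argument works with the roles of the two leaders swapped.)

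\emph{Step 1: the leader race.} Ballot $A$ changes the winner set relative to other ballots exactly when the terminal vector $v$ satisfies $v_A+1\ge v_B$, or sits at a nearby tie, while no third candidate reaches the top. Trembles $B\mapsto A$ lower $v_B-v_A$ by $2$, and every other tremble lowers it by at most $1$. Hence the minimal number of trembles at which the leader race becomes pivotal is $r=\lceil (d-1)/2\rceil$ for $d\ge1$, and $r=0$ for $d\le1$. At least one flow of this size exists, namely $r$ trembles $B\mapsto A$.

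\emph{Step 2: third candidates need more trembles.} This is the main obstacle. I will show that any terminal $v$ for which some $c\notin\{A,B\}$ lies in $W(v+\mathbf e_x)$, for some ballot $x$, requires at least $r+1$ trembles. For $c$ to enter the winner set we need $v_c+1\ge v_B$ and $v_c+1\ge v_A$. Initially $q_B-q_c-1\ge d+2$ and $q_A-q_c-1\ge2$. The two gaps shrink as follows:
\begin{itemize}
\item a tremble $B\mapsto c$ shrinks the $B$-gap by $2$ and leaves the $A$-gap unchanged, up to $\pm1$ from the $A$ side;
\item a tremble $A\mapsto c$ shrinks the $A$-gap by $2$ and the $B$-gap by $1$;
\item every other tremble shrinks each gap by at most $1$.
\end{itemize}
A short case check on these contributions should give the lower bound $1+\lceil (d+1)/2\rceil>\lceil (d-1)/2\rceil=r$. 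The bound comes from using one $A\mapsto c$ tremble to close the $A$-gap and then $B\mapsto c$ trembles for the rest. The hypothesis $\min\{q_A,q_B\}\ge q_c+3$ is used exactly here: with a gap of $2$, a flow of size $r$ could already bring $c$ into play. I would do this count carefully, including the uniform-lottery ties of $c$ with a leader, since those count as ``$c\in W$''.

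\emph{Step 3: sign at order $r$.} At order $p^r$, only terminal vectors whose winner set lies in $\{A,B\}$ contribute. On each such vector, ballot $A$ yields a lottery at least as favourable to $A$ as any other ballot, because a ballot for $B$ or for a third candidate never raises $A$'s share. Hence, under $u_i(A)>u_i(B)$, every contribution at this order is nonnegative. It is strictly positive on the flow with $r$ trembles $B\mapsto A$: there ballot $A$ creates a tie or a win for $A$, whereas a $B$ ballot or a third-candidate ballot leaves $B$ winning or strengthens $B$. The cases $d\le1$ are checked directly at $r=0$. So the order-$r$ coefficient of $U_i^p(A)-U_i^p(x)$ is positive for every $x\ne A$, and lower orders vanish, so $A$ is the unique limiting response. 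Nothing in this argument depends on the particular positive shares $\lambda$.
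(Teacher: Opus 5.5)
Your strategy is the paper's: keep lower candidates out of every winner set at low tremble orders, use that the preferred leader's ballot is then weakly best, and exhibit one strictly favorable flow. In the case you actually treat, where the preferred leader $A$ is weakly behind, Steps~1 and~3 are right.

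The gap is the other case. Once you have fixed $u_i(A)>u_i(B)$, you cannot also assume $q_B\ge q_A$. Swapping the roles of $A$ and $B$ swaps the preference as well, so it only returns you to the case already done. When the preferred leader is strictly ahead, $e=q_A-q_B\ge1$, the argument changes:
the strict flows are now $A\mapsto B$ trembles;
the comparison of $A$ with a lower candidate $c$ first becomes pivotal only at order $\lceil e/2\rceil$, one order later than $A$ versus $B$ when $e$ is odd;
$c$ must be kept out of every winner set through that order.
The margin $q_A-q_c-1\ge e+2$ needs $\lceil e/2\rceil+1$ trembles to close, so this works, but with no slack. This is exactly where the gap of three is indispensable. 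At private counts $(3,2,0)$ in $(A,B,c)$ coordinates, which satisfy a gap of two, take a voter ranking $c\succ A\succ B$ with normalized middle utility $\rho<2/7$. Under uniform trembles her first $A$-minus-$c$ coefficient is $(7\rho-2)/4<0$, because the single tremble $A\mapsto c$ lets ballot $c$ force a three-way tie.

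Step~2 is also miscounted. A tremble $B\mapsto c$ lowers the $A$-gap by one, not zero, so your bound $1+\lceil(d+1)/2\rceil$ is false. With $d=2$ and $q_A=q_c+3$, two $B\mapsto c$ trembles already let ballot $c$ create a three-way tie. The conclusion you need survives, since the $B$-gap $q_B-q_c-1\ge d+2$ alone takes $\lceil(d+2)/2\rceil\ge r+1$ trembles. But in your case a gap of two would already suffice, so the hypothesis is not used where you say it is.

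The cleanest repair is the paper's. Label the leaders by counts, $q_A\ge q_B$, with $d=q_A-q_B$ and $r=\lceil d/2\rceil$. After at most $r$ trembles, $v_A-v_c\ge d+3-2r\ge2$, so no lower candidate enters any winner set, and the preferred leader is weakly best on every configuration of order at most $r$. The flow $r(A\mapsto B)$ if she prefers $A$, or $\lfloor d/2\rfloor(A\mapsto B)$ if she prefers $B$, is strictly favorable. The earliest order carrying a strict contribution then gives the first nonzero coefficient. This treats both preferences at once and never requires the exact pivotal order of each comparison.
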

\begin{proof}
Relabel so that $q_A\geq q_B$ and put $d=q_A-q_B$, $r=\lceil d/2\rceil$. 
In any configuration where at most $r$ other ballots tremble, the resulting private vector $v$ satisfies
\[
v_A-v_c\geq d+3-2r\geq2
\qquad(c\notin\{A,B\}).
\]
Even after adding any contemplated ballot, no lower candidate can win or tie for the lead. On every such terminal assignment, voting for the preferred member of $\{A,B\}$ is therefore weakly better than any other ballot: it can only shift the winner lottery toward that candidate.

There is a strict comparison at or before order $r$. If the voter prefers $A$, use $r$ trembles $A\mapsto B$, leaving private gap $0$ when $d$ is even and $-1$ when $d$ is odd. Ballot $A$ then strictly beats every alternative. If she prefers $B$, use $\lfloor d/2\rfloor$ trembles $A\mapsto B$, leaving gap $0$ or $1$, and ballot $B$ strictly beats every alternative. These flows are feasible because $q_A\geq d+3$. Every assignment with $\eta$ trembles has probability $p^\eta(1-p)^{N-1-\eta}$ times a positive constant depending on the assignment. Take the earliest order with a strict contribution to the required margin. It has $\eta\leq r$, all contributions at that order are nonnegative, and all earlier ones vanish. Their positive sum is therefore the first nonzero coefficient.
\end{proof}

The lemma concerns a single voter's private counts, while all voters respond to the same public poll. Since removing one's own ballot lowers a gap by at most one, a public gap of four gives every voter a private gap of three. Combined with truthful initialization, this yields convergence in a single update.

\begin{theorem}[One-step convergence] \label{thm:general-gap-fixation}
    Let $S^0$ be truthful and let $A,B$ be the two leaders of the initial poll.
    If $   \min\{n_A^0,n_B^0\}\geq n_c^0+4$, for every $c\notin\{A,B\}$,
    then one update makes every voter support her preferred member of $\{A,B\}$, and the resulting profile is stable. This holds for every choice of positive shares $\lambda$.
\end{theorem}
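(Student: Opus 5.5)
The plan is to apply Lemma~\ref{lem:general-two-candidates} twice: once to the initial truthful poll, and once to the poll after the update.

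\textbf{First update.} Fix a voter $i$ with truthful ballot $s_i^0$, and let $q=n^0-\mathbf e_{s_i^0}$. Removing one ballot lowers a single coordinate by one. Hence for every $c\notin\{A,B\}$ we get $\min\{q_A,q_B\}\ge \min\{n_A^0,n_B^0\}-1\ge n_c^0+3\ge q_c+3$. In particular, $A$ and $B$ still have the two highest private counts. Lemma~\ref{lem:general-two-candidates} then says that voter $i$ moves to her preferred member of $\{A,B\}$, for every choice of positive shares $\lambda$. This settles the first claim. Write $S^1$ for the resulting profile and $n^1$ for its counts.

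\textbf{Stability of $S^1$.} In $S^1$ only $A$ and $B$ receive votes, so $n_c^1=0$ for all $c\notin\{A,B\}$. It therefore suffices to show $\min\{n_A^1,n_B^1\}\ge 4$. Then every voter's private counts satisfy $\min\{q_A,q_B\}\ge 3= q_c+3$, and the lemma again returns her preferred leader. That is exactly her current ballot, so $F(S^1)=S^1$.

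\textbf{Main step: bounding $n_A^1$ and $n_B^1$ from below.} This is the only place that needs care, and it is where truthful initialization is used. Every voter whose favorite is $A$ voted $A$ in $S^0$. Her preferred member of $\{A,B\}$ is $A$, so she stays with $A$. Hence $n_A^1\ge n_A^0$, and likewise $n_B^1\ge n_B^0$. Since $n_c^0\ge 0$, the hypothesis gives $\min\{n_A^0,n_B^0\}\ge 4$, and so $\min\{n_A^1,n_B^1\}\ge4$. One small check remains: the lemma needs $A$ and $B$ to be the two highest private counts, and this is immediate because every other count is $0$.

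Since every use of the lemma holds for arbitrary positive $\lambda$, the whole conclusion holds for every tremble belief.
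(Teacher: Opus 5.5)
Your proposal is correct and follows the paper's proof essentially step for step. Lemma~\ref{lem:general-two-candidates} fixes every first response, because removing one's own ballot lowers the separation by at most one. Truthfulness then keeps all initial $A$- and $B$-supporters in place, so both leaders retain at least four votes while every other candidate drops to zero, and applying the lemma again at the successor makes every voter stay.
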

\begin{proof}
Removing any ballot leaves both leaders at least three above every lower candidate, so Lemma~\ref{lem:general-two-candidates} determines every first response. Truthful $A$- and $B$-supporters keep their ballots. Both candidates therefore retain at least their initial support, which is at least four, while every other candidate has zero support. At the successor, removing any ballot again leaves both supported candidates at least three above all others, and the lemma makes every voter stay.
\end{proof}

After a single revision, the profile thus coincides with sincere voting in a runoff between the two poll leaders, although no candidate is formally eliminated. Note that the shares $\lambda$ must not vanish with $p$: otherwise, trembles towards some candidates become less likely than others, which changes which configurations are decisive.


\subsection{Sharper Conditions for Three Candidates}\label{sec:three-candidates}

The previous result gives a sufficient condition in a very general setting. We next show that this condition can be relaxed considerably assuming three candidates and uniform trembles. We first state a simple but useful lemma.

\begin{lemma}[Least-preferred ballot]\label{lem:top-bottom-dominance}
    No voter's limiting best response is her least-preferred candidate. With three candidates, each voter thus chooses between her favorite and middle candidates.
\end{lemma}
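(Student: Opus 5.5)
The plan is to exhibit, for a voter $i$ with least-preferred candidate $z$, a single candidate $c\neq z$ whose ballot strictly beats $z$ in the limit $p\to0$. I take $c=x$, her favorite; with three candidates, any pair of distinct candidates would do, and $x$ is the simplest uniform choice. For a fixed private count $q$, I compare the two ballots pointwise on every terminal private vector $v$. The comparison is $u_i(W(v+\mathbf e_x))$ against $u_i(W(v+\mathbf e_z))$.

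\textbf{Step 1: weak pointwise dominance.} Adding one ballot to $x$ instead of $z$ changes the winner set only through $x$ and $z$. The target claim is
\[
u_i(W(v+\mathbf e_x))\geq u_i(W(v+\mathbf e_z))\qquad\text{for every } v.
\]
Let $h=\max_c v_c$. There are three cases.
\begin{itemize}
\item If neither $v_x$ nor $v_z$ equals $h$, both winner sets are the set of other candidates at $h$ (or possibly include $x$ or $z$ if at $h-1$ \ldots). More carefully, moving a ballot from $z$ to $x$ can only add $x$ to the winner set and/or remove $z$.
\item Removing $z$ from a lottery raises its value, since $z$ is the strict minimum of $u_i$.
\item Adding $x$, the strict maximum, also raises it. Likewise, a change from a set $W$ to the singleton $\{x\}$ raises the value.
\end{itemize}
So I will formalize: $W(v+\mathbf e_x)$ is obtained from $W(v+\mathbf e_z)$ by operations of the form ``drop $z$'', ``add $x$'', or ``replace by $\{x\}$''. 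Each of these weakly increases the uniform-lottery utility, because $u_i(x)>u_i(c)>u_i(z)$ for all other $c$.

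\textbf{Step 2: a strict configuration of positive probability.} It suffices to show that some terminal vector with positive probability gives strict inequality. Then the expected-payoff difference is a polynomial in $p$ whose coefficients are sums of nonnegative terms. Its lowest nonvanishing order therefore has a positive coefficient, exactly as in the argument for Lemma~\ref{lem:general-two-candidates}. For a strict instance, choose a flow that moves all other ballots so that $v=(k,k,0,\ldots)$ in $(x,z,\ldots)$ coordinates, or $(k,k+1,0,\ldots)$ when $N-1$ is odd, as in the proof of Lemma~\ref{lem:small-p}. This yields winner $\{x\}$ versus $\{z\}$, or $\{x,z\}$ versus $\{z\}$, and is strict in either case. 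If $N-1=0$, the voter alone decides, with winner $\{x\}$ versus $\{z\}$. Such a flow has finitely many trembles and so positive coefficient at its order. Hence, for small $p$, $U^p_i(x\mid q)>U^p_i(z\mid q)$, and $z$ is never the unique best response guaranteed by Lemma~\ref{lem:small-p}. The three-candidate corollary is then immediate.

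\textbf{Main obstacle.} The delicate point is Step 1 with ties. I need to verify carefully that, when $v_x$ or $v_z$ sits at $h$ or $h-1$, the set change is always of the monotone type. The cases are $v_z=h$, $v_x=h-1$ (winner $\{z\}\cup\ldots$ becomes $\{x,\ldots\}$ tie sets), and so on. I will enumerate them by the values of $v_x-h$ and $v_z-h$ in $\{0,-1,\le-2\}$. Within each case, I will check that the lottery changes only by removing $z$ or adding $x$; such changes increase the average since $x$ is above every average and $z$ below.
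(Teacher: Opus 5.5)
Your approach is the paper's: compare the favorite $x$ with the least-preferred $z$ at every terminal private vector, prove weak dominance, and get strictness from $v=(k,k,0,\ldots,0)$ or $(k,k+1,0,\ldots,0)$ in $(x,z,\ldots)$ coordinates.

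Step~1 as you formulate it is false in one case, however, and the enumeration you plan would fail there. Moving the focal ballot from $z$ to $x$ can add a \emph{third} candidate to the winner set. Suppose $v_z=h$, some $y\notin\{x,z\}$ also has $v_y=h$, and $v_x\leq h-1$. Then $W(v+\mathbf e_z)=\{z\}$, whereas $W(v+\mathbf e_x)$ contains $y$. For instance, $v=(0,2,2)$ in $(x,y,z)$ coordinates gives $\{y,z\}$ against $\{z\}$. The inequality still holds there, but only because $\{z\}$ is the worst possible lottery. The change $\{z\}\to\{y,z\}$ is not among your operations (drop $z$, add $x$, replace by $\{x\}$).

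The paper handles this with a trichotomy in place of your case enumeration. Either $W(v+\mathbf e_x)=\{x\}$, or $W(v+\mathbf e_z)=\{z\}$, or $W(v+\mathbf e_x)\setminus\{x\}=W(v+\mathbf e_z)\setminus\{z\}$. To see this, note that $z\in W(v+\mathbf e_x)$ forces $W(v+\mathbf e_z)=\{z\}$, and $x\in W(v+\mathbf e_z)$ forces $W(v+\mathbf e_x)=\{x\}$. In the remaining case, both sets minus the focal candidate consist exactly of the third candidates at $\max_{c\notin\{x,z\}}v_c$. The first two cases are immediate, and the third is your drop-$z$/add-$x$ argument. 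Once you add the case $W(v+\mathbf e_z)=\{z\}$, your proof is complete.

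Two smaller remarks. First, your aside that with three candidates ``any pair would do'' is wrong if it means the middle candidate $y$ could replace $x$. When $q_x=q_y$ is the maximum and $q_z\leq q_x-2$, ballot $y$ gives $\{y\}$ while ballot $z$ gives $\{x,y\}$, so $z$ even beats $y$ at order zero. Since you only use $x$, nothing breaks. Second, Step~2 needs no lowest-order argument. Pointwise weak dominance plus one strict vector of positive probability make the difference strictly positive for every $p\in(0,1)$, as the paper notes. Your phrase about a polynomial whose coefficients are sums of nonnegative terms is accurate only for the expansion grouped by number of trembles, not for ordinary monomial coefficients.
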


\begin{proof}

Let $X$ and $Z$ be the voter's favorite and least-preferred candidates. Fix a terminal private vector $v$, and compare $W_X=W(v+\mathbf e_X)$ with $W_Z=W(v+\mathbf e_Z)$. 
Since the two ballots change only the counts of $X$ and $Z$, either $W_X=\{X\}$, or $W_Z=\{Z\}$, or $W_X\setminus\{X\}=W_Z\setminus\{Z\}$. In the first two cases, $u_i(W_X)\geq u_i(W_Z)$ is immediate. In the third, $W_X$ is obtained from $W_Z$ by possibly removing $Z$ and possibly adding $X$; neither operation lowers the value of the uniform lottery. Hence ballot $X$ is weakly better at every terminal vector. It is strictly better at $v=(k,k,0,\ldots,0)$ or $v=(k,k+1,0,\ldots,0)$ in $(X,Z,\ldots)$ coordinates, according as $|q|=2k$ or $2k+1$. These vectors have positive probability, so the expected comparison is strict for every $p>0$.

\end{proof}
This is the trembling-hand counterpart of the classical fact that voting for one's worst alternative is weakly dominated under plurality~\cite{dhillon_when_2004}.
With three candidates, the gap that matters is the one between the runner-up and the trailing candidate. The next theorem shows that a gap of two always suffices, and a gap of one suffices when the leaders' gap is even. Parity matters because a tremble from one leader to the other changes their gap by two, so whether they can be brought exactly level depends on its parity.

\begin{theorem}[Three-candidate refinement]\label{thm:condensed-one-step}
    Let $S^0$ be truthful, with $n_A^0\geq n_B^0>n_C^0\geq1$. If $n_B^0-n_C^0\geq2$ or
    $n_A^0-n_B^0$ is even,
    then initial $A$- and $B$-voters stay, initial $C$-voters choose their preferred member of $\{A,B\}$, and $S^1$ is stable.
\end{theorem}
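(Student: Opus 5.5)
By Lemma~\ref{lem:top-bottom-dominance}, each voter only compares her favorite $X$ with her middle candidate $Y$. So the whole proof reduces to finding the sign of the lowest-order nonzero coefficient of $U^p(X\mid q)-U^p(Y\mid q)$ for each private vector that occurs. I would do this voter type by voter type: initial $A$-, $B$- and $C$-supporters at $S^0$, then all voters at $S^1$. Throughout, write $d=n^0_A-n^0_B$ and $g=n^0_B-n^0_C$.

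\textbf{Decisive terminal vectors.} The ballots $X$ and $Y$ give different winner sets only at terminal vectors $v$ where $v_X$ and $v_Y$ are within one of each other, or where one of $X,Y$ is within one of the maximal count. For two named candidates, I would record the minimal number of trembles needed to reach such a vector, together with the sign of the contribution there. Moving one ballot between two candidates changes their gap by two; moving a ballot from a third candidate to one of them changes the gap by one.

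\textbf{Easy case: $A$-versus-$B$ comparisons.} For a voter whose choice is $\{A,B\}$, the cheapest decisive flows are $\lceil d'/2\rceil$ or $\lfloor d'/2\rfloor$ trembles $A\mapsto B$, where $d'$ is her private leader gap, exactly as in the proof of Lemma~\ref{lem:general-two-candidates}. Every flow of that order that also involves $C$ contributes weakly in her favor, since $C$ cannot yet reach the top. This gives the preferred leader, and in particular $A$- and $B$-voters whose middle is the other leader stay.

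\textbf{Hard case: comparisons involving $C$.} These are the $C$-voters, $A$-voters with middle $C$, and $B$-voters with middle $C$.
\begin{itemize}
\item For a $C$-voter, $q_C=n_C^0-1$, so the private gaps to $C$ grow by one. Making $C$ competitive then costs strictly more trembles than equalizing $A$ and $B$, as long as $g\ge2$, or $g=1$ with $d$ even. Here $d/2$ trembles $A\mapsto B$ produce an exact tie, whose comparison is strict and of the right sign.
\item For an $A$-voter with middle $C$, removing her own ballot changes $d$ to $d-1$. The comparison is $A$ versus $C$. Her ballot matters at the $A$-$B$ near-tie, where $A$ strictly beats $C$, and possibly at configurations where $C$ climbs to the top. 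I would show that the latter need at least as many trembles, and that at equal order all contributions still favor $A$. The reason is that a ballot for $A$ helps whenever $A$ is in contention, which the cheap configurations guarantee.
\item $B$-voters with middle $C$ are handled symmetrically, using $q_B=n_B^0-1\ge q_C$.
\end{itemize}
I expect this to be the main obstacle. With gap one and $d$ even, the $C$-involving and $A$-$B$ flows may appear at the same order. I would therefore enumerate the flows at order $\lceil d'/2\rceil$ and $\lceil d'/2\rceil+1$ explicitly and check signs. This is where the parity hypothesis is used: it excludes the case where $C$ reaches a tie one order before the leaders do.

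\textbf{Stability of $S^1$.} In $S^1$ the count of $C$ is $0$, and each of $A,B$ has at least its initial count, which is at least $n^0_C+1\ge2$. For the parity case I would also track that the new gap $n^1_A-n^1_B$ may change parity. For every voter, the private gap from either leader to $C$ is then at least one. I would redo the order count: reaching a near-top position for $C$ from zero needs at least two trembles into $C$ plus equalization, which is strictly costlier than equalizing $A$ and $B$. If this order argument fails for small counts, I would fall back to applying Lemma~\ref{lem:general-two-candidates} directly whenever $\min\{n^1_A,n^1_B\}\ge4$, and check the remaining small cases (leader supports $2$ or $3$) by hand. The conclusion would be that every voter keeps her preferred leader, so $F(S^1)=S^1$.
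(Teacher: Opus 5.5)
Your skeleton matches the paper's: reduce to favorite--middle comparisons via Lemma~\ref{lem:top-bottom-dominance}, read off first nonzero flow coefficients, then use Lemma~\ref{lem:general-two-candidates} for stability. However, two claims fail, and they fail exactly at the delicate boundaries.

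\textbf{Truthful $B\succ C\succ A$ voters.} These are \emph{not} symmetric to $A\succ C\succ B$ voters. For an $A$-voter, any vector where $C$ is near the top also has $A$ there, so all first-order contributions favor $A$. For a $B$-voter at $q^B=(r+d,r-1,r-s)$ the situation differs. Flows carrying $C$ up to $A$ make ballot $C$ create an $A$--$C$ tie, while ballot $B$ leaves $A$ alone. This contributes $(u(A)-u(C))/2<0$ at the \emph{first} decisive order in two cases:
\begin{itemize}
\item the flow $k(A\mapsto C)$ when $s=1$, $d=2k$, with the same multiplicity $\binom{r+2k}{k}$ as $k(A\mapsto B)$;
\item the flow $(k+1)(A\mapsto C)$ when $s=2$, $d=2k+1$, $k\geq1$.
\end{itemize}
You flag only the first case. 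The response is still correct, but only after summing, e.g.\ $2^{-k-2}[\binom{a}{k+1}(u(B)-u(C))+(2r+k-1)\binom{a}{k}(u(B)-u(A))]$ with $a=r+2k+1$. So ``at equal order all contributions still favor'' the favorite is false here.

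Relatedly, your ``easy case'' claim that $C$ cannot reach the top in $A$-versus-$B$ comparisons is precisely what breaks without the hypothesis. In the paper's $(4,3,2)$ example, $B\succ A\succ C$ voters at $(4,2,2)$ face a three-way tie after one $A\mapsto C$ tremble, and they defect. So the hypothesis must be invoked there too, not only in the $C$-involving comparisons.

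\textbf{Stability of $S^1$.} Your claim that $C$ needs strictly more trembles than equalizing the leaders is false when one supported count is small.
\begin{itemize}
\item At $(2,3,0)$, an $A$-supporter sees $(1,3,0)$. A single $B\mapsto C$ tremble lets ballot $C$ create a $B$--$C$ tie at the first decisive order. The contribution is favorable only because she is a truthful $A$-voter, giving $3[2u(A)-u(B)-u(C)]/4$.
\item At $(2k+3,2,0)$ with $k\geq1$, a $B$-supporter sees $(2k+3,1,0)$. This profile arises when $r=2$, hence $s=1$ and $d=2k$, and the $C$-voter joins $A$. Here $k+1$ trembles $A\mapsto B$ equalize the leaders, but $k+1$ trembles $A\mapsto C$ equally let ballot $C$ tie $A$. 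This again contributes $(u(A)-u(C))/2<0$ for a $B\succ C\succ A$ voter.
\end{itemize}
Your fallback, checking supports $2$ or $3$ ``by hand'', is not a finite check. That range contains infinite families such as $(a,3,0)$ for all $a$ and $(2k+3,2,0)$ for all $k$.

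The paper closes this in two steps. First, an $A$-supporter with $q_A=2$ and $q_B\geq3$ is decided within $\lfloor(q_B-2)/2\rfloor$ trembles, during which $B$ stays two above $C$. Second, it computes explicit $k$-dependent coefficients at $(2,3,0)$ and $(2k+3,2,0)$. The latter is a positive multiple of $(k+1)(u(B)-u(A))+u(B)-u(C)$, which is positive because every $B$-supporter there is a truthful $B$-voter.
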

\begin{proof}
Write the poll as $(r+d,r,r-s)$, where $r\geq2$, $d\geq0$, and $1\leq s\leq r-1$. The three truthful voter blocks see
\begin{gather*}
    q^{A}=(r+d-1,r,r-s), \quad q^{B}=(r+d,r-1,r-s), \\
    \text{and} \quad q^{C}=(r+d,r,r-s-1).
\end{gather*}

\noindent By Lemma~\ref{lem:top-bottom-dominance}, each voter has two candidate ballots. A tremble $A\mapsto B$ reduces the $A$--$B$ gap by two, and $A\mapsto C$ or $C\mapsto B$ by one, so parity determines which flows can first make a ballot decisive. At $d=0$ the claimed responses follow from the tied-top winner sets; for $d>0$, Table~\ref{tab:fixation-orders} gives the order of each required positive margin, and all earlier coefficients vanish.

\begin{table}[t]
\caption{Orders of the first positive margins in the proof of Theorem~\ref{thm:condensed-one-step}, as powers of $p$. Only favorite--middle comparisons are needed, by Lemma~\ref{lem:top-bottom-dominance}.}
\label{tab:fixation-orders}
\centering
\begin{tabular}{cccc}
\toprule
Ranking & Margin & $d=2k>0$ & $d=2k+1,\ s\geq2$\\
\midrule
$A\succ B\succ C$ & $A-B$ & $k-1$ & $k$\\
$A\succ C\succ B$ & $A-C$ & $k$ & $k$\\
$B\succ A\succ C$ & $B-A$ & $k$ & $k+1$\\
$B\succ C\succ A$ & $B-C$ & $k$ & $k+1$\\
$C\succ A\succ B$ & $A-C$ & $k$ & $k+1$\\
$C\succ B\succ A$ & $B-C$ & $k$ & $k$\\
\bottomrule
\end{tabular}
\end{table}

The even-gap boundary $s=1$, with $d=2k>0$, shows why the guarantee depends on parity. A truthful $B\succ C\succ A$ voter sees $q^B=(r+2k,r-1,r-1)$. At order $k$, the only contributing flows are $k(A\mapsto B)$ and $k(A\mapsto C)$, with equal multiplicity and payoff differences $(u_i(B)-u_i(A))/2$ and $(u_i(A)-u_i(C))/2$. Their sum gives 
\[
2^{-k-1}\binom{r+2k}{k}\bigl(u_i(B)-u_i(C)\bigr)>0
\]
as the first nonzero coefficient. 
Thus even when the trailing candidate enters the first decisive order, its contribution preserves the required response. For an odd gap, $s=2$ is the corresponding boundary: the trailing candidate can again enter the first decisive order, so its contribution must be summed. For a truthful $B\succ C\succ A$ voter, put $a=r+2k+1$. The coefficient of $p^{k+1}$ in $U_i^p(B\mid q^B)-U_i^p(C\mid q^B)$ is
\[
\begin{split}
2^{-k-2}\Bigl[\binom{a}{k+1}\bigl(u_i(B)-u_i(C)\bigr)
+(2r+k-1)\binom{a}{k}\bigl(u_i(B)-u_i(A)\bigr)\Bigr],
\end{split}
\]
which is positive since $u_i(B)$ exceeds both $u_i(C)$ and $u_i(A)$.
The remaining rows follow by the same gap reduction and flow counts; Appendix~\ref{app:one-step-blocks} gives the flow lists and the remaining coefficient and sign calculations.

It remains to show that $S^1$ is stable. If $x$ of the initial $C$-voters move to $A$, the successor counts are $(r+d+x,2r-s-x,0)$, and every voter prefers the candidate she supports to the other supported candidate. For $r\geq3$, both supported counts are at least three. If both private counts are at least three, Lemma~\ref{lem:general-two-candidates} makes the voter stay. The only remaining case, up to exchanging $A,B$, is an $A$-supporter with $q_A=2$ and $q_B\geq3$. With at most $\lfloor(q_B-2)/2\rfloor$ trembles, $B$ remains at least two above $C$, so the unsupported candidate cannot win after any contemplated ballot. Ballot $A$ is weakly best in every such configuration, and moving exactly that many ballots $B\mapsto A$ makes it strictly best. The earliest-order argument of Lemma~\ref{lem:general-two-candidates} therefore makes this voter stay as well.

If $r=2$, then $s=1$, and the hypothesis forces $d=2k$ to be even. The successors are $(2k+2,3,0)$ and $(2k+3,2,0)$. The first is covered above for $k\geq1$. At $(2,3,0)$, both $A$-supporters are truthful $A$-voters; their first $A$-minus-$C$ coefficient is $3[2u_i(A)-u_i(B)-u_i(C)]/4>0$, and their $A$-minus-$B$ coefficient is positive. At $(2k+3,2,0)$, every $B$-supporter is a truthful $B$-voter. Her first $B$-minus-$C$ coefficient is a positive multiple of $(k+1)(u_i(B)-u_i(A))+u_i(B)-u_i(C)>0$, while her $B$-minus-$A$ comparison has the sign of her preference. An $A$-supporter has an even private $A$--$B$ gap $2k$; its first decisive order contains only supported winners and favors $A$. Hence both boundary successors are stable. The full coefficients are given in Appendix~\ref{app:one-step-fixedness}.
\end{proof}

The theorem leaves open tied lower counts and a one-vote gap below the runner-up when the leaders' gap is odd.
The latter can genuinely fail to converge in one step, much as in the story of the introduction. Take four voters with utilities $(1,1/2,0)$, three with $(19/20,1,0)$, and two with $(0,19/20,1)$, in $(A,B,C)$ order. Starting from truthful votes, the dynamics give
\[
(4,3,2)\mapsto(7,2,0)\mapsto(4,5,0).
\]
In the first update, the $C$-voters rally behind $B$, while the $B$-voters, who prefer $B$ only slightly to $A$ and see $C$ close behind, defect to $A$: at their private counts $(4,2,2)$, their margin for $B$ over $A$ is $-19p/40+\mathcal O(p^2)$. $B$ is now carried entirely by former $C$-voters. With $C$ gone, the defectors face $(6,2,0)$, where their margin becomes $3p^2/8+\mathcal O(p^3)>0$, and they return. The final profile is stable, and $B$ wins although her own supporters had abandoned her. Here the dynamics still converge after two updates; Section~\ref{sec:cycles} shows that they need not converge at all.

\subsection{Duverger's Law in Typical Elections}\label{sec:random-elections}

We now argue that these results offer an intuitive explanation for why Duverger's law is commonly observed in practice. Fix $m\geq3$ candidates and a probability distribution over utility vectors, with almost surely strict rankings, and view each observed electorate of $N$ voters as $N$ independent draws from it. Let $\alpha_c$ be the probability that candidate $c$ is a voter's favorite, and assume $\alpha_c>0$ for every $c$. The truthful poll is then multinomial with parameters $N$ and $(\alpha_c)_{c\in M}$. The condition of Theorem~\ref{thm:general-gap-fixation} fails only if the runner-up is within three votes of the third-placed candidate, and such near-ties become rare as the electorate grows.

\begin{corollary}[Random elections]\label{cor:general-m-fixation}
    In this random model, truthful initialization reaches a stable Duvergerian profile after one update with probability at least $1-\mathcal O(N^{-1/2})$. If two candidates have favorite probabilities strictly greater than those of every other candidate, the probability is at least $1-e^{-\Omega(N)}$.
\end{corollary}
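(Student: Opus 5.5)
The plan is to reduce the corollary to Theorem~\ref{thm:general-gap-fixation}. Whenever its gap condition holds at the truthful poll, one update produces a stable profile in which only $A$ and $B$ have positive support, which is Duvergerian. The only subtlety in invoking the theorem is that its two leaders must be well defined. If the top two are tied with each other, the condition still makes sense; we only need that the second-largest count exceeds every lower count by at least four. So the corollary follows once we bound the probability of the failure event
\[
E=\{\text{the second- and third-largest truthful counts differ by at most }3\}.
\]
Since $m$ is fixed, a union bound gives
\[
\Pr(E)\le\sum_{c\ne c'}\Pr\bigl(|n^0_c-n^0_{c'}|\le3\bigr),
\]
where $c,c'$ range over ordered pairs. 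It therefore suffices to bound each pairwise near-tie probability.

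\textbf{General case.} Fix a pair $c\ne c'$ and condition on the number $K$ of voters whose favorite lies in $\{c,c'\}$. Given $K$, the count $n^0_c$ is binomial with parameters $K$ and $\theta=\alpha_c/(\alpha_c+\alpha_{c'})\in(0,1)$. The event $|n_c-n_{c'}|\le3$ says that $n_c$ lies in a window of at most seven integers around $K/2$. A standard local bound gives $\max_j\Pr(\mathrm{Bin}(K,\theta)=j)=\mathcal O(K^{-1/2})$, with the constant depending only on $\theta$, for instance via Stirling's formula or the Berry--Esseen/local CLT. Hence the conditional probability is $\mathcal O(K^{-1/2})$.

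Now $K$ is binomial with parameters $N$ and $\alpha_c+\alpha_{c'}>0$. By Chernoff, $K\ge(\alpha_c+\alpha_{c'})N/2$ except with probability $e^{-\Omega(N)}$. Combining the two bounds gives $\mathcal O(N^{-1/2})$ for each pair, and summing over the $\mathcal O(m^2)$ pairs proves the first claim. The main technical point is this uniform anti-concentration step. I would cite or briefly verify the binomial mode bound $\binom{K}{j}\theta^j(1-\theta)^{K-j}\le C_\theta/\sqrt K$ rather than derive it in full.

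\textbf{Separated case.} Suppose $\alpha_A,\alpha_B>\max_{c\notin\{A,B\}}\alpha_c$, and let $\delta>0$ be the smaller of the two margins. By Hoeffding, each truthful share $n^0_c/N$ lies within $\delta/3$ of $\alpha_c$ except with probability $2e^{-2N\delta^2/9}$. A union bound over the $m$ candidates keeps the total exceptional probability at $e^{-\Omega(N)}$. On the complementary event, $A$ and $B$ are the two leaders, and each exceeds every other candidate by more than $\delta N/3$ votes. This is at least $4$ once $N\ge 12/\delta$. Theorem~\ref{thm:general-gap-fixation} then applies. Small $N$ is absorbed into the constant of the $\Omega(N)$.
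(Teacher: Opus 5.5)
Your proof is correct and follows essentially the paper's route: a union bound over pairwise near-ties, a conditional binomial point-probability bound of order $K^{-1/2}$ combined with concentration of $K$, and Hoeffding with a union bound for the separated case. The differences are minor. You treat every pair with a general parameter $\theta$, whereas the paper splits into $\alpha_c=\alpha_{c'}$ (parameter $1/2$) and $\alpha_c\ne\alpha_{c'}$ (exponentially small by concentration, which also backs its remark that near-ties matter only for equal favorite probabilities); and in the separated case you concentrate each count rather than each difference $n_A^0-n_c^0$.
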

\begin{proof}
If $\alpha_c=\alpha_{c'}$, condition on $K=n_c^0+n_{c'}^0$. Then $n_c^0$ is binomial with parameters $K,1/2$, and $|n_c^0-n_{c'}^0|\leq3$ permits at most four values. Outside the exponentially unlikely event $K<N\alpha_c$, the largest conditional point probability is $\mathcal O((N\alpha_c)^{-1/2})$. Thus the near-tie probability is $\mathcal O(N^{-1/2})$, including the possibility $K=0$.
If $\alpha_c\ne\alpha_{c'}$, concentration about the nonzero mean gap gives an exponentially small probability. A union bound over the fixed number of pairs shows that all counts differ by at least four with probability at least $1-\mathcal O(N^{-1/2})$, and Theorem~\ref{thm:general-gap-fixation} applies on this event. For the stronger bound, let $A,B$ be the two candidates whose favorite probabilities exceed all others. Each difference $n_A^0-n_c^0$ or $n_B^0-n_c^0$, with $c\notin\{A,B\}$, has a positive mean proportional to $N$. Hoeffding's inequality bounds the probability that any one is below four by $e^{-\Omega(N)}$, and a union bound over these pairs proves the claim; no separation between $A$ and $B$ is needed.
\end{proof}

The bound holds for every such distribution; only the constant depends on $m$ and on the distribution, and it grows as the favorite probabilities of the trailing candidates shrink.
Near-ties are exponentially unlikely unless the second-highest favorite probability is shared by several candidates, as when the distribution is symmetric across candidates. Even then, they are rare, so elections rarely fail to settle on two candidates after a single round of revisions.
Near-ties matter mainly when candidates are equally popular, for instance under a distribution symmetric across candidates, with $\alpha_c=1/m$ for all $c$. Even then, they are rare, so elections rarely fail to settle on two candidates after a single round of revisions.

\section{When the Dynamics Do Not Converge} \label{sec:cycles}

The results above show that convergence is the typical outcome, but it is not guaranteed. The following example, inspired by an election of Obraztsova et al.~\cite{obraztsova2016trembling} that admits no trembling-hand equilibrium, provides a simple counter-example.

\begin{example}[A period-two cycle]\label{ex:two-cycle}
    Consider three candidates and four voters: voters $1$ and $2$ rank \cA${}\succ{}$\cB${}\succ{}$\cC, voter $3$ ranks \cB${}\succ{}$\cC${}\succ{}$\cA, and voter $4$ ranks \cC${}\succ{}$\cB${}\succ{}$\cA, with any strict utilities consistent with these rankings. Starting from truthful votes, the ballot profile alternates between
    \[
    (\cA,\cA,\cB,\cC)\;\longleftrightarrow\;(\cA,\cA,\cC,\cB),
    \]
    where the $i$-th entry is voter $i$'s ballot. Voters $3$ and $4$ each try to create a tie between $A$ and their favorite by switching to the other's candidate, and their simultaneous switches cancel out.
\end{example}

This cycle is hardly a failure of coordination. The poll reads $(2,1,1)$ in both phases; only two voters exchange their ballots. Judged by the poll alone, these dynamics have converged. Can the dynamics exhibit genuinely richer behavior, with longer cycles in which the poll itself keeps changing?

\subsection{Three-Cycles: How Many Active Voters?}\label{sec:3-cycles}

We now look for richer cycles, in which the poll itself changes from round to round, and start with simple ones: three candidates and period three. We first show that such cycles require enough \emph{active} voters, that is, voters who change their ballot somewhere on the cycle; all other voters keep the same ballot throughout, and their counts act as a fixed background.

\begin{theorem}[Active-voter lower bound]
\label{thm:no-three-active-period-three}
With three candidates, every period-three cycle of the complete ballot profile has at least four active voters.
\end{theorem}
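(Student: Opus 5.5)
We argue by contradiction: suppose a period-three cycle $S^0\mapsto S^1\mapsto S^2\mapsto S^0$ of the complete profile has at most three active voters. The inactive voters contribute a fixed background count vector $b$. Each active voter, by Lemma~\ref{lem:top-bottom-dominance}, only ever plays her favorite or her middle candidate. So each active voter's ballot sequence over the three phases is a binary sequence that is not constant. A binary sequence of period dividing three that is nonconstant has exactly one phase at one value and two phases at the other, e.g. $(X,X,Y)$ up to rotation.

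The plan is first to exploit this rigidity combinatorially, and then to obtain a contradiction from the deterministic response rule. Step one: the profile has period exactly three, so the active voters' "odd phases" cannot all coincide. If they did, the profile would have period dividing three but be realized by only one switch pattern. That is fine in itself, but we then analyze which phases each active voter deviates in. With at most three active voters, each switching exactly twice per cycle (once out, once back), the poll changes by at most three ballot moves per round.

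Step two is the key observation. An active voter with pattern $(X,X,Y)$ makes the same decision at two consecutive phases: in phase $0$ she sees $S^0$ and outputs $X$ (her ballot at $S^1$), and in phase $1$ she sees $S^1$ and outputs $X$ again. Her private count at those phases is $n^t-\mathbf e_{s_i^t}$. Because the response $\operatorname{BR}_i$ depends only on the private vector, equal private vectors force equal responses. We therefore derive constraints of the form "private vectors at phases where the outputs differ must differ." The private vectors are $b$ plus the other active voters' ballots, so with at most two other active voters, each private vector is determined by at most two binary ballots. We will enumerate how the other active voters' ballots vary across the three phases. For each active voter $i$, the three private vectors $q_i^0,q_i^1,q_i^2$ must produce outputs $s_i^1,s_i^2,s_i^0$, which are a nonconstant pattern. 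Hence among her three private vectors at least two are distinct with different responses, and the two equal-output phases must also be consistent.

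Step three: reduce to a small number of configurations and rule each out using monotonicity properties of the trembling-hand response. With one or two active voters, the private vectors of a voter take at most two values, determined by the single other voter, whose pattern has period three. Matching the periods then forces a contradiction directly: the output sequence would have to be a function of a period-three binary sequence shifted by one, which is possible only in specific alignments that we check fail. With three active voters, we list the rotation patterns and use the fact that the limiting response between favorite $X$ and middle $Y$ is monotone. Adding a ballot for $X$ among others (or removing one for $Y$) should not turn a preference for $X$ into one for $Y$ in the cases where only a single unit moves. We expect this monotonicity claim to be the main obstacle: it is not stated earlier and may fail in general. We would try to prove it only for the specific unit shifts arising here, by comparing lowest-order flow coefficients as in the proof of Theorem~\ref{thm:condensed-one-step}. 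If that fails, we would fall back on a direct case analysis of the private vectors $b+\{0,1,2\}$ ballot patterns, computing first decisive orders.
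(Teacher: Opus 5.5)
Your preliminary reductions are sound. Lemma~\ref{lem:top-bottom-dominance} confines each active voter to her favorite and middle candidate, so she stays in exactly one transition. Since equal private vectors force equal responses, one or two active voters cannot cycle: for two, $s_1^{t+1}$ is a bijective function of $s_2^t$ and vice versa, which forces a period-three binary sequence to be constant or alternating.

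The genuine gap is the three-active-voter case, which is the whole substance of the theorem and which you leave as a hope. Three things are missing there.
\begin{itemize}
\item You do not derive the sharper combinatorial reduction. A voter who is the sole mover in one transition faces unchanged other ballots and must stay in the next. No transition can have every voter staying, since that profile would be fixed. This leaves only the stay patterns $(1,1,1)$ and $(2,1,0)$; your Step one does not reach this.
\item You announce a monotonicity property that you neither prove nor trust.
\item Your fallback ``direct case analysis'' cannot be a finite computation. The inactive background $b$ ranges over all of $\mathbb N_0^3$ and the middle utility over a continuum, so every exclusion must hold uniformly in both.
\end{itemize}

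More importantly, the monotonicity you name is true but not sufficient. It says that moving one other ballot toward $X$ cannot turn a response $X$ into $Y$; the paper proves it by comparing $\rho$-thresholds read off leading flow coefficients.
\begin{itemize}
\item In the pattern $(1,1,1)$ it still leaves configurations such as candidate pairs $((A,B),(B,C),(C,A))$ with every twice-used candidate middle-ranked. In these, none of the forbidden transfers ever occurs.
\item Such configurations fall only to a separate score condition: a middle response requires $q_Y>q_X$. The paper proves this for all $0<p<2/3$ by comparing the probabilities of terminal vectors whose $X$ and $Y$ coordinates are exchanged. Applied to the three middle responses, it yields the impossible chain $z_A>z_B>z_C>z_A$ on the inactive counts.
\item The pattern $(2,1,0)$ additionally needs an obstruction to top responses: $X$ is impossible when $q_Z\geq q_Y\geq q_X+2$ or $\min\{q_Y,q_Z\}\geq q_X+3$.
\item It also needs a transfer restriction that your monotonicity does not cover. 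When $q_Y\geq\max\{q_X,q_Z\}$ and $q_Z\geq1$, moving one other ballot from $Z$ to $Y$ (strengthening the middle candidate) cannot switch the response from $X$ to $Y$.
\end{itemize}
Without these lemmas and the resulting enumeration of cases, your plan does not reach a contradiction.
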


\begin{proof}[Proof outline]
If the other voters cast the same ballots at two phases, the focal voter has the same private counts and must repeat her response. By Lemma~\ref{lem:top-bottom-dominance}, each active voter alternates between her favorite and her middle candidate. Up to a time shift, her ballots along the cycle are therefore $(X,X,Y)$ or $(Y,Y,X)$, with exactly one stay among the three transitions.

A single active voter always faces the same private vector and cannot cycle. With two, suppose voter $1$ stays in the first transition. Voter $2$ then faces the same other ballots at the first two phases, so she stays in the next transition. This forces voter $1$ to stay again, a contradiction. With three active voters, the same argument, and the fact that not all voters can stay in the same transition, leave only two patterns of stays, $(1,1,1)$ and $(2,1,0)$, up to a time shift.

For these patterns, let $z=(z_A,z_B,z_C)$ count the ballots of the inactive voters. Two restrictions on responses rule them out. First, a voter with $X\succ Y\succ Z$ can choose $Y$ only if $q_Y>q_X$. 
Second, for $c\in\{Y,Z\}$ with $q_c>0$, no voter responds $X$ at $q$ and $Y$ at $q+\mathbf e_X-\mathbf e_c$: moving one other ballot towards a voter's favorite cannot make her abandon it.
For the pattern $(1,1,1)$, these restrictions leave two classes of cycles, each requiring an impossible strict cyclic chain among $z_A,z_B,z_C$. For $(2,1,0)$, six classes remain, each violating one of the restrictions. Appendix~\ref{supp:active-proof} proves both restrictions and covers every branch of the case analysis.
\end{proof}

The bound holds for any number and composition of inactive voters, and it does not require truthful initialization. As we show next, six active voters suffice: we exhibit a period-three cycle, reached from truthful votes, in which the poll changes in every round.

\begin{figure*}[t]
\centering
\colorlet{paperA}{paperBlue}\colorlet{paperB}{paperGreen}\colorlet{paperC}{paperRed}%
\newcommand{\fourballot}[1]{\colorbox{paper#1}{\makebox[1.1em]{\strut$\mathsf{#1}$}}}%
\newcommand{\fourchanged}[1]{\fcolorbox{black}{paper#1}{\makebox[1.1em]{\strut$\mathsf{#1}$}}}%
\setlength{\tabcolsep}{2pt}%
\setlength{\fboxsep}{2pt}%
\renewcommand{\arraystretch}{1.3}%
\begin{minipage}[t]{0.49\textwidth}
  \centering
  \parbox[c][4.6cm][c]{\linewidth}{\centering
  \resizebox{\linewidth}{!}{%
  \footnotesize\begin{tabular}{cc*{5}{c}}
  \toprule
  $i$ & $(u_i(A),u_i(B),u_i(C))$ & $0$ & $1$ & $2$ & $3$ & $4$\\
  \midrule
  $1$ & $(1,\frac23,0)$ & \fourballot{A} & \fourballot{B} & \fourballot{A} & \fourballot{A} & \fourballot{A}\\
  $2$ & $(1,\frac23,0)$ & \fourballot{A} & \fourballot{B} & \fourballot{A} & \fourballot{A} & \fourballot{A}\\
  $3$ & $(\frac14,1,0)$ & \fourballot{B} & \fourballot{B} & \fourballot{B} & \fourballot{B} & \fourballot{B}\\
  $4$ & $(\frac56,1,0)$ & \fourballot{B} & \fourballot{A} & \fourchanged{B} & \fourchanged{A} & \fourballot{B}\\
  $5$ & $(0,\frac23,1)$ & \fourballot{C} & \fourballot{B} & \fourchanged{C} & \fourchanged{B} & \fourballot{C}\\
  $6$ & $(\frac14,0,1)$ & \fourballot{C} & \fourballot{C} & \fourchanged{A} & \fourchanged{C} & \fourballot{C}\\
  \midrule
  \multicolumn{2}{c}{Poll} & $(2,2,2)$ & $(1,4,1)$ & $(3,2,1)$ & $(3,2,1)$ & $(2,2,2)$\\
  \multicolumn{2}{c}{Winners} & $\{A,B,C\}$ & $\{B\}$ & $\{A\}$ & $\{A\}$ & $\{A,B,C\}$\\
  \bottomrule
  \end{tabular}}}
  \caption{A period-four cycle with alternating leaders. Rows are voters, columns are rounds, and round $4$ repeats round $0$. Outlined ballots change between rounds $2$ and $3$ although the poll does not: responses depend on each voter's own ballot.}
  \Description{Six voters' utilities and ballots over four updates. The profiles are AABBCC, BBBABC, AABBCA, AABABC, and AABBCC. The third and fourth profiles share the poll (3,2,1) but differ in voters 4, 5, and 6. B and A each recur as the unique plurality leader.}
  \label{fig:truthful-four-cycle}
\end{minipage}\hfill
\begin{minipage}[t]{0.49\textwidth}
  \centering
  \parbox[c][4.6cm][c]{\linewidth}{\centering
  \resizebox{\linewidth}{!}{%
  \begin{tikzpicture}[x=0.52cm,y=0.42cm,font=\footnotesize]
    \fill[paperBlue!70] (3,0) rectangle (8,9.6);
    \fill[paperGreen!70] (8,0) rectangle (13,9.6);
    \node[anchor=north] at (5.5,9.55) {First half};
    \node[anchor=north] at (10.5,9.55) {Second half};
    \draw[->] (0,0) -- (13.6,0) node[right] {round};
    \draw[->] (0,0) -- (0,10.1) node[above] {$x$};
    \foreach \t in {0,3,8,13} \draw (\t,0) -- (\t,-0.15) node[below] {$\t$};
    \foreach \y in {0,2,4,6,8} \draw (0,\y) -- (-0.15,\y) node[left] {$\y$};
    \draw[thick] (0,9) -- (1,0) -- (2,8) -- (3,2) -- (4,5) -- (5,4) -- (6,3) -- (7,6)
                  -- (8,2) -- (9,5) -- (10,4) -- (11,3) -- (12,6) -- (13,2);
    \foreach \t/\x in {0/9,1/0,2/8} \fill[white,draw=black] (\t,\x) circle (2.2pt);
    \foreach \t/\x in {3/2,4/5,5/4,6/3,7/6,8/2,9/5,10/4,11/3,12/6,13/2} \fill (\t,\x) circle (2.2pt);
  \end{tikzpicture}}}
  \caption{The long cycle for $\ell=5$ ($27$ voters), with poll $(x,x,27-2x)$ for $x$ split pairs. Open dots mark the transient updates. The poll has period five, the profile ten: the shaded halves differ in which pairs are split.}
  \Description{A line plot of the number of split pairs x against the round. Starting from 9, the trajectory goes to 0 and 8, then cycles through 2, 5, 4, 3, 6 twice. The two repetitions are shaded differently because the identities of the split pairs differ.}
  \label{fig:long-cycle}
\end{minipage}
\end{figure*}

\begin{theorem}[Six active voters suffice] \label{ex:six-active-cycle}
    There is a three-candidate election with six voters, all of them active, whose dynamics from truthful votes cycle with period three.
\end{theorem}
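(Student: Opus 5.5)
This is an existence statement, so I will prove it by exhibiting an explicit election and checking it. The search is guided by the structure from the proof of Theorem~\ref{thm:no-three-active-period-three}. By Lemma~\ref{lem:top-bottom-dominance}, every active voter alternates between her favorite and her middle candidate. On a period-three cycle her ballot sequence is therefore $(X,X,Y)$ or $(Y,Y,X)$ up to a time shift.

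With six voters and no inactive background, I would look for a symmetric design: two voters for each of the three favorite positions. A natural candidate is a rotationally symmetric pattern whose poll cycles through permutations of a vector like $(3,2,1)$ or $(2,2,2)$. The profile would be invariant under the cyclic relabelling $A\to B\to C\to A$ combined with a one-round time shift. This symmetry reduces the verification to a single phase: if round $t+1$ is the image of round $t$ under the relabelling, then checking one transition for all six voters checks the whole cycle. The cyclic rankings $A\succ B\succ C$, $B\succ C\succ A$ and $C\succ A\succ B$, each held by two voters with rotated utilities, fit this symmetry. The preperiod from the truthful profile $(2,2,2)$ is checked separately.

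The concrete steps are as follows.
\begin{enumerate}
\item Fix the ballot sequences so that each voter has exactly one stay per period. Stagger the stays so the two voters sharing a favorite are out of phase, which lets the poll change every round.
\item For every voter and every phase, compute her private counts $q=n-\mathbf e_{s_i}$. These are small vectors with $|q|=5$.
\item For each $q$, determine the favorite-versus-middle comparison. Find the lowest order $\eta$ at which some flow makes her ballot decisive, list the decisive flows with their multiplicities, and sum the payoff differences as in the proof of Theorem~\ref{thm:condensed-one-step}.
\item Choose the utilities of the middle candidates, for example $u(\text{middle})\in(0,1)$, so that every leading coefficient has the required sign.
\item Trace the truthful start to confirm that it enters this cycle.
\end{enumerate}
If exact rotational symmetry is impossible, I would drop it and use the two-block structure of Figure~\ref{fig:truthful-four-cycle} as a template. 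In that case I would adjust the utilities there and check all three transitions by hand.

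The main obstacle is step 4. The first-order coefficients at a given $q$ are linear in the voter's utilities, and the cycle needs the same voter type to choose her favorite at one private vector and her middle candidate at a nearby one. The middle-candidate utility must therefore lie in an interval determined by several linear inequalities from different phases, and that interval could be empty. For instance, the restriction $q_Y>q_X$ from Theorem~\ref{thm:no-three-active-period-three} must hold whenever she picks $Y$. I would first tabulate, for each relevant $q$ with $|q|=5$, the threshold utility at which the leading coefficient changes sign. I would then read off feasible intervals, letting the two voters of a block have different utilities if one common value does not work. The rest of the verification is routine finite computation, which I would list in a table analogous to Table~\ref{tab:fixation-orders}.
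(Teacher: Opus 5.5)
\textbf{There is a genuine gap.} This is an existence statement, so the explicit election together with its verified leading coefficients \emph{is} the proof. You never fix the voters, ballot paths or utility intervals, and you never compute a single coefficient. You note yourself that the feasible interval in step~4 may be empty, and the fallback of adapting the period-four example is left unspecified.

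More seriously, your primary design cannot succeed. Every symmetry of an election fixes its truthful profile. If the relabelling $\sigma\colon A\mapsto B\mapsto C\mapsto A$ is paired with a voter permutation carrying each voter to one with rotated utilities, it sends each voter's favorite to the favorite of her image. Uniform trembles and the tie lottery are label-symmetric, so $F$ commutes with this action, and every profile reached from truthful votes is $\sigma$-invariant. The poll of an invariant profile is invariant under the three-cycle $\sigma$, hence equals $(2,2,2)$. A rotating cycle with $S^{t+1}=\sigma\cdot S^t$ would therefore force $S^{t+1}=S^t$. Moreover, at a constant poll $(2,2,2)$ each voter's private counts depend only on her own ballot, so she either stays or alternates with period two. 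From truthful votes, your symmetric election thus reaches only a stable profile or a period-two swap, never a period-three cycle. Small perturbations of the utilities do not help, because the comparisons at $(2,2,2)$ are decided at order zero and persist. Breaking the symmetry substantially forfeits the one-transition reduction that motivated the design.

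\textbf{The missing idea} is an asymmetric construction in which $C$ is a shared compromise and the truthful tie lies on the cycle itself, so there is no preperiod to check. The paper takes two $A\succ C\succ B$ and two $B\succ C\succ A$ voters with middle utility $\rho\in(9/11,1)$, and one $C\succ A\succ B$ and one $C\succ B\succ A$ voter with middle utility $\gamma\in(1/2,1)$. Their ballot paths are $(A,C,C)$, $(B,C,C)$, $(C,A,C)$ and $(C,B,C)$, with polls $(2,2,2)\mapsto(1,1,4)\mapsto(0,0,6)\mapsto(2,2,2)$. From the tie, the partisans retreat to $C$ while the $C$-voters simultaneously retreat to their middle candidates; then the $C$-voters return, and then the partisans return. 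The only symmetry used is $A\leftrightarrow B$. It fixes every profile on the cycle, so it is compatible with truthful initialization, and it reduces the verification to six favorite--middle comparisons. The delicate one is the order-one margin $(11\rho-9)/4$ of an $A\succ C\succ B$ voter who already votes $C$ at private counts $(1,1,3)$; this is where the bound $\rho>9/11$ comes from.
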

\begin{proof}
Choose $9/11<\rho<1$ and $1/2<\gamma<1$. Take two voters with ranking $A\succ C\succ B$, two with $B\succ C\succ A$, one with $C\succ A\succ B$, and one with $C\succ B\succ A$. Their utilities in ranking order are $(1,\rho,0)$ for the first four and $(1,\gamma,0)$ for the last two. Their ballots over three rounds are $(A,C,C)$, $(B,C,C)$, $(C,A,C)$ and $(C,B,C)$, respectively, so the counts are
$$ (2,2,2)\mapsto(1,1,4)\mapsto(0,0,6)\mapsto(2,2,2).$$

Appendix~\ref{app:six-active-details} checks, for each round and each voter, that the next ballot on her path is her strict best response: the first nonzero coefficient of the payoff difference with her other option is positive throughout these intervals.
 Every voter returns to her ballot after three updates, and the three count vectors are distinct, thus the period is exactly three.
\end{proof}

The cycle has a simple reading. From the three-way tie, the supporters of $A$ and $B$ move to their common middle candidate $C$, while the $C$-voters simultaneously move to their own middle candidates; $C$ thus leads, but without its own supporters. These return in the next round, so that everyone votes for $C$. The supporters of $A$ and $B$ then return to their favorites, and the tie reappears. 
The minimum number of active voters for period three thus lies between four and six. An extensive computer search found no period-three cycle with four or five active voters and at most ten inactive voters per candidate.

\subsection{Cycles with Changing Leaders}\label{sec:changing_leaders}

In the period-three cycle above, the poll changes in every round, but the outcome hardly does: $C$ leads alone in two rounds, and the third is a three-way tie. A more convincing failure of coordination is a cycle in which different candidates take turns as the unique leader. The following example showcases this property.

\begin{example}[Alternating leaders]\label{ex:truthful-four-cycle}
Figure~\ref{fig:truthful-four-cycle} lists the utilities and ballots of six voters. Starting from truthful votes, the dynamics cycle with period four, and both $B$ and $A$ recur as unique plurality leaders. Voter $3$ always votes for $B$; the other five voters are active.
\end{example}

From the three-way tie, the voters whose second choice is $B$ rally behind it, while the $B$-voter who nearly likes $A$ as much moves to $A$. Facing a strong $B$, the $A$-voters return and are joined by the $C\succ A\succ B$ voter, so $A$ takes the lead. The poll then stays at $(3,2,1)$ for two rounds while three voters exchange ballots, before the tie reappears. 
We prove in Appendix~\ref{app:four-cycle-margins} that every response is a strict best response; as the four profiles are distinct, the period is 
four.

Moreover, the two rounds with the same poll show that a voter's response depends on her own current ballot. Consider voter $4$, with ranking $B\succ A\succ C$. At round $2$, she votes for $B$ and sees private counts $(3,1,1)$: both ballots leave $A$ ahead, so only trembles matter. The one-tremble flows $A\mapsto B$, $A\mapsto C$ and $C\mapsto B$, with multiplicities $3$, $3$ and $1$, change her payoff from voting $A$ rather than $B$ by $-1/6$, $2/9$ and $-1/12$, respectively, giving
\[
U_4^p(A\mid(3,1,1))-U_4^p(B\mid(3,1,1))=\frac{p}{24}+\mathcal O(p^2)>0,
\]
so she switches to $A$. At round $3$, the poll is unchanged, but since she now votes for $A$, her private counts are $(2,2,1)$. Voting $B$ now makes her favorite win outright, so her margin for $B$ over $A$ is $1/6+\mathcal O(p)>0$, and she switches back.

\subsection{Arbitrarily Long Cycles}
\label{sec:long-cycles}
All cycles so far are short and involve at most six active voters. We now show that the dynamics can be much richer: with only three candidates, larger electorates sustain cycles of arbitrarily long period, in which the poll keeps changing round after round.

\begin{theorem}[Cycle lengths growing with the electorate]
\label{thm:long-period-family}
    For every odd integer $\ell\geq 3$, there is a nonempty open set of strict utility profiles with three candidates and $N=3(\ell+4)$ voters, for which the dynamics from truthful votes enter a recurrent orbit after three updates. On this orbit, the poll has period $\ell$, whereas the complete ballot profile has period $2\ell$. Every candidate keeps positive support throughout, and $C$ is the unique plurality leader.
\end{theorem}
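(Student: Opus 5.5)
The proof is by explicit construction, guided by Figure~\ref{fig:long-cycle}. We split the $N=3(\ell+4)$ voters into $\ell+4$ \emph{pairs} plus a matching set of $C$-supporters, so that the poll always has the form $(x,x,N-2x)$. Concretely, each pair consists of one voter whose middle candidate lies between $A$ and $B$ on the $C$ side, and the pair is \emph{split} when one member votes $A$ and the other votes $B$. The poll is then determined by the number $x$ of split pairs, while the profile also records \emph{which} pairs are split. This is what allows the profile period to be $2\ell$ while the poll period is $\ell$.

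By Lemma~\ref{lem:top-bottom-dominance}, every voter alternates between her favorite and her middle candidate. So a voter's response at a poll $(x,x,N-2x)$ depends only on three things: her ranking, her own current ballot, and whether the private $A$--$B$ gap is $0$ or $\pm1$. Because $C$ stays the unique leader, $N-2x>x$ requires $x\le \ell+3$.

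\textbf{Step 1: a response table.} For each voter type we compute the first nonzero coefficient of the favorite-minus-middle margin at the private counts $(x-1,x,N-2x)$, $(x,x-1,N-2x)$ and $(x,x,N-2x-1)$. This is done exactly as in the proof of Theorem~\ref{thm:condensed-one-step}. The decisive flows are those that bring $A$ or $B$ level with $C$, and their order is roughly $\lceil (N-3x)/3\rceil$, with parity effects. The key qualitative fact we aim for is a \emph{threshold rule}: a voter of a given type, with utility parameter $\rho$ in her middle candidate, switches toward her favorite exactly when the trailing gap $N-3x$ is small relative to a threshold determined by $\rho$. Choosing the pair members' middle utilities in suitable disjoint open intervals then yields an explicit update map $x\mapsto g(x)$ on split counts.

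\textbf{Step 2: engineer the orbit.} From the table, we design the utility intervals so that the map realizes the trajectory of Figure~\ref{fig:long-cycle}: from truthful $x=\ell+4$ (for $\ell=5$, $x=9$) to $0$, then $\ell+3$, then into the cycle. For $\ell=5$ the cycle is $2,5,4,3,6$. In general we want an $\ell$-cycle on split counts whose parity (odd $\ell$) forces the set of split pairs to swap with its complement, or a designated half. Then returning to the same $x$ after $\ell$ rounds leaves a different profile, and only after $2\ell$ rounds does the full profile repeat. Positivity of every candidate's support and the unique $C$ lead are checked directly from $x\le\ell+3<N/3$.

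\textbf{Main obstacle.} The hard step is Step 1 combined with Step 2. We must show that a single choice of nonempty open utility intervals produces \emph{strict} responses with the required signs at all $\ell+3$ relevant polls simultaneously, uniformly in the flow multiplicities, which are binomial coefficients in $N$. To handle this, we would first write the leading coefficient in closed form as a positive combination of utility differences, as in the displayed $s=2$ computation. We would then show that the ratio of the two competing binomial terms is monotone in $x$, which makes the threshold rule valid. The intervals are read off from consecutive ratios. Openness follows because every required inequality is strict and continuous in the utilities.
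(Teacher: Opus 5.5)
Your setup matches the paper's: $h=\ell+4$ pairs of an $A\succ C\succ B$ and a $B\succ C\succ A$ voter, each pair split or pooled, plus $h$ loyal $C$-voters, so the poll is $(x,x,3h-2x)$. But the proof lives in what you defer to Steps~1 and~2, and the threshold rule you plan to build on is not what the leading coefficients give.

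\emph{Odd $x$.} There is no utility threshold at all. Here $d=h-x$ is even. For $d\geq2$, a pooled voter's first decisive order contains only the flow $(\frac{3d}{2}-1)(C\mapsto A)$, which favors her favorite. A split voter's first decisive order contains $\frac{3d}{2}(C\mapsto A)$ and $\frac{3d}{2}(C\mapsto B)$ with equal weights, summing to a multiple of $1-2\rho<0$. So for every $\rho\in(\frac12,1)$ every pair flips, $I\mapsto P\setminus I$.

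\emph{Even $x=2k$.} The two ballots move in opposite directions. A split voter keeps her favorite exactly when $\rho<\kappa_k$, which increases in $k$. A pooled voter returns to her favorite exactly when $\rho<\kappa_{h-k}$, which decreases in $k$, so she leaves $C$ more readily when $C$'s lead is \emph{larger}.

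The needed fact is that all these roots interleave into one chain $\frac12=\kappa_0<\kappa_1<\cdots<\kappa_h<\kappa_{h+1}=1$, with split thresholds in $(\frac12,\frac23)$ and pooled ones in $(\frac23,1)$. This includes the $d=1$ case, where three-way ties alter the flow sums. Only this chain lets a single integer type describe a pair's behavior under both ballots at every $x$. The leading coefficients involve four or six flows each, and a monotone ratio of two binomial terms does not yield this cross-family ordering.

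Three smaller points. The decisive order is about $(N-3x)/2$, not $(N-3x)/3$, since one tremble cuts a lead by at most two. Both members of a pair must lie in the same type interval, or a half-split pair breaks the $A$--$B$ tie. The loyal $C$-voters need middle utility below $\frac12$ to stay at $C$.

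\emph{Step~2.} This is the theorem itself, and it is not carried out. Your period-doubling idea also cannot work: if the split set were replaced by its complement after $\ell$ rounds, the number of split pairs would become $h-x\ne x$ (as $h$ is odd), so the poll would not have returned.

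The paper takes $h=2s+1$ pairs: one each of types $0,s-1,s,s+2,h$, and two copies of each type $r\in\{s+3,\dots,2s\}$. It then verifies the chain
$T_\sigma\mapsto O_{\bar\sigma,0}\mapsto E_{\sigma,0}\mapsto\cdots\mapsto O_{\bar\sigma,s-3}\mapsto E_{\sigma,s-3}\mapsto T_{\bar\sigma}$, where $T_\sigma=\{\pi_0,\pi_{t_\sigma}\}$ and $E_{\sigma,j}=P\setminus O_{\bar\sigma,j}$. Odd sets are complemented. At each even set no split pair survives, and the pooled cutoff drops one layer of duplicated types. The split counts therefore run $2,2s-3,4,2s-5,\dots,3,2s-2$, all distinct.

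After $\ell=2s-3$ updates the poll returns, but $T_1=\{\pi_0,\pi_{s-1}\}$ has become $T_2=\{\pi_0,\pi_s\}$. Two exceptional pairs of adjacent types have exchanged roles, and this is what makes the profile period $2\ell$. The transient $P\mapsto\varnothing\mapsto P\setminus\{\pi_h\}\mapsto T_1$ that you read off the figure also depends on this specific type set.

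Without such an explicit assignment, checked for every odd $\ell$, your proposal only suggests that an orbit of the figure's shape might be engineered, not that it exists.
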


Figure~\ref{fig:long-cycle} shows the construction for $\ell=5$, with $27$ voters. After three transient updates, the poll cycles through five distinct values. When it returns, however, two pairs of voters have exchanged their roles, so the complete profile returns only after ten updates.

The construction rests on pairs of opposing partisans who share a compromise candidate. Each pair consists of one voter with ranking $A\succ C\succ B$ and one with $B\succ C\succ A$; a pair is \emph{split} when its voters support their favorites $A$ and $B$, and \emph{pooled} when both vote for $C$. Loyal $C$-supporters complete the electorate. Since every pair is either split or pooled, the poll is determined by the number $x$ of split pairs, and $A$ and $B$ are always tied. Whether a voter keeps her favorite or retreats to $C$ then depends on the parity of $x$ and on how strongly she values the compromise. Odd values of $x$ flip every pair at once, while even values let only the pairs that value the compromise least move towards their favorites. Choosing these utilities carefully lets us steer the number of split pairs through a long sequence of distinct values before it repeats.

Formally, fix an odd integer $h=2s+1\geq7$. For each identity $i$ in a set $P$ of $h$ pairs, create one voter with ranking $A\succ C\succ B$ and one with ranking $B\succ C\succ A$. After normalizing top and bottom utilities to $1$ and $0$, write their utilities for $C$ as $\rho_i^A$ and $\rho_i^B$. Add $h$ voters with ranking $C\succ A\succ B$ and utility $\gamma_j$ for $A$. A profile in which every pair is split or pooled is identified by its set $I\subseteq P$ of split pairs. Writing $x=|I|$, its poll is
\begin{equation}
(x,x,3h-2x).
\label{eq:paired-counts}
\end{equation}
Let $\Phi(I)$ denote the set of split pairs after one update. A pair's response depends on how much its voters value the compromise $C$: the more they value it, the more readily they pool. The lemma makes this precise through a single increasing chain of utility thresholds, and classifies pairs by where their utility for $C$ falls in this chain.

\begin{lemma}[Paired response]\label{lem:paired-response}
    There exist thresholds $\tfrac12=\kappa_0<\kappa_1<\cdots<\kappa_h<\kappa_{h+1}=1$ with the following property. Whenever $x=2k$ pairs are split, a voter in a split pair keeps her favorite if her utility for $C$ is strictly below $\kappa_k$, and moves to $C$ if it is strictly above; a voter in a pooled pair returns to her favorite if her utility for $C$ is strictly below $\kappa_{h-k}$, and stays at $C$ if it is strictly above.

    Consequently, suppose that each pair $i$ has a \emph{type} $\tau_i\in\{0,\ldots,h\}$ with $\rho_i^A,\rho_i^B\in(\kappa_{\tau_i},\kappa_{\tau_i+1})$, and that every $\gamma_j\in(0,1/2)$. Then every pair remains split or pooled, all responses are strict, the added voters remain at $C$, and
    \begin{equation}\label{eq:paired-map}
        \Phi(I)=
        \begin{cases}
        P\setminus I,&x\text{ odd},\\[2pt]
        \{i\in I:\tau_i<k\}\cup
        \{i\in P\setminus I:\tau_i<h-k\},&x=2k.
        \end{cases}
    \end{equation}
\end{lemma}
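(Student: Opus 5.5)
The plan is to compute, for each voter class in a paired profile, the private counts and the first nonzero coefficient of the favorite-versus-$C$ margin. By Lemma~\ref{lem:top-bottom-dominance} only these two ballots need to be compared. With poll $(x,x,3h-2x)$ and $h\geq7$, $C$ leads by a wide margin: $3h-2x-x\geq 3h-3h=0$. This is tight when $x=h$, so $C$ is the unique leader only when $x<h$. I would therefore split into cases by $x$ and compute which flows make $A$ or $B$ reach $C$. By the $A\leftrightarrow B$ symmetry of the construction, it suffices to treat $A\succ C\succ B$ voters, and each $\rho_i^B$ follows by relabelling.

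\emph{Odd $x$.} First take a split $A$-voter. She sees private counts $(x-1,x,3h-2x)$. Voting $A$ ties $A$ and $B$, while voting $C$ leaves $B$ strictly ahead of $A$. The lowest-order flows that make her ballot matter are those that bring $C$ down to, or $A$ and $B$ up to, the level of $C$. I expect the decisive comparison to be driven by the $A$--$B$ tie among the lower candidates, and this makes every voter respond uniformly regardless of $\rho$. The claim is that split voters move to $C$ and pooled voters return to their favorite, which gives $\Phi(I)=P\setminus I$. The pooled $A$-voter sees $(x,x,3h-2x-1)$; voting $A$ breaks the $A$--$B$ tie in her favor, which should favor $A$ at the first decisive order. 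The split voter has an odd private gap between $C$ and the others, so the orders differ by one. I would verify these two claims by listing the minimal flows ($C\mapsto A$, $C\mapsto B$, $A\mapsto B$) and summing their multiplicity-weighted payoff changes. The answer must be independent of $\rho\in(1/2,1)$, so the relevant coefficient should be a positive multiple of a quantity like $u(A)-u(B)$, possibly plus $u(C)$-terms that cancel.

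\emph{Even $x=2k$.} Here the favorite-versus-$C$ coefficient at the first decisive order $j$ (of order roughly $h-3k$ trembles $C\mapsto A$ or $C\mapsto B$) will be a sum of several flows. Each flow contributes a positive multiple of one of $1-\rho$, $\rho-\tfrac12$, or $\rho-\tfrac13$, coming from lotteries over winner sets like $\{A,C\}$, $\{A,B,C\}$ and $\{B,C\}$. The coefficient is therefore affine in $\rho$, of the form $\alpha_k(1-\rho)-\beta_k(\rho-\tfrac12)$ up to further terms, and it vanishes at a unique root. I define $\kappa_k$ as the root for split voters with $x=2k$, and $\kappa_{h-k}$ as the root for pooled voters. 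The labelling is consistent because the pooled voter's private vector at $x=2k$ has the same shape as the split voter's at the complementary count. Showing that this is a mirror image (replacing $k$ by $h-k$) will need the explicit binomial multiplicities.

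The main obstacle is proving strict monotonicity $\kappa_0<\kappa_1<\dots<\kappa_h$ and pinning $\kappa_0=\tfrac12$ (the $x=0$ case, in which all voters are pooled). The monotonicity must come from ratios of binomial coefficients $\binom{3h-4k}{j}/\binom{3h-4k}{j-1}$ that vary monotonically in $k$. I would first check the unique interior root in $(1/2,1)$, then monotonicity via these ratios. Finally, I would check the added $C\succ A\succ B$ voters. They see $C$ far ahead, or tied at $x=h$, and the dominant flows make $C$ best whenever $\gamma_j<1/2$. Once strictness holds everywhere, \eqref{eq:paired-map} follows directly: split pairs keep their favorites iff $\tau_i<k$, and pooled pairs return iff $\tau_i<h-k$.
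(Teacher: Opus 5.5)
Your overall route matches the paper's: compare the favorite with $C$ through the first nonzero flow coefficient, define thresholds as roots of affine functions of $\rho$, then order them. However, the step you call the main obstacle rests on claims that are false.

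There is no mirror symmetry between pooled and split thresholds. The pooled root at $x=2k$ carries index $h-k\geq s+1$, and a split voter with that index would need $x=2(h-k)>h$ split pairs, which never occurs. The labelling is consistent for a trivial reason: split roots occupy the indices $1,\dots,s$ and pooled roots the indices $s+1,\dots,h$. What must actually be shown is that these are two different families running in opposite directions in $k$. Split roots increase with $k$, pooled roots decrease with $k$, and every split root lies below every pooled root. The paper gets this by placing split roots in $(1/2,2/3)$ and pooled roots in $(2/3,1)$, plus explicit differences under $x\mapsto x+2$, $t\mapsto t-3$. The junction $\kappa_s<\kappa_{s+1}$ sits at $d=1$, where three-way ties change the flow lists and give the separate roots $(16x+20)/(29x+28)$ and $(11x+16)/(13x+20)$ with $x=h-1$. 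Monotonicity of a single binomial ratio in $k$ cannot supply this cross-family comparison. Likewise, $\kappa_0=1/2$ is not pinned by the case $x=0$. There every voter is pooled and the computed root is $\kappa_h\in(2/3,1)$; $\kappa_0$ is only the lower utility bound.

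That bound is exactly what your odd-$x$ heuristic misses, even though your conclusion there is right. Since $C$ leads, the $A$--$B$ relation is irrelevant; what matters is which of $A,B$ can reach $C$. Take a split voter with $d=h-x\geq2$ even and $t=3d/2$. The only order-$t$ flows are $t(C\mapsto A)$ and $t(C\mapsto B)$, with equal multiplicities and payoff differences $(1-\rho)/2$ and $-\rho/2$. So the coefficient is proportional to $1-2\rho$, and at $d=0$ the order-zero margin is $(1-2\rho)/3$. It is not a $\rho$-free multiple of $u(A)-u(B)$: split voters pool precisely because $\rho>1/2$. Pooled voters split because their only order-$(t-1)$ flow, $(t-1)(C\mapsto A)$, gives $(1-\rho)/2>0$.

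To obtain complete flow lists and vanishing lower orders, you need the gap-reduction argument. Each tremble lowers a lead of $C$ by at most two. A reduction of $2r$ in $r$ trembles therefore forces a single direction, and a reduction of $2r-1$ allows at most one tremble of size one. This also corrects your order estimate: for $x=2k$ it is $t$ or $t+1$ with $t=(3(h-2k)-1)/2$, not about $h-3k$. Finally, at $x=h$ the added voters' private counts are $(h,h,h-1)$, not tied. Voting $C$ creates a three-way tie with margin $(1-2\gamma)/3$, which is the only place where $\gamma<1/2$ is needed.
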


\begin{proof}

Consider an $A\succ C\succ B$ voter with normalized utilities $(u(A),u(C),u(B))=(1,\rho,0)$, where $1/2<\rho<1$. Let $d :=h-x$ be the number of pooled pairs, so that $C$ leads both $A$ and $B$ by $3d$ votes. Her private counts are
\[
\begin{aligned}
q^A&=(x-1,x,x+3d)&&\text{if her pair is split, so }x>0,\\
q^C&=(x,x,x+3d-1)&&\text{if her pair is pooled, so }x<h.
\end{aligned}
\]
By Lemma~\ref{lem:top-bottom-dominance}, she never votes $B$.

\emph{Which trembles matter.} At a terminal private vector $v$, the comparison between $A$ and $C$ is zero whenever $v_A\leq v_C-2$ and $v_B\leq v_C-1$: both ballots then leave $C$ the unique winner. A nonzero contribution therefore requires
\[
v_A\geq v_C-1\qquad\text{or}\qquad v_B\geq v_C.
\]
Each tremble reduces either lead of $C$ by at most two. To reduce a lead by $2r$ with $r$ trembles, every tremble must move a ballot from $C$ to the corresponding candidate. To reduce it by $2r-1$, either all $r$ trembles have this direction, or exactly one reduces the lead by one. For the $C$--$A$ lead, the latter directions are $C\mapsto B$ and $B\mapsto A$; for the $C$--$B$ lead they are $C\mapsto A$ and $A\mapsto B$. This gives the complete lists of first contributing flows below, and shows that all lower orders contribute nothing.

\emph{Odd number of split pairs.} If $x$ is odd, $d$ is even. For $d=0$, a split voter has margin $(1-2\rho)/3<0$ without trembles, so she chooses $C$. For $d\geq2$, put $t=3d/2$. At $q^C$, the first contributing order is $t-1$, with the single flow $(t-1)(C\mapsto A)$ and payoff difference $(1-\rho)/2>0$. At $q^A$, the first order is $t$, with flows $t(C\mapsto A)$ and $t(C\mapsto B)$ of equal multiplicity and payoff differences $(1-\rho)/2$ and $-\rho/2$, whose sum is negative. Thus every pair changes configuration, proving the odd case of \eqref{eq:paired-map}.

\emph{Even number of split pairs: pooled voters.} Now let $x=2k$, so $d$ is odd. First take $d\geq3$ and put $t=(3d-1)/2$. A pooled voter sees $q^C=(x,x,x+2t)$, and her first contributing order is $t$. After dividing multiplicities by $\binom{x+2t}{t}$, the complete list is
\[
\begin{array}{c|c|c}
\text{Flow}&\text{Weight}&\text{Payoff: }A-C\\ \hline
t(C\mapsto A)&1&1-\rho\\
(t-1)(C\mapsto A)+(C\mapsto B)&t&(1-\rho)/2\\
t(C\mapsto B)&1&-\rho/2\\
(t-1)(C\mapsto A)+(B\mapsto A)&\dfrac{tx}{x+t+1}&(1-\rho)/2.
\end{array}
\]
For example, the second flow has multiplicity $t\binom{x+2t}{t}$, while the fourth has multiplicity $x\binom{x+2t}{t-1}$. Each order-$t$ assignment carries the same probability factor $2^{-t}$, so summing the table gives, up to a positive factor,
\[
2+t+\frac{tx}{x+t+1}-\left(3+t+\frac{tx}{x+t+1}\right)\rho.
\]
Its root defines $\kappa_{h-k}$; away from equality, a pooled voter chooses $A$ exactly below this root.

\emph{Even number of split pairs: split voters.} If $x>0$, a split voter sees $q^A=(x-1,x,x+2t+1)$, and her first contributing order is $t+1$. The same gap reductions give two pure and four mixed flows, whose sum has the sign of
\[
t+3+\frac{x(t+1)}{x+t+1}
-\left(2t+5+\frac{(2x-1)(t+1)}{x+t+1}\right)\rho.
\]
Its root defines $\kappa_k$; away from equality, a split voter keeps $A$ exactly below it. Appendix~\ref{app:paired-response} lists all six flows and their multiplicities. When $x=0$, no split voter exists, and $\kappa_0=1/2$ is only the lower utility bound.

\emph{The case of a single pooled pair.} For $d=1$, three-way ties change the calculation. The pooled voter's order-one flows are $C\mapsto A$, $C\mapsto B$ and $B\mapsto A$. For the split voter, the order-two mixed flow $(C\mapsto A)+(C\mapsto B)$ produces a three-way tie, with payoff difference $(1-2\rho)/3$, and must be counted once. Summing the flow lists in Appendix~\ref{app:paired-response} gives
\[
\begin{aligned}
\kappa_{h-k}&=\frac{11x+16}{13x+20},&
\kappa_k&=\frac{16x+20}{29x+28}.
\end{aligned}
\]

\emph{Ordering of the thresholds.} The split thresholds lie in $(1/2,2/3)$ and the pooled thresholds in $(2/3,1)$, and adjacent thresholds in each range are strictly increasing, including across $d=3$ and $d=1$; Appendix~\ref{app:paired-response} gives the differences explicitly. Together with $\kappa_0=1/2$ and $\kappa_{h+1}=1$, this proves the ordering $\kappa_0<\kappa_1<\cdots<\kappa_h<\kappa_{h+1}$ stated in the lemma.

\emph{The paired map.} Since every utility lies strictly between two consecutive thresholds, each leading coefficient is nonzero, and responses are strict. A split voter then keeps $A$ precisely when $\tau_i<k$, and a pooled voter chooses $A$ precisely when $\tau_i<h-k$. Exchanging $A$ and $B$ gives the same decision for her partner, so pairs remain split or pooled, and \eqref{eq:paired-map} follows.

\emph{The added voters.} An added voter with $C\succ A\succ B$ has private counts $(x,x,x+3d-1)$ for every $x$, including $x=h$. Her margin for $C$ over $A$ is $(1-2\gamma_j)/3>0$ at $d=0$, and the same flow sums give positive leading coefficients for every $d>0$ when $\gamma_j<1/2$ (Appendix~\ref{app:paired-response}). By Lemma~\ref{lem:top-bottom-dominance}, she never votes $B$, so every added voter stays at $C$, and all responses are strict.
\end{proof}

Lemma~\ref{lem:paired-response} reduces Theorem~\ref{thm:long-period-family} to iterating the map~\eqref{eq:paired-map} on a well-chosen set of types. 
In short, an odd number of split pairs flips every pair, while an even number applies two ordered cutoffs on the types. The construction duplicates the higher types, so that each even step changes the number of split pairs by one layer of two pairs at a time; this lengthens the poll's cycle. Two exceptional pairs, of types $s-1$ and $s$, exchange roles halfway through, which doubles the period of the complete profile.

\begin{proof}[Proof of Theorem~\ref{thm:long-period-family}]
Set $h=\ell+4=2s+1$, so that $\ell=2s-3$. The construction uses two \emph{exceptional} types, $t_1=s-1$ and $t_2=s$. We index them by $\sigma\in\{1,2\}$ and write $\bar\sigma=3-\sigma$ for the other index, so that $t_{\bar\sigma}$ is the exceptional type other than $t_\sigma$. Take the pair set
\[
P=\{\pi_0,\pi_{t_1},\pi_{t_2},\pi_{s+2},\pi_h\}
 \cup\!\bigcup_{r=s+3}^{2s}\!\{\pi_r^1,\pi_r^2\},
\]
where each pair's type is its displayed subscript, and the two copies $\pi_r^1,\pi_r^2$ both have type $r$. There are $5+2(s-2)=h$ pairs. The product of the corresponding open utility intervals is nonempty and open, all comparisons are strict, and for each resulting election the responses persist for all sufficiently small $p>0$.

For $\sigma\in\{1,2\}$ and $0\leq j\leq s-3$, define the \emph{turning} set $T_\sigma:=\{\pi_0,\pi_{t_\sigma}\}$, the \emph{odd} sets $O_{\sigma,j}:=T_\sigma\cup\{\pi_{s+2}\}\cup\bigcup_{r=s+3}^{2s-1-j}\{\pi_r^1,\pi_r^2\}$, and the \emph{even} sets $E_{\sigma,j}:=P\setminus O_{\bar\sigma,j}$, of respective sizes $2$, $2s-3-2j$ and $4+2j$. Lemma~\ref{lem:paired-response} gives the chain
\[
\begin{split}
T_\sigma\mapsto O_{\bar\sigma,0}\mapsto E_{\sigma,0}\mapsto O_{\bar\sigma,1}\mapsto E_{\sigma,1}\mapsto\cdots\\
\mapsto O_{\bar\sigma,s-3}\mapsto E_{\sigma,s-3}\mapsto T_{\bar\sigma},
\end{split}
\]
hence the recurrent orbit runs from $T_1$ to $T_2$ along the chain with $\sigma=1$, and back to $T_1$ along the chain with $\sigma=2$. 
Indeed, at $T_\sigma$ we have $k=1$: only $\pi_0$ stays split, and precisely the pooled pairs of type below $2s$ join it. Each $O_{\bar\sigma,j}$ has odd size and is therefore complemented. At $E_{\sigma,j}$, we have $k=j+2$: no split type is below $k$, while the pooled types strictly below $2s-j-1$ form $O_{\bar\sigma,j+1}$ when $j<s-3$, and $T_{\bar\sigma}$ when $j=s-3$.

The passage from $T_1$ to $T_2$ takes $1+2(s-2)=\ell$ updates, during which the numbers of split pairs are $|T_1|=2$, followed by $|O_{2,j}|=2s-3-2j$ and $|E_{1,j}|=4+2j$ for $j=0,\ldots,s-3$.
These $\ell$ numbers are pairwise distinct: after the initial $2$, the odd values decrease from $2s-3$ to $3$, while the even values increase from $4$ to $2s-2$.
Since the poll~\eqref{eq:paired-counts} is determined by the number of split pairs, it first returns after exactly $\ell$ updates. At that time, $T_1$ has become the distinct set $T_2$, and only after $\ell$ further updates do the split pairs return as well. Any return of the profile is also a return of the poll, so the profile has period exactly $2\ell$.

Finally, from truthful votes every pair is split, and Equation~\eqref{eq:paired-map} gives the entry path $P\mapsto\varnothing\mapsto P\setminus\{\pi_h\}\mapsto T_1$: the odd-size set $P$ is complemented, every pair except $\pi_h$ then splits, and at $k=s$ the cutoff retains exactly $\pi_0$ and $\pi_{t_1}$. The transient numbers of split pairs, $h$, $0$ and $2s$, lie outside the range $[2,2s-2]$ of the orbit, so the orbit is entered after exactly three updates. Since $2\leq x<h$ on the orbit, Equation~\eqref{eq:paired-counts} gives every candidate positive support and $n_C-n_A=n_C-n_B=3(h-x)>0$. For $s=3$, the chain reduces to $j=0$, which covers $\ell=3$.
\end{proof}

The theorem gives an open set of cardinal utilities, not just of rankings, since normalizing by the gap between top and bottom utilities is continuous on each strict-ranking region. Both periods grow linearly with the electorate, while the number of candidates stays at three: $C$ remains uniquely ahead throughout, yet neither the ballots nor the poll ever stabilize.

\section{Discussion}\label{sec:discussion}
\balance

\paragraph{Summary of findings.}
We studied what happens when every voter in a fixed, finite electorate responds to the same poll, using trembling-hand beliefs to decide between ballots that the poll alone leaves equivalent. A single revision typically produces a stable sincere runoff between the two poll leaders: although the dynamics are deterministic, a population drawn from any fixed distribution of preferences rarely contains the near-ties needed to prevent Duverger's law. Cycles are therefore unlikely, but their behavior can be rich: already with three candidates, cycles can have arbitrarily long periods, keep changing the poll, or let different candidates take turns as the unique winner.

\paragraph{What drives convergence, and what breaks it.}
Three ingredients shape the dynamics. 
First, separation: when the runner-up is clearly ahead, a few trembles can only make the leaders' race close, and every voter simply picks between them. 
Second, simultaneity: when all voters react to the same poll, sensible individual responses can cancel out or overshoot, and several of our cycles rely on this synchronization; whether cycles also arise under sequential updates, as in related models~\cite{meir2017acyclic,kloeckner2022cycles}, is left open. 
Third, own-ballot discounting: the same poll can prompt opposite responses from identical voters, a feature our long cycles rely on and that poll-based responses alone cannot capture.

\paragraph{Polls in practice.}
With due caution, these results suggest that a clear first poll is the main driver of coordination, while repeated polls in close three-way races may induce oscillations rather than settle them. Experiments show that real voters do react to poll information~\cite{fairstein2019polls}, and synchronous cycles in related models persist when only a large fraction of voters update~\cite{kloeckner2022cycles}. Whether such oscillations are observable in actual elections is beyond the scope of this paper.

\paragraph{Open questions.}
A first set of questions concerns where cycles arise. Our conditions for one-step convergence are sufficient, and matching necessary conditions would show where cycles can start, although a full characterization may be out of reach: even deciding whether a trembling-hand equilibrium with a given winner exists is NP-hard~\cite{obraztsova2016trembling}. Within the cycles themselves, our computer search suggests that six active voters are the minimum for period three, which remains to be proved. Whether long cycles can also change the leader, and how cycle lengths are distributed under random preferences, remain similarly open. 
A second set of questions concerns the model itself. Partial or asynchronous revisions, and other voting rules such as runoff voting, would test which of our findings depend on the synchronous plurality setting. 
More fundamentally, two stochastic variants would make the model more realistic: voters whose hands actually tremble, so that their beliefs about others' deviations become consistent with their own behavior; or polls that sample only a random subset of the electorate, so that uncertainty stems from the information rather than from beliefs about other voters. Both turn the deterministic dynamics into a Markov chain, whose long-run behavior we expect to favor Duvergerian profiles, but which calls for different tools to analyze.

\section*{CRediT Author Statement}

\textbf{Maximilian Weigel:} Methodology, Software, Formal Analysis, Investigation, Writing -- Original Draft, Visualization.
\textbf{Damien Berriaud:} Conceptualization, Methodology, Writing -- Review \& Editing, Visualization, Supervision.
\textbf{Roger Wattenhofer:} Supervision, Funding Acquisition.

\clearpage

\bibliographystyle{ACM-Reference-Format}
\bibliography{references}

@inproceedings{obraztsova2016trembling,
  author    = {Obraztsova, Svetlana and Rabinovich, Zinovi and Elkind, Edith and Polukarov, Maria and Jennings, Nicholas R.},
  title     = {Trembling Hand Equilibria of Plurality Voting},
  booktitle = {Proceedings of the Twenty-Fifth International Joint Conference on Artificial Intelligence},
  publisher = {AAAI Press},
  address = {New York, NY},
  pages     = {440--446},
  year      = {2016},
  url       = {https://www.ijcai.org/Proceedings/16/Papers/069.pdf}
}

@book{duverger1954political,
  author    = {Duverger, Maurice},
  title     = {Political Parties: Their Organization and Activity in the Modern State},
  publisher = {Methuen},
  address   = {London},
  year      = {1954}
}

@book{cox1997making,
  author    = {Cox, Gary W.},
  title     = {Making Votes Count: Strategic Coordination in the World's Electoral Systems},
  publisher = {Cambridge University Press},
  year      = {1997},
  address   = {Cambridge, UK},
  doi       = {10.1017/CBO9781139174954}
}

@inproceedings{reijngoud2012polls,
  author    = {Reijngoud, Annemieke and Endriss, Ulle},
  title     = {Voter Response to Iterated Poll Information},
  booktitle = {Proceedings of the 11th International Conference on Autonomous Agents and Multiagent Systems},
  publisher = {International Foundation for Autonomous Agents and Multiagent Systems},
  address   = {Richland, SC},
  pages     = {635--644},
  year      = {2012}
}

@inproceedings{meir2014local,
  author    = {Meir, Reshef and Lev, Omer and Rosenschein, Jeffrey S.},
  title     = {A Local-Dominance Theory of Voting Equilibria},
  booktitle = {Proceedings of the Fifteenth ACM Conference on Economics and Computation},
  publisher = {ACM},
  address = {New York, NY},
  pages     = {313--330},
  year      = {2014},
  doi       = {10.1145/2600057.2602860}
}

@inproceedings{meir2015uncertainty,
  author    = {Meir, Reshef},
  title     = {Plurality Voting under Uncertainty},
  booktitle = {Proceedings of the Twenty-Ninth AAAI Conference on Artificial Intelligence},
  publisher = {AAAI Press},
  address = {Palo Alto, CA},
  pages     = {2103--2109},
  year      = {2015},
  doi       = {10.1609/aaai.v29i1.9435}
}

@article{meir2017acyclic,
  author  = {Meir, Reshef and Polukarov, Maria and Rosenschein, Jeffrey S. and Jennings, Nicholas R.},
  title   = {Iterative Voting and Acyclic Games},
  journal = {Artificial Intelligence},
  volume  = {252},
  pages   = {100--122},
  year    = {2017},
  doi     = {10.1016/j.artint.2017.08.002}
}

@book{meir2018strategic,
  author    = {Meir, Reshef},
  title     = {Strategic Voting},
  series    = {Synthesis Lectures on Artificial Intelligence and Machine Learning},
  publisher = {Morgan \& Claypool Publishers},
  address   = {San Rafael, CA},
  year      = {2018},
  doi       = {10.2200/S00849ED1V01Y201804AIM038}
}

@inproceedings{vasselai2022icv,
  author    = {Vasselai, Fabricio},
  title     = {Iterative Calculus of Voting under Plurality},
  booktitle = {Proceedings of the Thirty-Sixth AAAI Conference on Artificial Intelligence},
  publisher = {AAAI Press},
  address = {Palo Alto, CA},
  volume    = {36},
  pages     = {5208--5218},
  year      = {2022},
  doi       = {10.1609/aaai.v36i5.20456}
}

@article{kloeckner2022cycles,
  author  = {Kloeckner, Beno\^it R.},
  title   = {Cycles in Synchronous Iterative Voting: General Robustness and Examples in Approval Voting},
  journal = {Social Choice and Welfare},
  volume  = {59},
  number  = {2},
  pages   = {423--466},
  year    = {2022},
  doi     = {10.1007/s00355-022-01395-6}
}

@article{myerson1993voting,
  author  = {Myerson, Roger B. and Weber, Robert J.},
  title   = {A Theory of Voting Equilibria},
  journal = {American Political Science Review},
  volume  = {87},
  number  = {1},
  pages   = {102--114},
  year    = {1993},
  doi     = {10.2307/2938959}
}

@inproceedings{meir2010convergence,
  author    = {Meir, Reshef and Polukarov, Maria and Rosenschein, Jeffrey S. and Jennings, Nicholas R.},
  title     = {Convergence to Equilibria in Plurality Voting},
  booktitle = {Proceedings of the Twenty-Fourth AAAI Conference on Artificial Intelligence},
  publisher = {AAAI Press},
  address = {Palo Alto, CA},
  pages     = {823--828},
  year      = {2010},
  doi       = {10.1609/aaai.v24i1.7624}
}

@techreport{messner2011miscounts,
  author      = {Messner, Matthias and Polborn, Mattias K.},
  title       = {Miscounts, {Duverger}'s Law and {Duverger}'s Hypothesis},
  institution = {IGIER, Bocconi University},
  type        = {Working Paper},
  number      = {380},
  year        = {2011},
  url         = {https://igier.unibocconi.eu/sites/default/files/media/publication/380.pdf}
}

@inproceedings{goldberg2026complexity,
  author    = {Goldberg, Paul W. and Mavronicolas, Marios and W{\k{a}}s, Tomasz},
  title     = {Complexity of (Non-)Convergence in Iterative Voting},
  booktitle = {Proceedings of the 25th International Conference on Autonomous Agents and Multiagent Systems},
  publisher = {International Foundation for Autonomous Agents and Multiagent Systems},
  address   = {Richland, SC},
  pages     = {2545--2553},
  year      = {2026},
  doi       = {10.65109/MDLZ8000},
  url       = {https://ifaamas.org/Proceedings/aamas2026/pdfs/MDLZ8000.pdf}
}

@incollection{palfrey1989duverger,
  author    = {Palfrey, Thomas R.},
  title     = {A Mathematical Proof of Duverger's Law},
  booktitle = {Models of Strategic Choice in Politics},
  editor    = {Ordeshook, Peter C.},
  publisher = {University of Michigan Press},
  address   = {Ann Arbor},
  pages     = {69--91},
  year      = {1989}
}

@article{fey1997stability,
  author  = {Fey, Mark},
  title   = {Stability and Coordination in Duverger's Law: A Formal Model of Preelection Polls and Strategic Voting},
  journal = {American Political Science Review},
  volume  = {91},
  number  = {1},
  pages   = {135--147},
  year    = {1997},
  doi     = {10.2307/2952264}
}

@inproceedings{obraztsova_plurality_2013,
	address = {Berlin, Heidelberg},
	title = {Plurality {Voting} with {Truth}-{Biased} {Agents}},
	isbn = {978-3-642-41392-6},
	doi = {10.1007/978-3-642-41392-6_3},
	language = {en},
	booktitle = {Algorithmic {Game} {Theory}},
	publisher = {Springer},
	author = {Obraztsova, Svetlana and Markakis, Evangelos and Thompson, David R. M.},
	editor = {Vöcking, Berthold},
	year = {2013},
	pages = {26--37},
}

@article{forand_dynamic_2015,
	title = {A dynamic {Duverger}’s law},
	volume = {165},
	issn = {1573-7101},
	url = {https://doi.org/10.1007/s11127-016-0309-8},
	doi = {10.1007/s11127-016-0309-8},
	language = {en},
	number = {3},
	urldate = {2026-10-08},
	journal = {Public Choice},
	author = {Forand, Jean Guillaume and Maheshri, Vikram},
	month = dec,
	year = {2015},
	pages = {285--306},
}

@inproceedings{desmedt_equilibria_2010,
	address = {New York, NY, USA},
	series = {{EC} '10},
	title = {Equilibria of plurality voting with abstentions},
	isbn = {978-1-60558-822-3},
	url = {https://dl.acm.org/doi/10.1145/1807342.1807398},
	doi = {10.1145/1807342.1807398},
	urldate = {2026-10-08},
	booktitle = {Proceedings of the 11th {ACM} conference on {Electronic} commerce},
	publisher = {Association for Computing Machinery},
	author = {Desmedt, Yvo and Elkind, Edith},
	month = jun,
	year = {2010},
	pages = {347--356},
}

@inproceedings{lev_convergence_2012,
	address = {Richland, SC},
	series = {{AAMAS} '12},
	title = {Convergence of iterative voting},
	isbn = {978-0-9817381-2-3},
	url = {https://dl.acm.org/doi/10.5555/2343776.2343784},
	urldate = {2026-10-08},
	booktitle = {Proceedings of the 11th {International} {Conference} on {Autonomous} {Agents} and {Multiagent} {Systems} - {Volume} 2},
	publisher = {International Foundation for Autonomous Agents and Multiagent Systems},
	author = {Lev, Omer and Rosenschein, Jeffrey S.},
	month = jun,
	year = {2012},
	pages = {611--618},
}

@inproceedings{rabinovich_analysis_2015,
	title = {Analysis of {Equilibria} in {Iterative} {Voting} {Schemes}},
	volume = {29},
    pages = {1007--1013},
	copyright = {Copyright (c)},
	issn = {2374-3468},
	url = {https://ojs.aaai.org/index.php/AAAI/article/view/9328},
	doi = {10.1609/aaai.v29i1.9328},
	language = {en},
	urldate = {2026-10-08},
	booktitle = {Proceedings of the AAAI Conference on Artificial Intelligence},
	author = {Rabinovich, Zinovi and Obraztsova, Svetlana and Lev, Omer and Markakis, Evangelos and Rosenschein, Jeffrey},
    publisher = {AAAI Press},
	address = {Palo Alto, CA},
	month = feb,
	year = {2015},
}

@inproceedings{branzei_how_2013,
	address = {Bellevue, Washington},
	series = {{AAAI}'13},
	title = {How bad is selfish voting?},
	urldate = {2026-10-08},
	booktitle = {Proceedings of the {Twenty}-{Seventh} {AAAI} {Conference} on {Artificial} {Intelligence}},
	publisher = {AAAI Press},
	author = {Brânzei, Simina and Caragiannis, Ioannis and Morgenstern, Jamie and Procaccia, Ariel D.},
	month = jul,
	year = {2013},
	pages = {138--144},
}

@inproceedings{thompson_empirical_2013,
	address = {Richland, SC},
	series = {{AAMAS} '13},
	title = {Empirical analysis of plurality election equilibria},
	isbn = {978-1-4503-1993-5},
	url = {https://dl.acm.org/doi/10.5555/2484920.2484983},
	urldate = {2026-10-08},
	booktitle = {Proceedings of the 2013 international conference on {Autonomous} agents and multi-agent systems},
	publisher = {International Foundation for Autonomous Agents and Multiagent Systems},
	author = {Thompson, David R.M. and Lev, Omer and Leyton-Brown, Kevin and Rosenschein, Jeffrey},
	month = may,
	year = {2013},
	pages = {391--398},
}

@inproceedings{kavner_strategic_2021,
	address = {Red Hook, NY, USA},
	series = {{NIPS} '21},
	title = {Strategic behavior is bliss: iterative voting improves social welfare},
	isbn = {978-1-7138-4539-3},
	shorttitle = {Strategic behavior is bliss},
	urldate = {2026-10-08},
	booktitle = {Proceedings of the 35th {International} {Conference} on {Neural} {Information} {Processing} {Systems}},
	publisher = {Curran Associates Inc.},
	author = {Kavner, Joshua and Xia, Lirong},
	month = dec,
	year = {2021},
	pages = {19021--19032},
}

@inproceedings{wilczynski2019poll,
  author    = {Wilczynski, Ana{\"e}lle},
  title     = {Poll-Confident Voters in Iterative Voting},
  booktitle = {Proceedings of the Thirty-Third AAAI Conference on Artificial Intelligence},
  publisher = {AAAI Press},
  address = {Palo Alto, CA},
  volume    = {33},
  pages     = {2205--2212},
  year      = {2019},
  doi       = {10.1609/aaai.v33i01.33012205}
}

@inproceedings{fairstein2019polls,
  author    = {Fairstein, Roy and Lauz, Adam and Meir, Reshef and Gal, Kobi},
  title     = {Modeling People's Voting Behavior with Poll Information},
  booktitle = {Proceedings of the 18th International Conference on Autonomous Agents and MultiAgent Systems},
  publisher = {International Foundation for Autonomous Agents and Multiagent Systems},
  address   = {Richland, SC},
  pages     = {1422--1430},
  year      = {2019},
  url       = {https://www.ifaamas.org/Proceedings/aamas2019/pdfs/p1422.pdf}
}

@misc{kavner2024average,
  author        = {Kavner, Joshua and Xia, Lirong},
  title         = {Average-Case Analysis of Iterative Voting},
  year          = {2024},
  eprint        = {2402.08144},
  archivePrefix = {arXiv},
  primaryClass  = {cs.GT},
  url           = {https://arxiv.org/abs/2402.08144}
}

@article{selten1975perfectness,
  author  = {Selten, Reinhard},
  title   = {Reexamination of the Perfectness Concept for Equilibrium Points in Extensive Games},
  journal = {International Journal of Game Theory},
  volume  = {4},
  number  = {1},
  pages   = {25--55},
  year    = {1975},
  doi     = {10.1007/BF01766400}
}

@article{dhillon_when_2004,
	title = {When are plurality rule voting games dominance-solvable?},
	volume = {46},
	issn = {0899-8256},
	url = {https://www.sciencedirect.com/science/article/pii/S0899825603000502},
	doi = {10.1016/S0899-8256(03)00050-2},
	number = {1},
	urldate = {2026-10-09},
	journal = {Games and Economic Behavior},
	author = {Dhillon, Amrita and Lockwood, Ben},
	month = jan,
	year = {2004},
	pages = {55--75},
}


\clearpage
\appendix

\numberwithin{equation}{section}

This Appendix gives the coefficient calculations and case reductions used in the paper. All calculations use three candidates and uniform trembles: each other ballot stays with probability $1-p$ and moves to either alternative with probability $p/2$. As in Section~\ref{sec:model}, a tied winner set is valued by its uniform lottery, $q$ denotes the private counts, which exclude the focal ballot, and $v$ denotes a terminal private vector.

For a fixed voter, we suppress her index and write $\Delta^p(c,c'\mid q)=U^p(c\mid q)-U^p(c'\mid q)$. A margin written $c-c'$ means this difference at the displayed private vector. We use $A,B,C$ for named candidates and $X\succ Y\succ Z$ for a voter's favorite, middle, and least-preferred candidate. The first nonzero coefficient determines a margin's sign for sufficiently small positive $p$; if a coefficient vanishes, later orders decide.

\section{Counting Directed Trembles}
We first give a formula for calculating expected-utility differences by group the believed trembles according to their number. The first nonzero coefficient determines the limiting comparison as $p\to0$. Nevertheless, we have to take all contributions at that order into account. Lemma~\ref{lem:directed-flow-expansion} provides the formula used in the response calculations and cycle examples below.

\begin{lemma}
\label{lem:directed-flow-expansion}
Fix a voter with private counts $q=(q_A,q_B,q_C)$ and write $|q|:=q_A+q_B+q_C$. For $\eta\in\mathbb N_0$, let $\mathcal F_\eta(q)$ be the set of flows $f=(f_{cc'})_{c\neq c'}$ of nonnegative integers satisfying
\[
    \sum_{c\neq c'}f_{cc'}=\eta,
    \qquad
    \sum_{c'\neq c}f_{cc'}\leq q_c
    \quad\text{for every }c.
\]
Here $f_{cc'}$ counts the ballots moving from $c$ to $c'$, and the inequalities ensure that the flows do not exceed the number of counts they depart from. The change in the private counts is
\[
    (Df)_c:=\sum_{c'\neq c}f_{c'c}-\sum_{c'\neq c}f_{cc'}
\]
so the resulting private counts are $q+Df$. The number of assignments of individual ballots producing this flow is
\[
    M_q(f)
    :=
    \prod_c
    \frac{q_c!}
    {\bigl(q_c-\sum_{c'\neq c}f_{cc'}\bigr)!
     \prod_{c'\neq c}f_{cc'}!}.
\]
For two contemplated ballots $c$ and $c'$, define
\begin{align*}
 H_\eta(c,c';q)
 :=
 \sum_{f\in\mathcal F_\eta(q)}
 M_q(f)
 \big[
 &u\bigl(W(q+Df+\mathbf e_c)\bigr)\\
 &-u\bigl(W(q+Df+\mathbf e_{c'})\bigr)
 \big].
\end{align*}

Under the uniform three-candidate tremble belief,
\[
 U^p(c\mid q)-U^p(c'\mid q)
 =
 \sum_{\eta=0}^{|q|}
 \left(\frac p2\right)^\eta
 (1-p)^{|q|-\eta}
 H_\eta(c,c';q).
\]
In particular, if $d\in\mathbb N_0$ and $H_\eta(c,c';q)=0$ for every $\eta<d$, then
\[
 [p^d]\bigl(U^p(c\mid q)-U^p(c'\mid q)\bigr)
 =2^{-d}H_d(c,c';q).
\]
\end{lemma}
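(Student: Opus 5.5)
The plan is to expand the expected payoff as a sum over individual tremble assignments, group assignments by the flow they induce, and then read off the coefficient of $p^d$.

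\emph{Step 1: payoffs as a sum over assignments.} Under the belief, each of the $|q|$ other ballots independently stays with probability $1-p$ or moves to a specified other candidate with probability $p/2$. An \emph{assignment} specifies, for each other ballot, whether it stays or where it goes. If an assignment moves exactly $\eta$ ballots, its probability is $(p/2)^\eta(1-p)^{|q|-\eta}$, and this is independent of which ballots move or where they go. Hence
\[
U^p(c\mid q)-U^p(c'\mid q)=\sum_{\text{assignments}}\Pr[\text{assignment}]\bigl[u(W(v+\mathbf e_c))-u(W(v+\mathbf e_{c'}))\bigr],
\]
where $v$ is the terminal private vector of the assignment.

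\emph{Step 2: regroup by flow.} Each assignment determines a flow $f$, where $f_{cc'}$ counts the ballots moving from $c$ to $c'$. The flow's total is the number $\eta$ of trembling ballots, and the constraint $\sum_{c'}f_{cc'}\le q_c$ holds automatically. The terminal vector depends only on the flow: $v=q+Df$, since $c$ gains the inflows and loses the outflows. So both the payoff bracket and the probability are constant on each flow class.

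The size of a class is obtained by choosing, independently for each source candidate $c$, which of its $q_c$ ballots go to each destination and which stay. That number is the multinomial coefficient $q_c!/\bigl((q_c-\sum f_{cc'})!\prod f_{cc'}!\bigr)$, and taking the product over $c$ gives $M_q(f)$. Summing over flows with a fixed total $\eta$ yields exactly $(p/2)^\eta(1-p)^{|q|-\eta}H_\eta$. Since $\eta\le|q|$, this proves the displayed identity.

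\emph{Step 3: the leading coefficient.} If $H_\eta=0$ for all $\eta<d$, then every remaining term $\eta\ge d$ is $p^\eta$ times a polynomial, so it contributes only at orders $p^\eta$ and higher. The only contribution at order $p^d$ comes from $\eta=d$. There the factor $(1-p)^{|q|-d}$ has constant term $1$, so $[p^d]=2^{-d}H_d$.

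There is no real obstacle. The only care needed is in Step 2: assignments must be counted as labeled ballot-to-destination maps, so that multiplicities are products of multinomials, not binomials. One should also note that distinct flows may give the same $v$, which is harmless because the sum runs over flows.
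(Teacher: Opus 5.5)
Your proposal is correct and follows the paper's proof: you expand the payoff difference over individual tremble assignments, group them into flows with multinomial multiplicity $M_q(f)$ and common probability $(p/2)^\eta(1-p)^{|q|-\eta}$, and read off the lowest surviving coefficient. Your remarks that the terminal vector $q+Df$ depends only on the flow, and that $(1-p)^{|q|-d}$ has constant term one, make explicit steps the paper leaves implicit.
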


\begin{proof}
For each candidate $c$, the multinomial factor in $M_q(f)$ chooses which of the $q_c$ ballots supporting $c$ move to either other candidate. Each moved ballot contributes $p/2$ and each unchanged ballot contributes $1-p$. A flow with $\eta$ trembles therefore has probability
\[
 M_q(f)\left(\frac p2\right)^\eta(1-p)^{|q|-\eta}.
\]
Summing its payoff difference over all flows proves the expansion. If $H_\eta=0$ for every $\eta<d$, the terms with $\eta<d$ vanish and those with $\eta>d$ contain the factor $p^\eta$. The coefficient of $p^d$ is therefore $2^{-d}H_d$.
\end{proof}

\section{Proof of Theorem~\ref{thm:condensed-one-step}}

Theorem~\ref{thm:condensed-one-step} asserts both that the first update concentrates support on the initial leaders and that the resulting profile is stable. We first calculate the individual responses under the given gap conditions, where we separate the even and odd leader gaps. Then we verify that no voter switches at the resulting profile.

\subsection{The Three-Candidate Response Calculation}
\label{app:one-step-blocks}

Write
\[
 (n_A^0,n_B^0,n_C^0)=(r+d,r,r-s),
 \qquad r\geq2,\quad d\geq0,\quad 1\leq s\leq r-1.
\]
The private vectors of an $A$-, $B$-, and $C$-voter are
\begin{equation}
 q^{A}=(r+d-1,r,r-s),\quad
 q^{B}=(r+d,r-1,r-s),\quad
 q^{C}=(r+d,r,r-s-1).
\label{eq:app-B1}
\end{equation}
For terminal private counts $v$, one directed tremble changes the two gaps as follows (the rows always give the decrease):
\begin{equation}
\begin{array}{c|rrrrrr}
 &A\mapsto B&A\mapsto C&B\mapsto A&B\mapsto C&C\mapsto A&C\mapsto B\\ \hline
v_A-v_B&2&1&-2&-1&-1&1\\
v_A-v_C&1&2&-1&1&-2&-1
\end{array}
\label{eq:app-B2}
\end{equation}
Each tremble reduces a given gap by at most two. Reducing it by at least $2j$ in $j$ trembles therefore requires every tremble to move from the leader to the other candidate. To reduce it by at least $2j-1$, either all $j$ trembles reduce it by two or exactly one reduces it by one. Checking when the contemplated ballots give different winner sets gives the complete flow lists below and shows that all earlier coefficients vanish.

For $d=0$, staying restores the $A$--$B$ tie for a truthful $A$-voter. Switching to $B$ makes $B$ win alone; switching to $C$ leaves $B$ alone when $s\geq2$ and creates a $B$--$C$ tie when $s=1$. Both alternatives are strictly worse because $A$ is her favorite. The symmetric comparison holds for a truthful $B$-voter. A truthful $C$-voter chooses between making $A$ or $B$ win alone and retaining the $A$--$B$ tie. She therefore chooses the one of $A,B$ she prefers. Suppose first that $d=2k>0$ and put $a=r+2k$, with $\binom{x}{-1}=0$. Lemma~\ref{lem:directed-flow-expansion} gives the following first nonzero coefficients:
\begingroup
\scriptsize
\begin{equation}
\begin{array}{c|c|c|l}
q&\text{Margin}&\text{Order}&\text{Coefficient}\\ \hline
q^{A}&A-B&k-1&2^{-k}\binom{a-1}{k-1}(u(A)-u(B))\\[2pt]
q^{B}&B-A&k&2^{-k-1}\binom ak(u(B)-u(A))\\[2pt]
q^{B}&B-C&k&2^{-k-1}\binom ak
 \begin{cases}u(B)-u(C),&s=1,\\u(B)-u(A),&s\geq2,
 \end{cases}\\[7pt]
q^{C}&A-C&k&2^{-k-1}\binom ak(u(A)-u(B))\\[2pt]
q^{C}&B-C&k&2^{-k-1}
 \big((k+1)\binom ak\\&&&+(r-s-1)\binom a{k-1}\big)(u(B)-u(A)).
\end{array}
\label{eq:app-B3}
\end{equation}
\endgroup
The contributing flows, in the same row order as \eqref{eq:app-B3}, are
\begin{equation}
\begin{array}{c|l}
1&(k-1)(A\mapsto B)\\
2&k(A\mapsto B)\\
3&k(A\mapsto B),\quad k(A\mapsto C)\ \text{only when }s=1\\
4&k(A\mapsto B)\\
5&k(A\mapsto B),\quad (k-1)(A\mapsto B)+(A\mapsto C\text{ or }C\mapsto B).
\end{array}
\label{eq:app-B4}
\end{equation}
For example, in the third row with $s=1$ the two equal multiplicities $\binom ak$ multiply payoff changes $(u(B)-u(A))/2$ and $(u(A)-u(C))/2$. In the last row, the two mixed-flow multiplicities sum to $k\binom ak+(r-s-1)\binom a{k-1}$; adding the pure flow gives \eqref{eq:app-B3}.

The remaining even-gap comparison is $A-C$ at $q^{A}$. Below order $k$ both ballots leave $A$ the unique winner. At order $k$, \eqref{eq:app-B2} implies $v_A\geq v_C$ and $v_B\leq v_A+1$. If $v_B\leq v_A$, ballot $A$ makes the voter's favorite the unique winner. Otherwise $v_B=v_A+1$: ballot $A$ gives an $A$--$B$ tie, whereas ballot $C$ gives either $B$ alone or a $B$--$C$ tie. Thus every contribution is nonnegative for $A\succ C\succ B$. The flow $k(A\mapsto B)$ is strictly positive, with payoff change $(u(A)-u(B))/2$. Therefore its order-$k$ coefficient is strictly positive.

Now let $d=2k+1$ and $s\geq2$, and put $a=r+2k+1$. The first nonzero coefficients are
\begingroup
\scriptsize
\begin{equation}
\begin{array}{c|c|c|l}
q&\text{Margin}&\text{Order}&\text{Coefficient}\\ \hline
q^{A}&A-B&k&2^{-k}\left(\binom{a-1}k+\frac{2r+k+1-s}{2}\binom{a-1}{k-1}\right)(u(A)-u(B))\\[3pt]
q^{A}&A-C&k&2^{-k-1}\binom{a-1}k(u(A)-u(B))\\[3pt]
q^{B}&B-A&k+1&2^{-k-1}\left(\binom a{k+1}+\frac{2r+k+1-s}{2}\binom ak\right)(u(B)-u(A))\\[3pt]
q^{B}&B-C&k+1&2^{-k-2}\big(
 [\binom a{k+1}+(2r+k+1-s)\binom ak]\\&&&\cdot(u(B)-u(A))
 +1_{\{s=2\}}\binom a{k+1}(u(A)-u(C))\big)\\[3pt]
q^{C}&A-C&k+1&2^{-k-2}\left(\binom a{k+1}+(2r+k-s)\binom ak\right)(u(A)-u(B))\\[3pt]
q^{C}&B-C&k&2^{-k-1}\binom ak(u(B)-u(A)).
\end{array}
\label{eq:app-B5}
\end{equation}
\endgroup
The decisive flows in the six rows are, respectively,
\begin{equation}
\begin{array}{c|l}
1&k(A\mapsto B),\ (k-1)(A\mapsto B)+(A\mapsto C\text{ or }C\mapsto B)\\
2&k(A\mapsto B)\\
3&(k+1)(A\mapsto B),\ k(A\mapsto B)+(A\mapsto C\text{ or }C\mapsto B)\\
4&\text{the flows in row 3; also }(k+1)(A\mapsto C)\ \text{when }s=2\\
5&(k+1)(A\mapsto B),\ k(A\mapsto B)+(A\mapsto C\text{ or }C\mapsto B)\\
6&k(A\mapsto B).
\end{array}
\label{eq:app-B6}
\end{equation}
Terms containing $k-1$ are absent when $k=0$. The mixed-flow multiplicities in rows 1, 3, and 5 are $(2r+k+1-s)\binom{a-1}{k-1}$, $(2r+k+1-s)\binom ak$, and $(2r+k-s)\binom ak$, respectively. Formula \eqref{eq:app-B2} also shows that $C$ cannot enter an earlier winner set and can enter the first contributing order only when $s=2$. In that case, the fourth coefficient in \eqref{eq:app-B5} rewrites as
\begin{equation}
 2^{-k-2}\left[
 \binom a{k+1}(u(B)-u(C))
 +(2r+k-1)\binom ak(u(B)-u(A))
 \right]>0.
\label{eq:app-B7}
\end{equation}
When $k=0=s-2$, the flow $A\mapsto C$ in \eqref{eq:app-B6} is one flow whose payoff is corrected from $(u(B)-u(A))/2$ to $(u(B)-u(C))/2$, not a second copy. Equations \eqref{eq:app-B3}--\eqref{eq:app-B7}, together with Lemma~\ref{lem:top-bottom-dominance}, prove every entry of the response table in the main proof.

\subsection{Why the Successor Is Stable}
\label{app:one-step-fixedness}

Removing all support from $C$ after one update does not by itself establish convergence. We must also show that every voter keeps her new ballot. The argument below covers the resulting profile for the remaining two candidates.

Suppose a profile has counts $(a,b,0)$ with $a,b\geq3$, every $A$-supporter prefers $A$ to $B$, and every $B$-supporter prefers $B$ to $A$. An $A$-supporter has private counts $q=(a-1,b,0)$. If $q_A\geq3$, Lemma~\ref{lem:general-two-candidates} makes her stay. Otherwise $q_A=2$. Put $\eta=\lfloor(q_B-2)/2\rfloor$. Through at most $\eta$ trembles, every terminal private vector $v$ satisfies $v_B-v_C\geq q_B-2\eta\geq2$, so $C$ cannot win after any contemplated ballot. Voting for $A$ is therefore weakly better than either alternative because $u(A)>u(B)$. The flow $\eta(B\mapsto A)$ leaves $v_B-v_A\in\{0,1\}$ and makes ballot $A$ strictly better than both alternatives. For each comparison, its first nonzero coefficient is consequently a positive sum of contributions at some order at most $\eta$, by Lemma~\ref{lem:directed-flow-expansion}. Exchanging $A,B$ proves that every supporter stays, so the profile is stable.

The successor in Theorem~\ref{thm:condensed-one-step} is
\begin{equation}
 (r+d+x,\,2r-s-x,\,0),\qquad 0\leq x\leq r-s,
\label{eq:app-B8}
\end{equation}
and every supporter prefers her current to the other supported candidate. Both supported counts are at least three when $r\geq3$, so the preceding argument applies. If $r=2$, then $s=1$, $d=2j-2$ for some $j\geq1$, and the two possible successors are $(2j,3,0)$ and $(2j+1,2,0)$. The first is covered above for $j\geq2$. For $(2,3,0)$, the required margins are
\begin{equation}
\begin{array}{c|c|c}
q&\text{Margin}&\text{First coefficient}\\ \hline
(1,3,0)&A-B&[p]\Delta^p=\frac94(u(A)-u(B))\\
         &A-C&[p]\Delta^p=\frac34(2u(A)-u(B)-u(C))\\
(2,2,0)&B-A&[p^0]\Delta^p=u(B)-u(A)\\
         &B-C&[p^0]\Delta^p=(u(B)-u(A))/2.
\end{array}
\label{eq:app-B9}
\end{equation}
In $(2,3,0)$, both $A$-supporters are original truthful $A$-voters, so $2u(A)-u(B)-u(C)>0$ in \eqref{eq:app-B9}. For $(2j+1,2,0)$, an $A$-supporter has $q=(2j,2,0)$ and a $B$-supporter has $q=(2j+1,1,0)$. With $\mu_j=2^{-j-1}\binom{2j+1}j$, the first coefficients are
\begingroup
\small
\begin{equation}
\begin{aligned}
\relax[p^{j-1}]\Delta^p(A,C\mid(2j,2,0))
 &=2^{-j}\binom{2j}{j-1}(u(A)-u(B)),\\
\relax[p^{j-1}]\Delta^p(A,B\mid(2j,2,0))
 &=2^{-j}(j+1)\binom{2j}{j-1}(u(A)-u(B)),\\
\relax[p^j]\Delta^p(B,A\mid(2j+1,1,0))
 &=\mu_j(j+2)(u(B)-u(A)),\\
\relax[p^j]\Delta^p(B,C\mid(2j+1,1,0))
 &=\mu_j\bigl((j+1)u(B)-ju(A)-u(C)\bigr).
\end{aligned}
\label{eq:app-B10}
\end{equation}
\endgroup
The first two lines come from the pure flow $(j-1)(A\mapsto B)$ and, for $A-B$, the additional flow $(j-2)(A\mapsto B)+(A\mapsto C)$, which is absent when $j=1$. The last two come from $j(A\mapsto B)$ and $(j-1)(A\mapsto B)+(A\mapsto C)$; for $B-C$ one also includes $j(A\mapsto C)$. At $j=1$, the last two descriptions of $A\mapsto C$ denote the same flow, which is counted once with its actual payoff. All four expressions are positive: the last equals $\mu_j[j(u(B)-u(A))+u(B)-u(C)]$, and every $B$-supporter in this case is a truthful $B$-voter. Hence both boundary successors are stable, completing the calculation.
\section{Proof of Theorem~\ref{thm:no-three-active-period-three}}
\label{supp:active-proof}

\begin{reptheorem}{thm:no-three-active-period-three}
With three candidates, every period-three cycle of the complete ballot profile has at least four active voters.
\end{reptheorem}

We prove the lower bound in Theorem~\ref{thm:no-three-active-period-three} without assuming truthful initialization or restricting the ballots of inactive voters. We first reduce the possible ballot paths to two stay patterns, then prove the response comparisons needed to exclude both. The final two subsections verify that the case reductions used in this argument are indeed exhaustive.

\begingroup

\subsection{Ballot Paths and Stay Patterns}

\begin{lemma}
\label{lem:condensed-active-path}
\label{lem:condensed-active-reduction}
On a period-three cycle, every active voter uses her favorite and middle candidate and stays in exactly one of the three transitions. There are at least three active voters. With exactly three, the numbers of active stayers are, up to a cyclic shift,
\[
(2,1,0)\qquad\text{or}\qquad(1,1,1).
\]
\end{lemma}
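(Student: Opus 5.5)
The plan rests on one determinism observation. A voter's next ballot depends only on her private counts. These are fixed by the ballots of the other voters. So if the other voters cast the same ballots at two phases, she must respond identically at those phases.

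\emph{Ballot paths.} By Lemma~\ref{lem:top-bottom-dominance}, no voter ever responds with her least-preferred candidate. On the recurrent orbit every ballot is a response, so an active voter only alternates between her favorite $X$ and middle $Y$. Her three ballots along the cycle are therefore a binary word of length three that is not constant, hence $(X,X,Y)$ or $(Y,Y,X)$ up to rotation. Such a word has exactly two changes and one stay among the three cyclic transitions.

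\emph{At least three active voters.} With one active voter, the others never change. She then always sees the same private vector, so she always gives the same response and cannot be active. With two active voters, shift time so that voter $1$ stays in transition $0\to1$. Voter $2$ then sees the same other ballot at phases $0$ and $1$, so her responses agree: $s_2^1=s_2^2$, meaning she stays in transition $1\to2$. Applying the same argument to voter $1$ gives $s_1^2=s_1^3$, a second stay for voter $1$. This contradicts exactly one stay per voter.

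\emph{Stay patterns with three active voters.} The same propagation holds generally: if voter $i$ stays in transition $t$, then every other active voter $j$ whose other active voters also all stay in $t$ must stay in $t+1$. With three voters, the total number of stays is $3$, spread over three transitions.
\begin{itemize}
\item Pattern $(3,0,0)$ is impossible, because then the whole profile is unchanged in one transition, hence fixed.
\item Pattern $(0,0,3)$ is the same case up to rotation.
\item In a pattern with two stayers in transition $t$ and none in $t+1$, the third voter $j$ sees the two stayers fixed over $t\to t+1$. So $j$ repeats: $s_j^{t+1}=s_j^{t+2}$, a stay in $t+1$, which is a contradiction.
\end{itemize}
Enumerating the compositions of $3$ into three cyclic parts and discarding the above leaves only $(1,1,1)$ and $(2,1,0)$ up to cyclic shift. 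For $(2,1,0)$, the single stayer in the next transition is exactly the voter forced by propagation, so it is consistent.

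I expect this last enumeration to be the main obstacle. The task is to make the propagation rule precise enough to exclude $(2,0,1)$, which is $(1,2,0)$ up to rotation, and to check carefully which orderings survive. The cleanest route is to apply the rule to both voters of a double stay.
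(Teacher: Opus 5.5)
Your proof is correct and follows the paper's argument. You use Lemma~\ref{lem:top-bottom-dominance} to restrict ballots to favorite and middle, note that a nonconstant length-three word has exactly one stay, and propagate stays: a voter whose fellow active voters all stay in transition $t$ must stay in $t+1$. The case $(2,0,1)$ that you flag as the remaining obstacle is already excluded by your third bullet, since it has two stays in transition $0$ and none in transition $1$, so nothing further is needed.
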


\begin{proof}
Lemma~\ref{lem:top-bottom-dominance} restricts every ballot on the orbit to the voter's two highest-ranked candidates. A nonconstant cyclic word of length three on two candidates has exactly one stay. Equal private counts require equal responses. In particular, a voter who is the only mover in a transition must stay in the next transition: the other ballots have not changed between the two source phases.

One active voter faces constant private counts and cannot cycle. With two active voters, a stay by voter $1$ makes voter $2$ the sole mover, so voter $2$ stays next. Voter $1$ is then the sole mover and must stay again, contradicting her unique stay. A transition in which both stay would already be stable.

With three active voters, there are three stays in total, and no transition has three stays. If a transition has two stays, its sole mover stays next, so the stay counts are $(2,1,0)$ up to a cyclic shift. Otherwise every transition has one stay.
\end{proof}

\subsection{Response Comparisons}
\label{app:three-active}
\label{sec:three-active-response-details}

The following comparisons restrict the responses that the two stay patterns can require. We prove them for a single voter and then apply them to the private counts determined by the other active ballots and the fixed inactive ballots.

Fix one voter and rename her candidates so that
\[
X\succ Y\succ Z,
\qquad
(u(X),u(Y),u(Z))=(1,\rho,0),
\qquad 0<\rho<1.
\]
We write $\operatorname{BR}(q)$ for her response to the private count vector $q$. The directed-flow expansion of Lemma~\ref{lem:directed-flow-expansion} will be used throughout.

\begin{lemma}
\label{lem:three-active-middle-score}
If $q_X\geq q_Y$, ballot $X$ strictly beats ballot $Y$ for every $0<p<2/3$. Hence a limiting middle response requires $q_Y>q_X$.
\end{lemma}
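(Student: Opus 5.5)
The plan is a pairing argument on terminal private vectors. Swap the $X$- and $Y$-coordinates, and show that every configuration in which ballot $Y$ does better is outweighed by a more likely mirror configuration in which ballot $X$ does better. Write $D(v)=u(W(v+\mathbf e_X))-u(W(v+\mathbf e_Y))$ for a terminal private vector $v$, and let $\sigma$ be the involution exchanging the $X$- and $Y$-coordinates. Then
\[
U^p(X\mid q)-U^p(Y\mid q)=\sum_v \Pr_q(v)\,D(v),
\]
and I split the sum into orbits $\{v,\sigma v\}$ of $\sigma$.

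\emph{Step 1 (pointwise structure).} Since the two ballots change only the counts of $X$ and $Y$, $D(v)$ depends only on how $v_X$ and $v_Y$ compare with each other and with $v_Z$.
\begin{itemize}
\item If $v_X\geq v_Y$, then $D(v)\geq0$. Ballot $X$ puts $X$ strictly above $Y$, while ballot $Y$ at best adds $Y$ to a tie. The same lottery-monotonicity argument as in Lemma~\ref{lem:top-bottom-dominance} applies.
\item If $v_Y\geq v_X+1$, then $D(v)\leq0$, and $D(v)\neq0$ only in the near-tie cases $v_Y=v_X+1$ or near the $Z$ threshold.
\end{itemize}
The key inequality is $D(v)+D(\sigma v)\geq0$ for every $v$ with $v_Y>v_X$. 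I would check it by a short case list on the position of $v_Z$ relative to $v_X$ and $v_Y$. In each case, $W(\sigma v+\mathbf e_X)$ is the image of $W(v+\mathbf e_Y)$ under $X\leftrightarrow Y$, so the gain at $\sigma v$ replaces $Y$ by $X$ in the same lottery. Because $u(X)>u(Y)$, that gain is at least the loss at $v$.

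\emph{Step 2 (likelihood comparison).} Next I show $\Pr_q(\sigma v)\geq\Pr_q(v)$ whenever $v_Y>v_X$ and $q_X\geq q_Y$. Relabelling the other ballots gives $\Pr_q(\sigma v)=\Pr_{\sigma q}(v)$. I then pass from $\sigma q$ to $q$ by re-designating $q_X-q_Y$ ballots with current value $Y$ as ballots with current value $X$.

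Fix every other ballot, and change one ballot's origin from $Y$ to $X$. Its outcome distribution changes from $(p/2,\,1-p,\,p/2)$ to $(1-p,\,p/2,\,p/2)$ on $(X,Y,Z)$. Because $1-p>p/2$ exactly when $p<2/3$, this shifts mass from $Y$-outcomes to $X$-outcomes, and the likelihood ratio is monotone in $v_X-v_Y$. Conditioning on the $Z$-outcomes and on all remaining ballots, a standard exchange argument (equivalently, monotone likelihood ratio of the resulting binomial-type split between $X$ and $Y$) yields $\Pr_q(v)\leq\Pr_q(\sigma v)$ for $v_Y>v_X$, with equality when $q_X=q_Y$. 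This is where the hypothesis $p<2/3$ enters.

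\emph{Step 3 (combination and strictness).} Pair each $v$ with $v_Y>v_X$ with $\sigma v$, and use Steps 1 and 2:
\[
\Pr(v)D(v)+\Pr(\sigma v)D(\sigma v)\;\geq\;\Pr(v)\bigl(D(v)+D(\sigma v)\bigr)\;\geq\;0,
\]
where the first inequality uses $D(\sigma v)\geq0$. The unpaired vectors with $v_X=v_Y$ contribute nonnegatively. For strictness I take a positive-probability vector with $v_X=v_Y$ that dominates $v_Z$, for example all other ballots trembling into $X$ and $Y$ in equal halves (or with one extra on $Y$ when $|q|$ is odd, paired as in Lemma~\ref{lem:top-bottom-dominance}). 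There ballot $X$ makes $X$ the unique winner, while ballot $Y$ makes $Y$ the winner or a co-winner, so $D>0$. The difference is therefore strictly positive for every $p\in(0,2/3)$. The limiting statement follows from Lemma~\ref{lem:small-p}: $Y$ can be the limiting response only if $q_Y>q_X$.

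The main obstacle is the case analysis behind $D(v)+D(\sigma v)\geq0$ when $Z$ enters the winner set, since three-way and $Y$--$Z$ ties make the lotteries at $v$ and $\sigma v$ asymmetric. If the clean pairing inequality fails in a $Z$-tie case, I would first try refining the pairing to fix the $Z$-count and compare the full conditional sums over $v_X+v_Y=\text{const}$, rather than individual vectors.
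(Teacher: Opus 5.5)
Your skeleton matches the paper's. You pair $v$ with $\sigma v$, show $D\ge0$ on the $X$-heavy side and $D(v)+D(\sigma v)\ge0$, show the $X$-heavy vector is at least as likely as its mirror, and combine as in your Step~3. Step~1 is a finite case check that does go through. Your side remark that $D(v)\le0$ whenever $v_Y\ge v_X+1$ is false: with utilities $(1,\rho,0)$, $v=(r,r+1,c)$ with $c\le r$ gives $D(v)=(1-\rho)/2>0$. You never use that remark, so it does no harm.

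\textbf{The gap is Step~2.} The inequality $\Pr_q(v)\ge\Pr_q(\sigma v)$ for $v_X\ge v_Y$ is where the real work of the lemma lies, and your sketch does not establish it. The ballot-by-ballot route from $\sigma q$ to $q$ fails pointwise. Write $\alpha=1-p$, $\beta=p/2$, and let $R$ be the law of the remaining ballots. Re-designating one ballot from $Y$-origin to $X$-origin changes $\Pr(v)$ by $(\alpha-\beta)\,[R(v-\mathbf e_X)-R(v-\mathbf e_Y)]$, and this can be negative on $\{v_X>v_Y\}$. For example, going from origin counts $(3,1,0)$ to $(4,0,0)$ at $v=(2,1,1)$ changes the probability by $-3\alpha\beta(\alpha-\beta)(\alpha-2\beta)<0$ for $p<1/2$. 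Conditioning on which ballots land on $Z$ does not help directly either. The surviving $Y$-origin ballots can outnumber the surviving $X$-origin ones, so the conditional split favors $Y$, and such events must be paired with their mirror images and weighted appropriately.

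More fundamentally, a monotone likelihood ratio for the biased ballots alone is not enough. Split $\tilde q=S+E$ as follows:
\begin{itemize}
\item $S$ collects the $Z$-origin ballots and $q_Y$ matched pairs, each pair being one $X$-origin and one $Y$-origin ballot, so the law of $S$ is $\sigma$-invariant;
\item $E$ collects the $q_X-q_Y$ unmatched $X$-origin ballots.
\end{itemize}
Then $\Pr_q(v)-\Pr_q(\sigma v)=\sum_{e_X>e_Y}[\Pr(E=e)-\Pr(E=\sigma e)]\,[\Pr(S=v-e)-\Pr(S=v-\sigma e)]$. The first bracket is nonnegative by your $p<2/3$ observation. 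The second is nonnegative only if the law of $S$ is nonincreasing in $|w_X-w_Y|$ on each slice $\{w_Z=c,\ w_X+w_Y=T\}$. Symmetry alone does not give this. A symmetric base with mass only at $w_X-w_Y=\pm10$, plus one biased ballot, yields $\Pr(v_X-v_Y=9)<\Pr(v_X-v_Y=-9)$.

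This slice-wise unimodality of the symmetric part is the missing idea. The paper obtains it from two factorizations:
\begin{itemize}
\item $(\alpha s+\beta t+\beta w)(\beta s+\alpha t+\beta w)=\alpha\beta(s+t)^2+(\alpha-\beta)^2st+\beta(\alpha+\beta)w(s+t)+\beta^2w^2$;
\item $\alpha s+\beta t+\beta w=\beta(s+t)+(\alpha-\beta)s+\beta w$.
\end{itemize}
Together they write the generating function, for each power of $w$, as a nonnegative combination of terms $s^at^b(s+t)^\ell$ with $a\ge b$. The comparison then reduces to $\binom{\ell}{v_X-a}\ge\binom{\ell}{v_Y-a}$. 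The effort you reserved for the $Z$-tie cases of Step~1 belongs here instead.
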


\begin{proof}
Normalize the utilities to $(u(X),u(Y),u(Z))=(1,\rho,0)$, where $0<\rho<1$. Put $\alpha=1-p$ and $\beta=p/2$. For $0<p<2/3$, we have $\alpha>\beta>0$.

First compare the probabilities of terminal private vectors obtained by exchanging the counts of $X$ and $Y$. Let $P_q(v)$ be the probability of $v$, and let $\sigma v$ exchange its $X$- and $Y$-coordinates. We show that
\begin{equation}
P_q(v)\geq P_q(\sigma v)
\qquad\text{whenever }v_X\geq v_Y.
\label{eq:app-C1}
\end{equation}
Write $q_X=q_Y+k$, where $k\geq0$. The probability $P_q(v)$ is the coefficient of $s^{v_X}t^{v_Y}w^{v_Z}$ in
\[
\begin{aligned}
&\bigl[(\alpha s+\beta t+\beta w)
       (\beta s+\alpha t+\beta w)\bigr]^{q_Y}\\
&\qquad{}\cdot(\alpha s+\beta t+\beta w)^k
       [\beta(s+t)+\alpha w]^{q_Z}.
\end{aligned}
\]
The two relevant factors can be written as
\[
\begin{aligned}
&(\alpha s+\beta t+\beta w)(\beta s+\alpha t+\beta w)\\
&\quad=\alpha\beta(s+t)^2+(\alpha-\beta)^2st
+\beta(\alpha+\beta)w(s+t)+\beta^2w^2,
\end{aligned}
\]
and
\[
\alpha s+\beta t+\beta w
=\beta(s+t)+(\alpha-\beta)s+\beta w.
\]
All displayed coefficients are nonnegative. After fixing the power of $w$, expanding these factors therefore gives a sum of nonnegative multiples of terms
\[
s^a t^b(s+t)^\ell,
\qquad a\geq b\geq0,\quad \ell\geq0.
\]
In each such term, the coefficients corresponding to $v$ and $\sigma v$ are
\[
\binom{\ell}{v_X-a},
\qquad
\binom{\ell}{v_Y-a},
\]
respectively. We interpret a binomial coefficient as zero when its lower index is outside $\{0,\ldots,\ell\}$. Since $v_X+v_Y=a+b+\ell$, the two lower indices satisfy
\[
v_X-a\geq v_Y-a,
\qquad
(v_X-a)+(v_Y-a)=\ell-(a-b)\leq\ell.
\]
The first index is therefore at least as close to $\ell/2$ as the second. Binomial coefficients increase toward the middle and have equal values at reflected indices, so the first coefficient is at least the second. Adding these comparisons proves \eqref{eq:app-C1}.

Next compare the payoffs. Put
\[
D(v)=u(W(v+\mathbf e_X))-u(W(v+\mathbf e_Y)).
\]
For $v_X\geq v_Y$, all nonzero possibilities are listed below:
\begin{equation}
\begin{array}{c|c|c|c}
v_X-v_Y&v_Z-v_X&D(v)&D(\sigma v)\\ \hline
0&\leq0&1-\rho&1-\rho\\
0&1&(1-\rho)/2&(1-\rho)/2\\
1&\leq-1&(1-\rho)/2&(1-\rho)/2\\
1&0&(2-\rho)/3&(1-2\rho)/3\\
1&1&1/2&-\rho/2\\
\geq2&\in\{0,1\}&1/2&-\rho/2.
\end{array}
\label{eq:app-C2}
\end{equation}
Any omitted entry is zero. In every row, $D(v)\geq0$ and $D(v)+D(\sigma v)\geq0$. For $v_X>v_Y$, the combined contribution of $v$ and $\sigma v$ is therefore
\[
\begin{aligned}
&P_q(v)D(v)+P_q(\sigma v)D(\sigma v)\\
&\quad=[P_q(v)-P_q(\sigma v)]D(v)\\
&\qquad{}+P_q(\sigma v)[D(v)+D(\sigma v)]
\geq0,
\end{aligned}
\]
using \eqref{eq:app-C1}. When $v_X=v_Y$, the vector is counted once and its contribution is also nonnegative.

Finally, at least one contribution is strictly positive. If $|q|=2r$, take $v=(r,r,0)$; if $|q|=2r+1$, take $v=(r,r,1)$, in $(X,Y,Z)$-coordinates. In both cases, \eqref{eq:app-C2} gives $D(v)>0$, and full support gives $P_q(v)>0$. Thus $U^p(X\mid q)>U^p(Y\mid q)$.
\end{proof}

\begin{lemma}
\label{lem:three-active-score-obstructions}
A top response $X$ is impossible if
\[
q_Z\geq q_Y\geq q_X+2
\qquad\text{or}\qquad
\min\{q_Y,q_Z\}\geq q_X+3.
\]
\end{lemma}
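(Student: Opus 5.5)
The plan is to show that whenever either hypothesis holds, ballot $Y$ strictly beats ballot $X$ in the limit. By Lemma~\ref{lem:top-bottom-dominance} the voter never chooses $Z$, so this rules out a top response. The method is the earliest-order argument of Lemma~\ref{lem:general-two-candidates}, with coefficients computed via Lemma~\ref{lem:directed-flow-expansion}. Write $D(v)=u(W(v+\mathbf e_Y))-u(W(v+\mathbf e_X))$ at terminal private vectors $v$. I will find an order $\eta^\ast$ with three properties: $D$ vanishes on every flow of lower order; $D\ge0$ on every flow of order $\eta^\ast$, or at least the order-$\eta^\ast$ sum $H_{\eta^\ast}(Y,X;q)$ is positive; and some flow of order $\eta^\ast$ gives $D>0$.

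\emph{Step 1: when each ballot matters.} Ballot $X$ can change the winner set only if $X$ enters it, which requires $v_X+1\ge\max\{v_Y,v_Z\}$. If instead $X$ stays out of $W(v+\mathbf e_X)$, a comparison as in Lemma~\ref{lem:top-bottom-dominance} gives $D(v)\ge0$. The two ballots differ only in the $X$- and $Y$-coordinates, and adding $Y$ can only remove $Z$ or add $Y$, both weakly good since $Y\succ Z$. Hence every negative contribution comes from a terminal vector with $v_Y-v_X\le1$ and $v_Z-v_X\le1$. By table~\eqref{eq:app-B2} (relabelled), each tremble lowers any single gap by at most two. Moreover, a tremble lowers the sum of the two gaps $v_Y-v_X$ and $v_Z-v_X$ by at most three. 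This gives a lower bound $\eta_X$ on the order at which $X$ can matter:
\[
\eta_X\ \ge\ \max\Bigl\{\bigl\lceil\tfrac{q_Z-q_X-1}{2}\bigr\rceil,\ \bigl\lceil\tfrac{q_Y-q_X-1}{2}\bigr\rceil,\ \bigl\lceil\tfrac{q_Y+q_Z-2q_X-2}{3}\bigr\rceil\Bigr\}.
\]

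\emph{Step 2: a strictly positive $Y$-flow.} In the first case, $q_Z\ge q_Y$. I use $\lceil(q_Z-q_Y)/2\rceil$ trembles $Z\mapsto Y$, or $\lfloor\cdot\rfloor$ of them, according to parity, to bring $Y$ level with $Z$ or one behind it. Since $q_Y\ge q_X+2$, $X$ stays at least two below $Y$ there, so ballot $Y$ yields $Y$ alone or a $Y$--$Z$ tie, while ballot $X$ yields $Z$ or a worse lottery; thus $D>0$. In the second case I do the same with the roles of $Y$ and $Z$ chosen by which of them leads. If $q_Y\ge q_Z$, no tremble or $Y\mapsto Z$ trembles make $Y$ pivotal against $Z$. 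Since $\min\{q_Y,q_Z\}\ge q_X+3$, $X$ remains far behind. Comparing with Step 1, this strict order $\eta_Y$ is at most $\eta_X$ in all cases and strictly smaller in most of them. Where $\eta_Y<\eta_X$, the earliest-order argument finishes immediately: every contribution up to order $\eta_Y$ is nonnegative, and one is strict.

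\emph{Step 3, the main obstacle: ties $\eta_X=\eta_Y$.} In the first case this happens when $q_Z-q_Y$ is odd and $q_Y=q_X+2$. Flows such as $\eta(Z\mapsto X)$, combined with one mixed tremble, can then bring $X$ to the top at the same order at which $Y$ first becomes decisive, and negative terms appear. Here I would list the order-$\eta^\ast$ flows explicitly, as in the proofs of Theorem~\ref{thm:condensed-one-step} and Lemma~\ref{lem:paired-response}, with their multiplicities and payoff differences. I would then show that the sum is positive. This should hold because each negative flow, in which $X$ joins a tie and gains at most $(1-\rho)/3$ or $(1-\rho)/2$, is paired with a flow of at least equal multiplicity, one that moves the same number of ballots toward $Y$ instead of $X$, and gains at least $\rho/2$ or $(1-\rho)/2$ for $Y$. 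The pairing would reuse the $X\leftrightarrow Y$ reflection and the monotonicity of binomial coefficients from Lemma~\ref{lem:three-active-middle-score}, because $q_Y>q_X$ makes reflected vectors favouring $Y$ at least as likely. If the pairing fails in some boundary configuration, I would fall back on computing that coefficient directly.
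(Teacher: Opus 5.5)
Your strategy is the paper's: up to the first order at which ballot $Y$ can affect the $Y$--$Z$ race, ballot $X$ never enters the winner set. So every contribution to the $Y$-minus-$X$ margin is nonnegative and one is strict. However, Step~2 has an off-by-one error, and this is what produces the ``tie'' case of Step~3.

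When $q_Z\ge q_Y$, the first strictly positive $Y$-flow uses $\lfloor (q_Z-q_Y)/2\rfloor$ trembles $Z\mapsto Y$ for \emph{both} parities. For an odd gap $2m+1$, the flow $m(Z\mapsto Y)$ leaves $Z$ exactly one ahead of $Y$. Ballot $Y$ then turns $\{Z\}$ into $\{Y,Z\}$ and gains $\rho/2$. Taking $\lceil\cdot\rceil=m+1$ trembles overshoots to $v_Y=v_Z+1$, where both ballots give $\{Y\}$ and the difference is zero.

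With the correct order, your own Step~1 bound separates the two orders in every case:
\begin{itemize}
\item If $q_Z\ge q_Y$, either hypothesis gives $q_Y\ge q_X+2$. Hence $\eta_X\ge\lceil (q_Z-q_X-1)/2\rceil\ge\lceil (q_Z-q_Y+1)/2\rceil=\lfloor (q_Z-q_Y)/2\rfloor+1$.
\item If $q_Y>q_Z$, put $g=q_Y-q_Z$. The second hypothesis gives $q_Y-q_X\ge g+3$, so $\eta_X\ge\lceil (g+2)/2\rceil=\lceil g/2\rceil+1$. Meanwhile $\lceil g/2\rceil(Y\mapsto Z)$ is already strictly positive.
\end{itemize}
In particular, your alleged tie $q=(q_X,q_X+2,q_X+2m+3)$ is decided at order $m$, while $X$ cannot matter before order $m+1$.

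So Step~3 is vacuous. This matters, because as written it is not a proof: the pairing of negative with positive flows is only asserted, and you concede it might fail in boundary configurations. Correct the count and drop Step~3, and you have exactly the paper's argument. The paper packages the separation as the single inequality $q_c-q_X\ge 2\eta+2$, where $c$ is the leader of $\{Y,Z\}$ and $\eta$ is the $Y$--$Z$ pivot order. Since each tremble lowers a gap by at most two, $c$ stays at least two votes above $X$ through $\eta$ trembles.
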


\begin{proof}
We record two elementary ways in which a proposed top response can fail. If $v_X+1<\max\{v_Y,v_Z\}$, adding $X$ cannot make $X$ a winner. The comparison between ballots $X$ and $Y$ is then
\begin{equation}
\begin{array}{c|c|c|c}
v_Y-v_Z&W(v+\mathbf e_X)&W(v+\mathbf e_Y)&
u(W(v+\mathbf e_Y))-u(W(v+\mathbf e_X))\\ \hline
\leq-2&\{Z\}&\{Z\}&0\\
-1&\{Z\}&\{Y,Z\}&\rho/2\\
0&\{Y,Z\}&\{Y\}&\rho/2\\
\geq1&\{Y\}&\{Y\}&0.
\end{array}
\label{eq:app-C3}
\end{equation}
Put
\[
\eta=
\begin{cases}
\lfloor(q_Z-q_Y)/2\rfloor,&q_Z\geq q_Y,\\
\lceil(q_Y-q_Z)/2\rceil,&q_Y>q_Z,
\end{cases}
\]
and let $c=Z$ in the first case and $c=Y$ in the second. If
\begin{equation}
q_c-q_X\geq2\eta+2,
\label{eq:app-C4}
\end{equation}
then after at most $\eta$ trembles the score of $c$ remains at least two above that of $X$, so adding $X$ cannot put $X$ in the winner set. With fewer than $\eta$ trembles, adding $Y$ cannot alter the winner set either, by the choice of $\eta$. With exactly $\eta$ trembles, \eqref{eq:app-C3} weakly favors $Y$ for every flow and strictly favors it for the flow that moves $\eta$ ballots from the higher of $Y,Z$ to the lower. Hence $Y$ asymptotically strictly beats $X$. Applying \eqref{eq:app-C4}, and also observing that a $Y$--$Z$ tie favors $Y$ already at order zero, proves the statement once its two hypotheses are checked. If $q_Z\geq q_Y\geq q_X+2$, then $q_Z-q_X\geq(q_Z-q_Y)+2 \geq2\lfloor(q_Z-q_Y)/2\rfloor+2$. Under the second hypothesis, if $q_Y>q_Z$ and $g=q_Y-q_Z$, then $q_Y-q_X\geq g+3\geq2\lceil g/2\rceil+2$. If $q_Z\geq q_Y$, then instead $q_Z-q_X\geq(q_Z-q_Y)+3\geq2\lfloor(q_Z-q_Y)/2\rfloor+2$.
\end{proof}

For the remaining comparisons, we keep $\rho$ as a variable when calculating the coefficients of $U^p(X\mid q)-U^p(Y\mid q)$. For each private vector considered below, these calculations give a threshold $\theta(q)\in[0,1]$: ballot $X$ strictly beats $Y$ for sufficiently small positive $p$ when $\rho<\theta(q)$, while $Y$ strictly beats $X$ when $\rho>\theta(q)$. The threshold is read directly from the first coefficient that is not the zero expression. If this coefficient has a zero in $(0,1)$, that zero is the threshold. A threshold of zero means that $Y$ beats $X$ for every $\rho\in(0,1)$; a threshold of one means that $X$ beats $Y$. At $\rho=\theta(q)$, later coefficients may decide the comparison. Thus a strict top response requires $\rho\leq\theta(q)$, and a strict middle response requires $\rho\geq\theta(q)$.

Condition \eqref{eq:app-C4} gives threshold zero. At the first contributing tremble order, ballot $X$ still cannot make $X$ a winner, whereas ballot $Y$ can replace a unique $Z$-winner by a $Y$--$Z$ tie, or replace that tie by a unique $Y$-winner. By \eqref{eq:app-C3}, the leading $X-Y$ coefficient is therefore a positive multiple of $-\rho$, so $\theta(q)=0$.

For threshold one, suppose $q_Y>q_X$ and either $q_X>q_Z$ or $q_X=q_Z$ with $q_Y-q_X$ odd. Put $\eta=\lfloor(q_Y-q_X)/2\rfloor$. After at most $\eta$ trembles,
\[
v_Y-v_Z\geq q_Y-q_Z-2\eta\geq1.
\]
Thus $Z$ cannot win under either contemplated ballot. With fewer than $\eta$ trembles, $Y$ also remains at least two above $X$, so both ballots leave $Y$ as the unique winner. At order $\eta$, every payoff difference is a nonnegative multiple of $1-\rho$, and the flow $\eta(Y\mapsto X)$ gives a strictly positive contribution. The leading coefficient is therefore a positive multiple of $1-\rho$, giving $\theta(q)=1$.

We now calculate the remaining thresholds. All private vectors below use $(X,Y,Z)$-coordinates. One tremble changes the two gaps as follows:
\begin{equation}
\begin{array}{c|rrrrrr}
&Y\mapsto X&Y\mapsto Z&Z\mapsto X&X\mapsto Z&X\mapsto Y&Z\mapsto Y\\ \hline
v_Y-v_X&-2&-1&-1&1&2&1\\
v_Y-v_Z&-1&-2&1&-1&1&2.
\end{array}
\label{eq:app-C5}
\end{equation}
Each tremble reduces either lead of $Y$ by at most two. For each row below, before the stated order $Y$ remains the unique winner under both contemplated ballots, so every earlier coefficient vanishes. At the stated order, the gap changes in \eqref{eq:app-C5} give exactly the contributing flows listed below. Summing their multiplicities and payoff differences using Lemma~\ref{lem:directed-flow-expansion} gives the displayed coefficients. Each coefficient is shown up to a strictly positive factor independent of $\rho$.

Here $h\in\mathbb N_0$. The first three rows have $k\geq2$, and the fourth has $k\geq1$.
\begingroup
\scriptsize
\begin{equation}
\begin{array}{c|c|l}
q&\text{Order}&\text{Coefficient}\\ \hline
(h,h+2k+1,h+2)&k&
(h+k+2)-[2(k+1)h+(k+2)^2]\rho\\
(h,h+2k+2,h+1)&k+1&
2(k+2)h+k^2+6k+7\\
&&-[(4k+7)h+2k^2+10k+11]\rho\\
(h+1,h+2k+1,h+1)&k&
2(k+1)h+(k+1)(k+4)\\
&&-[(2k+3)h+k^2+6k+6]\rho\\
(h+1,h+2k,h+2)&k-1&1-2\rho.
\end{array}
\label{eq:app-C6}
\end{equation}
\endgroup
The contributing flows for these four rows are, respectively:
\begin{enumerate}
\item $k(Y\mapsto Z)$, $(k-1)(Y\mapsto Z)+(Y\mapsto X)$, $k(Y\mapsto X)$, \\
$(k-1)(Y\mapsto Z)+(X\mapsto Z)$.
\item $(k+1)(Y\mapsto Z)$, $k(Y\mapsto Z)+(Y\mapsto X)$, \\
$k(Y\mapsto X)+(Z\mapsto X)$, $k(Y\mapsto X)+(Y\mapsto Z)$, \\
$(k+1)(Y\mapsto X)$, $k(Y\mapsto Z)+(X\mapsto Z)$.
\item $k(Y\mapsto Z)$, $(k-1)(Y\mapsto X)+(Z\mapsto X)$, \\
$(k-1)(Y\mapsto X)+(Y\mapsto Z)$, $k(Y\mapsto X)$.
\item $(k-1)(Y\mapsto Z)$, $(k-1)(Y\mapsto X)$.
\end{enumerate}
A flow requiring a ballot from a candidate with zero support is omitted. In the fourth row with $k=1$, both listed expressions describe the same empty flow, which is counted once; its margin is $(1-2\rho)/3$.

To illustrate the calculation, the four flows in the first row have multiplicities
\[
\begin{aligned}
&\binom{h+2k+1}{k},\qquad
k\binom{h+2k+1}{k},\\
&\binom{h+2k+1}{k},\qquad
h\binom{h+2k+1}{k-1},
\end{aligned}
\]
and respective payoff differences
\[
-\frac{\rho}{2},\qquad
-\frac{\rho}{2},\qquad
\frac{1-\rho}{2},\qquad
-\frac{\rho}{2}.
\]
Multiplying by the common probability factor $2^{-k}$ and adding gives the leading coefficient
\[
\begin{aligned}
\frac{\binom{h+2k+1}{k}}{2^{k+1}(h+k+2)}
\bigl[&(h+k+2)\\
&-[2(k+1)h+(k+2)^2]\rho\bigr],
\end{aligned}
\]
which is the first row of \eqref{eq:app-C6} with its positive factor restored.

For $k=1$, the first three rows must be calculated separately because some terminal vectors give three-way ties. The coefficients, again up to positive factors independent of $\rho$, are
\begingroup
\scriptsize
\begin{equation}
\begin{array}{c|c|l}
q&\text{Order}&\text{Coefficient}\\ \hline
(h,h+3,h+2)&1&
2(h+3)-(10h+21)\rho\\
(h,h+4,h+1)&2&
4(4h+9)-(29h+57)\rho\\
(h+1,h+3,h+1)&1&
(11h+27)-(13h+33)\rho.
\end{array}
\label{eq:app-C7}
\end{equation}
\endgroup
The contributing flows in these three rows are:
\begin{enumerate}
\item $Y\mapsto Z$, $Y\mapsto X$, $X\mapsto Z$.
\item $2(Y\mapsto Z)$, $(Y\mapsto X)+(Z\mapsto X)$, $(Y\mapsto X)+(Y\mapsto Z)$, $2(Y\mapsto X)$, $(X\mapsto Z)+(Y\mapsto Z)$.
\item $Z\mapsto X$, $Y\mapsto Z$, $Y\mapsto X$.
\end{enumerate}

The first three rows of \eqref{eq:app-C6}, using \eqref{eq:app-C7} when $k=1$, therefore give
\begin{equation}
\begin{array}{c|c}
\text{Row}&\text{Threshold}\\ \hline
\text{First}&\theta(q)<1/2\\
\text{Second}&1/2<\theta(q)<2/3\\
\text{Third}&\theta(q)>2/3.
\end{array}
\label{eq:app-C8}
\end{equation}
These bounds can be checked by substituting $\rho=1/2$ and $\rho=2/3$ into the displayed coefficients. For $k\geq2$, the first coefficient at $1/2$ is
\[
-\frac{k(2h+k+2)}2<0.
\]
The second coefficient is positive at $1/2$ and negative at $2/3$, with values
\[
\frac{h+2k+3}{2}>0,
\qquad
-\frac{(k+1)(2h+k+1)}3<0.
\]
The third coefficient at $2/3$ is
\[
\frac{k(2h+k+3)}3>0.
\]
For $k=1$, the corresponding values are
\[
-\frac{6h+9}{2},\qquad
\frac{3h+15}{2},\qquad
-\frac{10h+6}{3},\qquad
\frac{7h+15}{3},
\]
with the same respective signs. Every displayed coefficient decreases as $\rho$ increases, proving \eqref{eq:app-C8}.

\begin{lemma}
\label{lem:three-active-conditional-transfer}
Let $c\in\{Y,Z\}$ and $q_c\geq1$. With $q'=q+\mathbf e_X-\mathbf e_c$, it is impossible that
\[
\operatorname{BR}(q)=X,
\qquad
\operatorname{BR}(q')=Y.
\]
\end{lemma}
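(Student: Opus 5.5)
The plan is to prove the contrapositive in monotone form: whenever ballot $X$ strictly beats ballot $Y$ at $q$ for all small $p>0$, it also does so at $q'=q+\mathbf e_X-\mathbf e_c$. By Lemma~\ref{lem:top-bottom-dominance}, $\operatorname{BR}(q)=X$ means exactly that $U^p(X\mid q)-U^p(Y\mid q)>0$ for small $p$, and $\operatorname{BR}(q')=Y$ would require the opposite sign at $q'$. So the monotone form suffices.

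\textbf{Coupling.} I would couple the two tremble distributions through one distinguished ballot. Write $q=r+\mathbf e_c$ and $q'=r+\mathbf e_X$, where $r$ is the private vector with that ballot removed; $r$ is well defined because $q_c\geq1$. All ballots in $r$ tremble identically under both beliefs. The distinguished ballot has the following laws:
\begin{itemize}
\item at $q$, it stays at $c$ with probability $1-p$ and goes to each other candidate with probability $p/2$;
\item at $q'$, it stays at $X$ with probability $1-p$, with the same $p/2$ rule for the other candidates.
\end{itemize}
The two laws differ only by moving mass $1-\tfrac{3p}{2}$ from the outcome $c$ to the outcome $X$. Let $D(w)=u(W(w+\mathbf e_X))-u(W(w+\mathbf e_Y))$ and let $\widetilde r$ be the random terminal vector of the other ballots. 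Then
\[
\Delta^p(X,Y\mid q')-\Delta^p(X,Y\mid q)=\Bigl(1-\tfrac{3p}{2}\Bigr)\,\mathbb E\bigl[D(\widetilde r+\mathbf e_X)-D(\widetilde r+\mathbf e_c)\bigr].
\]
It therefore suffices to show the pointwise inequality
\[
D(w+\mathbf e_X)\geq D(w+\mathbf e_c)\quad\text{for every terminal } w \text{ and } c\in\{Y,Z\}.
\]
Granting it, the right-hand side is nonnegative for every $p<2/3$, so a positive $\Delta^p(X,Y\mid q)$ for small $p$ forces a positive $\Delta^p(X,Y\mid q')$. That gives $\operatorname{BR}(q')=X$, the desired contradiction.

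\textbf{Pointwise check.} The main work is a finite case analysis, in the style of tables \eqref{eq:app-C2} and \eqref{eq:app-C3}. With normalized utilities $(1,\rho,0)$, I would organize the cases by the gaps $w_X-w_Y$ and $w_X-w_Z$, since only near-ties matter. The comparison amounts to $W(w+2\mathbf e_X)$ against $W(w+\mathbf e_X+\mathbf e_c)$, and $W(w+\mathbf e_X+\mathbf e_Y)$ against $W(w+\mathbf e_c+\mathbf e_Y)$. Far from ties both $D$ values vanish, so only boundedly many gap patterns remain, each to be checked for $0<\rho<1$.

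The heuristic is that an extra vote for $X$ can only make the ballot $X$ more useful: it lets that ballot create or secure an $X$ win. It can also only make the ballot $Y$ less useful, since $Y$ now faces a stronger $X$. For $c=Z$ this is intuitive because $Z$ is the worst candidate.

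I expect the hardest cases to be $c=Y$ with three-way near-ties. There, removing a $Y$ vote can turn a $Y$--$Z$ tie into a unique $Z$ win under one ballot, and the $\rho/2$ versus $\rho/3$ lottery values must be compared carefully.

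\textbf{Fallback.} If some configuration violates the pointwise inequality, I would not give up the coupling. Instead I would pair each violating $w$ with a companion configuration of equal tremble order and at least equal multiplicity, as in the symmetrization argument of Lemma~\ref{lem:three-active-middle-score}. I would then show that the sum over each pair is nonnegative, which still makes the expectation nonnegative and completes the argument.
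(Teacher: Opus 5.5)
\textbf{The proof has a genuine gap.} Your coupling identity is correct. However, the pointwise inequality it reduces to is false, and so is the monotonicity $\Delta^p(X,Y\mid q')\ge\Delta^p(X,Y\mid q)$ you are trying to prove.

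\emph{Counterexample to the pointwise step.} Take $c=Y$ and $q=(2,1,1)$ in $(X,Y,Z)$ coordinates, so $r=(2,0,1)$ and $q'=(3,0,1)$. At $w=r$,
\[
D(w+\mathbf e_X)=u(W(4,0,1))-u(W(3,1,1))=0,
\qquad
D(w+\mathbf e_Y)=u(W(3,1,1))-u(W(2,2,1))=\tfrac{1-\rho}{2}>0 .
\]
The case $c=Z$, $w=(2,1,0)$ gives the same two values. The heuristic fails because an $X$ ballot has value only when it is pivotal, and an extra vote for an $X$ that already leads makes it less pivotal. The $Y$ half of the heuristic fails as well. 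For $c=Z$ and $w=(0,3,4)$ one gets $D(w+\mathbf e_X)=-\rho/2<0=D(w+\mathbf e_Z)$, because removing a $Z$ vote lets a $Y$ ballot tie $Z$.

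\emph{The fallback cannot repair this.} The expectation itself is negative. In the first example its order-zero term comes from the single configuration $w=r$, so $\Delta^p(X,Y\mid q')-\Delta^p(X,Y\mid q)=-\tfrac{1-\rho}{2}+\mathcal O(p)$. This holds even though both margins are positive, since $\Delta^p(X,Y\mid q')=\tfrac34(2-\rho)p+\mathcal O(p^2)$. There is no companion configuration at that order to pair with.

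\emph{Restricting the regime does not help.} The lemma only has content when $q'_Y>q'_X$, by Lemma~\ref{lem:three-active-middle-score}. Even there, no $\rho$-independent inequality holds. For $c=Y$, $q=(0,3,2)$ and $q'=(1,2,2)$, the difference has order-zero term $D(1,2,2)-D(0,3,2)=(1-2\rho)/3$, which is negative for $\rho>1/2$. The lemma holds in this case only because $\theta(q)=2/7<1/2=\theta(q')$, so $X$ already loses at $q$ for those $\rho$.

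\emph{What is actually missing.} The statement concerns the sign of a leading coefficient, and moving one ballot can shift the first decisive order in either direction. What you need is a comparison of those leading coefficients, not a finer pointwise check. This is what the paper does. After imposing $q'_Y>q'_X$, it splits on $\delta=q_Z-q_X$ and on the parity of $g=q_Y-q_X$, as in \eqref{eq:app-C9}--\eqref{eq:app-C10}. In each case it shows one of the following:
\begin{itemize}
\item $\theta(q)=0$, via Lemma~\ref{lem:three-active-score-obstructions} and \eqref{eq:app-C4};
\item $\theta(q')=1$;
\item $\theta(q)<\theta(q')$, from the coefficient tables \eqref{eq:app-C6}--\eqref{eq:app-C8}.
\end{itemize}
Any rescue of your coupling would have to reproduce this order-by-order threshold analysis.
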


\begin{proof}
The middle response and Lemma~\ref{lem:three-active-middle-score} imply $q'_Y>q'_X$. Put $q=(h,h+g,h+\delta)$. If $c=Z$, then $g\geq2$ and
\begin{equation}
\begin{array}{c|c|c}
\text{case}&\theta(q)&\theta(q')\\ \hline
\delta\leq1&\text{not needed}&1\\
\delta\geq3&0&\text{not needed}\\
\delta=2,\ g\text{ even}&0&1\\
\delta=2,\ g=2k+1,\ k\geq1&<1/2&>2/3.
\end{array}
\label{eq:app-C9}
\end{equation}
If $c=Y$, then $g\geq3$ and
\begin{equation}
\begin{array}{c|c|c}
\text{case}&\theta(q)&\theta(q')\\ \hline
\delta\leq0&\text{not needed}&1\\
\delta\geq3&0&\text{not needed}\\
\delta=1,\ g\text{ odd}&\text{not needed}&1\\
\delta=1,\ g=2(k+1),\ k\geq1&<2/3&>2/3\\
\delta=2,\ g\text{ even}&0&\text{not needed}\\
\delta=2,\ g=2k+1,\ k\geq1&<1/2&1/2.
\end{array}
\label{eq:app-C10}
\end{equation}
For a zero entry with $\delta\geq3$, one of the two score conditions in Lemma~\ref{lem:three-active-score-obstructions} holds and hence so does \eqref{eq:app-C4}. For a zero entry with $\delta=2$ and even $g$, condition \eqref{eq:app-C4} holds with $\eta=(g-2)/2$. Every one entry has $q'_X>q'_Z$, or $q'_X=q'_Z$ with an odd $q'_Y-q'_X$, so the preceding comparison applies. The remaining bounds follow from \eqref{eq:app-C6}--\eqref{eq:app-C8}. The cases are disjoint and exhaustive. If $\theta(q)=0$, the top response at $q$ is impossible because $\rho>0$. If $\theta(q')=1$, the middle response at $q'$ is impossible because $\rho<1$. In every remaining row, $\theta(q)<\theta(q')$. A top response at $q$ would require $\rho\leq\theta(q)$, while a middle response at $q'$ would require $\rho\geq\theta(q')$, which is impossible even if a leading coefficient vanishes at one of the thresholds.
\end{proof}

We require one final comparison. The only new leading coefficients are those at, for $h\in\mathbb N_0$,
\[
(h,h+2j+1,h+1),\qquad (h,h+2j+2,h).
\]
Their thresholds are respectively $1/2$ and
\begin{equation}
\frac{11h+16}{13h+20}\quad(j=0),
\qquad
\frac{2(j+2)h+(j+2)(j+3)}{(2j+5)h+(j+2)(j+4)}\quad(j\geq1).
\label{eq:app-C11}
\end{equation}
For the first vector, when $j\geq1$, the complete first contributing order consists of $j(YZ)$ and $j(YX)$; for $j=0$, the order-zero coefficient is $(1-2\rho)/3$. For the second it consists, when $j\geq1$, of $(j+1)(YZ)$, $j(YX)+ZX$, $j(YX)+YZ$, and $(j+1)(YX)$; when $j=0$ it consists of $YX,YZ,ZX$. The flow formula gives \eqref{eq:app-C11}. Subtracting $1/2$ has positive numerator $9h+12$ for $j=0$ and $(2j+3)h+(j+2)^2$ for $j\geq1$.

\begin{lemma}
\label{lem:three-active-middle-transfer}
If $q_Y\geq\max\{q_X,q_Z\}$ and $q_Z\geq1$, it is impossible that
\[
\operatorname{BR}(q)=X,
\qquad
\operatorname{BR}(q+\mathbf e_Y-\mathbf e_Z)=Y.
\]
\end{lemma}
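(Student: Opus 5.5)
The plan is to follow the threshold method already used for Lemma~\ref{lem:three-active-conditional-transfer}. Write $q=(h,h+g,h+\delta)$ in $(X,Y,Z)$-coordinates. The hypothesis $q_Y\geq\max\{q_X,q_Z\}$ means $g\geq0$ and $\delta\leq g$, and $q_Z\geq1$ makes $q'=q+\mathbf e_Y-\mathbf e_Z=(h,h+g+1,h+\delta-1)$ a valid private vector. In every case I will show one of three things: $\theta(q)=0$, so a top response at $q$ is impossible since $\rho>0$; or $\theta(q')=1$, so a middle response at $q'$ is impossible since $\rho<1$; or $\theta(q)<\theta(q')$. The last option suffices because a top response at $q$ needs $\rho\leq\theta(q)$ and a middle response at $q'$ needs $\rho\geq\theta(q')$, even if a leading coefficient vanishes at a threshold. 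I split into cases by $\delta$ and then by the parity of $g$.

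\emph{Large $\delta$.} If $\delta\geq3$, then $\delta\leq g$ gives $\min\{q_Y,q_Z\}=q_Z\geq q_X+3$. The second obstruction of Lemma~\ref{lem:three-active-score-obstructions} then applies, so $\theta(q)=0$.

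\emph{$\delta=2$.} Here $g\geq2$. If $g$ is even, I apply condition~\eqref{eq:app-C4}. When $g=2$ this uses $c=Z$ with $\eta=0$, where the requirement is $q_Z-q_X=2\geq2$. When $g>2$ it uses $c=Y$ with $\eta=(g-2)/2$, where the requirement is $q_Y-q_X=g\geq2\eta+2$. Either way $\theta(q)=0$. If $g=2k+1$ with $k\geq1$, then $q$ is the first row of~\eqref{eq:app-C6}, or of~\eqref{eq:app-C7} when $k=1$, so $\theta(q)<1/2$. Also $q'=(h,h+2k+2,h+1)$ is the second row, so $\theta(q')>1/2$ by~\eqref{eq:app-C8}.

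\emph{$\delta=1$.} Here $g\geq1$. If $g=2j$ is even, then $q'=(h,h+2j+1,h)$ has $q'_X=q'_Z$ and an odd gap $q'_Y-q'_X$. The threshold-one criterion stated before~\eqref{eq:app-C5} then gives $\theta(q')=1$. If $g=2j+1$, then $q=(h,h+2j+1,h+1)$ is the first new vector preceding the lemma, with threshold $1/2$. Its image $q'=(h,h+2j+2,h)$ is the second, with threshold~\eqref{eq:app-C11}. That threshold exceeds $1/2$, because the difference has positive numerator $9h+12$ when $j=0$ and $(2j+3)h+(j+2)^2$ when $j\geq1$. This includes the boundary case $g=1$.

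\emph{$\delta\leq0$.} Then $q'_Z=h+\delta-1<h=q'_X$ and $q'_Y>q'_X$. The threshold-one criterion applies again, so $\theta(q')=1$.

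These cases are disjoint and exhaust $\delta\leq g$. The main obstacle is the $\delta=1$, odd-$g$ case. It depends on the new coefficients behind~\eqref{eq:app-C11}, which I would verify first by listing the contributing flows from the gap table~\eqref{eq:app-C5} and summing them with Lemma~\ref{lem:directed-flow-expansion}. The case $j=0$ needs the most care, because there the flows create three-way ties and one tie-valued flow must be counted only once. A secondary check is that the bound $1/2<\theta$ in the $\delta=2$ case holds uniformly in $h$, which follows from the values at $\rho=1/2$ already recorded after~\eqref{eq:app-C8}.
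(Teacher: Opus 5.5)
Your proposal is correct and follows the paper's proof essentially step for step. It uses the same split by $\delta$ and the parity of $g$. Threshold zero at $q$ comes from Lemma~\ref{lem:three-active-score-obstructions} or condition~\eqref{eq:app-C4}, and threshold one at $q'$ from the threshold-one criterion. The strict separations $\theta(q)<1/2<\theta(q')$ and $\theta(q)=1/2<\theta(q')$ come from \eqref{eq:app-C6}--\eqref{eq:app-C8} and \eqref{eq:app-C11}. Your only deviation is a slightly more explicit treatment of the $g=2$ subcase of~\eqref{eq:app-C4}, where $c=Z$ and $\eta=0$; you also omit the unneeded value $\theta(q')=1/2$ that the paper records there.
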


\begin{proof}
Put $q=(h,h+g,h+\delta)$ and $q'=q+\mathbf e_Y-\mathbf e_Z$. The hypotheses give $g\geq\max\{0,\delta\}$ and $h+\delta\geq1$. The preceding calculations give
\begin{equation}
\begin{array}{c|c|c}
\text{case}&\theta(q)&\theta(q')\\ \hline
\delta\leq0&\text{not needed}&1\\
\delta\geq3&0&\text{not needed}\\
\delta=2,\ g\text{ even}&0&1/2\\
\delta=2,\ g\text{ odd}&<1/2&>1/2\\
\delta=1,\ g\text{ even}&\text{not needed}&1\\
\delta=1,\ g\text{ odd}&1/2&>1/2.
\end{array}
\label{eq:app-C12}
\end{equation}
When $\delta=2$ and $g=2k+1$, we have $k\geq1$ and the thresholds are the first and second rows of \eqref{eq:app-C8}. When $\delta=1$ and $g$ is odd, \eqref{eq:app-C11} applies, including its separate $g=1$ value. For $\delta\geq3$, both $Y,Z$ are at least three above $X$, giving threshold zero. For $\delta=2$ and even $g$, condition \eqref{eq:app-C4} gives threshold zero at $q$, and the first family preceding \eqref{eq:app-C11} gives threshold $1/2$ at $q'$. Each threshold-one entry has $q'_X>q'_Z$, or $q'_X=q'_Z$ with an odd $q'_Y-q'_X$.

If $\theta(q)=0$, the top response is impossible because $\rho>0$. If $\theta(q')=1$, the middle response is impossible because $\rho<1$. Otherwise $\theta(q)\leq1/2<\theta(q')$, whereas the prescribed responses require $\rho\leq\theta(q)$ and $\rho\geq\theta(q')$. This is impossible. In particular, the zero-count boundary $q=(0,1,1)$ is included: \eqref{eq:app-C11} gives thresholds $1/2$ and $4/5$.
\end{proof}

\endgroup

\subsection{Excluding the Two Stay Patterns}

Now we turn to the remaining stay patterns and show that also they indeed cannot happen.

\begin{proof}
Suppose a period-three cycle has at most three active voters. Lemma~\ref{lem:condensed-active-path} excludes fewer than three and leaves only the stay patterns $(1,1,1)$ and $(2,1,0)$. These numbers count the active voters who keep their ballots in each of the three transitions.

Let $z=(z_A,z_B,z_C)$ be the fixed counts of the inactive voters. Write $s_i$ for the candidate voter $i$ supports twice and $o_i$ for the candidate she supports once. These are her favorite and middle candidate, in some order. Each table entry lists the other two active ballots and the required next ballot. For example, $\{A,B\}\mapsto C$ means that the voter must respond with $C$ to the private vector $z+\mathbf e_A+\mathbf e_B$. All braces denote multisets, and column $t$ gives the response from time $t$ to time $t+1$, with time read modulo three.

Consider first the stay pattern $(1,1,1)$. Label the voters so that voters $1,2,3$ stay in transitions $0,1,2$, respectively. Their ballot paths are
\[
(s_1,s_1,o_1),\qquad
(o_2,s_2,s_2),\qquad
(s_3,o_3,s_3).
\]
The required responses are
\begin{equation}
\begin{array}{c|ccc}
&t=0&t=1&t=2\\ \hline
1&\{o_2,s_3\}\mapsto s_1
 &\{s_2,o_3\}\mapsto o_1
 &\{s_2,s_3\}\mapsto s_1\\
2&\{s_1,s_3\}\mapsto s_2
 &\{s_1,o_3\}\mapsto s_2
 &\{o_1,s_3\}\mapsto o_2\\
3&\{s_1,o_2\}\mapsto o_3
 &\{s_1,s_2\}\mapsto s_3
 &\{o_1,s_2\}\mapsto s_3.
\end{array}
\label{eq:app-C13}
\end{equation}
The pairs $(s_i,o_i)$ must be distinct. If $(s_2,o_2)=(s_3,o_3)$, the first two entries of voter $1$'s row have identical other ballots but different required responses. Equality of the pairs of voters $1,3$ or $1,2$ gives the same contradiction for voter $2$ or $3$.

The exhaustive reduction in Subsection~\ref{app:three-active-t1} applies Lemma~\ref{lem:three-active-conditional-transfer} to these responses. After relabeling candidates, it leaves only
\[
\begin{array}{c|c}
((s_1,o_1),(s_2,o_2),(s_3,o_3))
&\text{repeated candidates}\\ \hline
((A,B),(B,C),(C,A))
&\text{all middle-ranked}\\
((A,B),(C,A),(B,C))
&\text{all favorite}.
\end{array}
\]
In the first case, the middle responses of voters $1,2,3$ face $\{C,C\}$, $\{A,A\}$, and $\{B,B\}$, respectively. Lemma~\ref{lem:three-active-middle-score} therefore requires
\[
z_A>z_B,\qquad z_B>z_C,\qquad z_C>z_A,
\]
which is impossible. In the second case, their middle responses face $\{C,C\}$, $\{B,B\}$, and $\{A,A\}$, giving
\[
z_B>z_A,\qquad z_A>z_C,\qquad z_C>z_B.
\]
This is also impossible, excluding $(1,1,1)$.

It remains to consider $(2,1,0)$. Label the voters so that voters $1,2$ stay in transition $0$ and voter $3$ stays in transition $1$. Their paths are
\[
(s_1,s_1,o_1),\qquad
(s_2,s_2,o_2),\qquad
(o_3,s_3,s_3),
\]
and their required responses are
\begin{equation}
\begin{array}{c|ccc}
&t=0&t=1&t=2\\ \hline
1&\{s_2,o_3\}\mapsto s_1
 &\{s_2,s_3\}\mapsto o_1
 &\{o_2,s_3\}\mapsto s_1\\
2&\{s_1,o_3\}\mapsto s_2
 &\{s_1,s_3\}\mapsto o_2
 &\{o_1,s_3\}\mapsto s_2\\
3&\{s_1,s_2\}\mapsto s_3
 &\{s_1,s_2\}\mapsto s_3
 &\{o_1,o_2\}\mapsto o_3.
\end{array}
\label{eq:app-C14}
\end{equation}
Suppose $s_1$ is the favorite. If $s_1=s_3$, changing the other ballot $o_3$ to $s_3$ takes voter $1$ from a favorite response in column $0$ to a middle response in column $1$. If $s_1=s_2$, changing $o_2$ to $s_2$ gives the same forbidden comparison from column $2$ to column $1$. Lemma~\ref{lem:three-active-conditional-transfer} excludes both possibilities. If $s_1$ is the middle candidate, the reverse comparisons give the corresponding restrictions on $o_1$. Thus
\[
\begin{aligned}
s_1\text{ favorite}
&\Longrightarrow s_1\notin\{s_2,s_3\},\\
s_1\text{ middle}
&\Longrightarrow o_1\notin\{o_2,o_3\}.
\end{aligned}
\]
The same restrictions hold with voters $1,2$ exchanged.

Voter $3$ requires different responses to $\{s_1,s_2\}$ and $\{o_1,o_2\}$, so these multisets must differ. If $s_3$ is her favorite, Lemma~\ref{lem:three-active-conditional-transfer} also forbids obtaining $\{o_1,o_2\}$ from $\{s_1,s_2\}$ by changing one ballot to $s_3$. If $s_3$ is her middle candidate, it forbids the reverse change by moving one ballot to $o_3$. In each comparison, the replaced ballot is present, so the lemma's positive-source-count condition is satisfied.

Finally, every prescribed middle response requires $q_Y\geq q_X+1$ by Lemma~\ref{lem:three-active-middle-score}. Applying this inequality to the private counts in \eqref{eq:app-C14} gives restrictions on $z$. Together, these conditions leave the following six cases, up to candidate relabeling and exchange of voters $1,2$. Subsection~\ref{app:three-active-enumeration} proves this reduction exhaustive. Candidate pairs and rankings are listed in voter order; for example, $BAC$ means $B\succ A\succ C$.

\begingroup
\small
\begin{equation}
\begin{array}{c|c|c|l}
\text{row}&\text{candidate pairs}&\text{rankings}
&\text{required inequalities}\\ \hline
1&((A,B),(B,C),(C,B))&(ABC,BCA,BCA)
&\begin{aligned}[t]
z_C&\geq z_B+2,\\
z_B&\geq z_A
\end{aligned}\\[3pt]
2&((A,B),(C,B),(B,A))&(ABC,CBA,ABC)
&\begin{aligned}[t]
z_B&\geq z_A+2,\\
z_B&\geq z_C
\end{aligned}\\[3pt]
3&((A,B),(B,C),(B,A))&(BAC,CBA,BAC)
&\begin{aligned}[t]
z_A&\geq z_B+2,\\
z_B&\geq z_C+1
\end{aligned}\\[3pt]
4&((A,B),(B,C),(A,C))&(BAC,BCA,CAB)
&\begin{aligned}[t]
z_A&\geq z_C\geq z_B+1,\\
z_A&\geq z_B+2
\end{aligned}\\[3pt]
5&((A,B),(B,C),(A,C))&(BAC,BCA,ACB)
&z_C\geq z_A\geq z_B+2\\[3pt]
6&((A,B),(A,C),(B,A))&(BAC,CAB,BAC)
&\begin{aligned}[t]
z_A&\geq z_B+2,\\
z_A&\geq z_C+1.
\end{aligned}
\end{array}
\label{eq:app-C15}
\end{equation}
\endgroup

Four cases are excluded directly by Lemma~\ref{lem:three-active-score-obstructions}. The following entries of \eqref{eq:app-C14} require a favorite response $X$. Private vectors are written in the focal voter's $(X,Y,Z)$ order.
\[
\begin{array}{c|c|c|c}
\text{Row}&\text{Voter, Time}&(X,Y,Z)&q\\ \hline
1&(1,0)&(A,B,C)&(z_A,z_B+2,z_C)\\
3&(2,1)&(C,B,A)&(z_C,z_B+1,z_A+1)\\
4&(2,0)&(B,C,A)&(z_B,z_C+1,z_A+1)\\
5&(2,0)&(B,C,A)&(z_B,z_C+1,z_A+1).
\end{array}
\]
The inequalities in \eqref{eq:app-C15} give $q_Z\geq q_Y\geq q_X+2$ in rows $1,3,4$ and $\min\{q_Y,q_Z\}\geq q_X+3$ in row $5$. In each case, the required favorite response is therefore impossible.

In row $2$, voter $2$ ranks $C\succ B\succ A$. Columns $0$ and $1$ of \eqref{eq:app-C14} require
\[
\{A,A\}\mapsto C,
\qquad
\{A,B\}\mapsto B.
\]
The first private vector, in $(C,B,A)$ order, is
\[
q=(z_C,z_B,z_A+2).
\]
The inequalities in \eqref{eq:app-C15} give $q_Y\geq\max\{q_X,q_Z\}$ and $q_Z=z_A+2>0$. Changing one other ballot from $A$ to $B$ increases the middle score and decreases the bottom score. The required change from the favorite response $C$ to the middle response $B$ contradicts Lemma~\ref{lem:three-active-middle-transfer}.

In row $6$, voter $2$ ranks $C\succ A\succ B$. Columns $1$ and $0$ require
\[
\{A,B\}\mapsto C,
\qquad
\{A,A\}\mapsto A.
\]
The first private vector, in $(C,A,B)$ order, is
\[
q=(z_C,z_A+1,z_B+1).
\]
Again, \eqref{eq:app-C15} gives $q_Y\geq\max\{q_X,q_Z\}$ and $q_Z=z_B+1>0$. Changing one other ballot from $B$ to $A$ contradicts the same lemma.

Thus all six cases are impossible, excluding $(2,1,0)$. Neither three-active stay pattern can occur, and Lemma~\ref{lem:condensed-active-path} already excludes fewer active voters. Every period-three cycle therefore has at least four active voters. We remark that no restriction was placed on the inactive counts $z$.
\end{proof}

We now verify that the proof covers all possible choice of candidate pairs and rankings: first for the stay pattern $(1,1,1)$, then for $(2,1,0)$.

\begingroup

\subsection[The (1,1,1) Reduction]{The $(1,1,1)$ Reduction}
\label{app:three-active-t1}

Use the paths and required responses in \eqref{eq:app-C13}. For size-two multisets, $P\xrightarrow{c}Q$ means that $Q$ is obtained from $P$ by replacing one ballot not equal to $c$ by a ballot for $c$. This arrow describes a change of another voter's ballot, not the focal voter's response. The comparison need not follow the chronological order of the trajectory. If $s_i$ is the favorite, Lemma~\ref{lem:three-active-conditional-transfer} forbids a change toward $s_i$ from a multiset prescribing $s_i$ to one prescribing $o_i$. If $s_i$ is the middle candidate, it forbids a change toward $o_i$ from a multiset prescribing $o_i$ to one prescribing $s_i$. The replaced ballot is present in the source multiset, as the lemma requires.

Relabel candidates so that $(s_1,o_1)=(A,B)$. Since the three ordered pairs are distinct, there are five choices for $(s_2,o_2)$ and then four for $(s_3,o_3)$. Cyclically rotating voters together with time, and then relabeling candidates to restore $(s_1,o_1)=(A,B)$, groups these $20$ choices as follows. Every entry lists the ordered pairs $((s_2,o_2),(s_3,o_3))$; braces will be reserved for multisets of ballots.
\begingroup
\small
\begin{equation}
\begin{array}{c|c|l}
\text{Class}&\text{Representative}&\text{Other members}\\ \hline
1&((A,C),(B,A))&((B,A),(B,C)),\ ((C,A),(A,C))\\
2&((A,C),(B,C))&((C,A),(C,B)),\ ((C,B),(A,C))\\
3&((A,C),(C,A))&((B,A),(A,C)),\ ((B,C),(B,A))\\
4&((A,C),(C,B))&((B,C),(A,C)),\ ((C,B),(C,A))\\
5&((B,A),(C,A))&((B,C),(C,B)),\ ((C,B),(B,A))\\
6&((B,A),(C,B))&((C,A),(B,A)),\ ((C,B),(B,C))\\
7&((B,C),(C,A))&-\\
8&((C,A),(B,C))&-
\end{array}
\label{eq:app-C16}
\end{equation}
\endgroup

The first six classes are excluded by the following comparisons in their representatives. In each row, the first change is forbidden if $s_i$ is the favorite and the second if $s_i$ is the middle candidate. The required responses are those in \eqref{eq:app-C13}.
\begingroup
\small
\[
\begin{array}{c|c|l}
\text{Class}&i&\text{Forbidden changes}\\ \hline
1&1&\{A,B\}\xrightarrow{A}\{A,A\},\quad
     \{A,A\}\xrightarrow{B}\{A,B\}\\
2&1&\{B,C\}\xrightarrow{A}\{A,C\},\quad
     \{A,C\}\xrightarrow{B}\{B,C\}\\
3&3&\{A,A\}\xrightarrow{C}\{A,C\},\quad
     \{A,C\}\xrightarrow{A}\{A,A\}\\
4&3&\{A,A\}\xrightarrow{C}\{A,C\},\quad
     \{A,C\}\xrightarrow{B}\{A,B\}\\
5&2&\{A,C\}\xrightarrow{B}\{B,C\},\quad
     \{B,C\}\xrightarrow{A}\{A,C\}\\
6&2&\{A,C\}\xrightarrow{B}\{B,C\},\quad
     \{B,C\}\xrightarrow{A}\{A,C\}
\end{array}
\]
\endgroup

For example, in class $1$, voter $1$ must respond with $A$ to $\{A,B\}$ and with $B$ to $\{A,A\}$. If $A$ is her favorite, replacing the other ballot $B$ by $A$ gives a forbidden favorite-to-middle response change. If $B$ is her favorite, the reverse comparison gives the same contradiction. Thus neither ranking is possible. The remaining five rows work in the same way.

In class $7$, the pairs are $((A,B),(B,C),(C,A))$. If any repeated candidate is the corresponding voter's favorite, the forbidden changes for voters $1,2,3$ are, respectively, $\{B,C\}\xrightarrow{A}\{A,B\}$, $\{A,C\}\xrightarrow{B}\{B,C\}$, and $\{A,B\}\xrightarrow{C}\{A,C\}$. Hence every $s_i$ must be the middle candidate.

In class $8$, the pairs are $((A,B),(C,A),(B,C))$. If any repeated candidate is the corresponding voter's middle candidate, the forbidden changes for voters $1,2,3$ are, respectively, $\{C,C\}\xrightarrow{B}\{B,C\}$, $\{B,B\}\xrightarrow{A}\{A,B\}$, and $\{A,A\}\xrightarrow{C}\{A,C\}$. Hence every $s_i$ must be the favorite.

The first six classes contain three choices each, and the last two contain one each. All $20$ choices therefore appear in \eqref{eq:app-C16}, and the only possibilities not excluded by Lemma~\ref{lem:three-active-conditional-transfer} are the two used in the proof above.

\subsection{Exhaustiveness of Table~\ref{eq:app-C15}}
\label{app:three-active-enumeration}

Use the paths and restrictions from \eqref{eq:app-C14} and retain the multiset-change notation above. If exactly one of $s_1,s_2$ is the favorite, exchange voters $1,2$ so that $s_1$ is the middle candidate and $s_2$ is the favorite. Then relabel candidates so that $(s_1,o_1)=(A,B)$. Every prescribed middle response gives an integer inequality through Lemma~\ref{lem:three-active-middle-score}.

Every choice not excluded below appears in the following ten rows. The second column lists the ordered pairs $((s_2,o_2),(s_3,o_3))$. The last three columns indicate whether each repeated candidate is its voter's favorite or middle candidate.
\begingroup
\small
\begin{equation}
\begin{array}{c|c|ccc}
\text{Row}&((s_2,o_2),(s_3,o_3))&s_1&s_2&s_3\\ \hline
1&((B,C),(C,B))&\text{Favorite}&\text{Favorite}&\text{Middle}\\
2&((C,A),(B,A))&\text{Favorite}&\text{Favorite}&\text{Middle}\\
3&((C,B),(B,A))&\text{Favorite}&\text{Favorite}&\text{Middle}\\
4&((C,B),(B,C))&\text{Favorite}&\text{Favorite}&\text{Middle}\\
5&((A,C),(B,A))&\text{Middle}&\text{Middle}&\text{Favorite}\\
6&((A,C),(C,A))&\text{Middle}&\text{Middle}&\text{Favorite}\\
7&((B,C),(B,A))&\text{Middle}&\text{Middle}&\text{Favorite}\\
8&((C,A),(A,C))&\text{Middle}&\text{Middle}&\text{Favorite}\\
9&((B,C),(A,C))&\text{Middle}&\text{Favorite}&\text{Favorite}\\
10&((B,C),(A,C))&\text{Middle}&\text{Favorite}&\text{Middle}
\end{array}
\label{eq:app-C17}
\end{equation}
\endgroup

We check that no choice is missing. Suppose first that $s_1,s_2$ are both favorites. The restrictions on voters $1,2$ force $\{s_2,s_3\}=\{B,C\}$. If $(s_2,s_3)=(B,C)$, then $o_2=C$, since otherwise voter $3$ would face identical private counts in columns $0$ and $2$ of \eqref{eq:app-C14}, with different required responses. If $s_3$ is the favorite, voter $3$ is excluded by $\{A,B\}\xrightarrow{C}\{B,C\}$. If $s_3$ is the middle candidate, the choice $o_3=A$ is excluded by $\{B,C\}\xrightarrow{A}\{A,B\}$. Thus $o_3=B$, giving row $1$.

If $(s_2,s_3)=(C,B)$, then $o_2\in\{A,B\}$ and $o_3\in\{A,C\}$. If $s_3$ is the favorite, the four choices $(o_2,o_3)=(A,A),(B,A),(A,C),(B,C)$ give, respectively,
\[
\begin{array}{c|l}
(o_2,o_3)&\text{Forced inequalities}\\ \hline
(A,A)&z_A\geq z_B+1\\
(B,A)&z_A\geq z_B+3\\
(A,C)&z_A\geq z_C\geq z_B+2\\
(B,C)&z_B\geq z_C\geq z_B+3
\end{array}
\]
The first three contradict $z_B\geq z_A$, which follows from voter $1$'s middle response. The fourth itself is clearly impossible. If $s_3$ is the middle candidate, the choice $(o_2,o_3)=(A,C)$ is excluded by $\{A,B\}\xrightarrow{C}\{A,C\}$ for voter $3$, leaving rows $2$--$4$.

Suppose next that $s_1,s_2$ are both middle candidates. Their restrictions force $\{o_2,o_3\}=\{A,C\}$. If $(o_2,o_3)=(C,A)$ and $s_3$ is the middle candidate, the four choices for $(s_2,s_3)$ give
\[
\begin{array}{c|l}
(s_2,s_3)&\text{Forced inequalities}\\ \hline
(A,B)&z_B\geq z_A+3,\quad z_A\geq z_B+2\\
(A,C)&z_C\geq z_A+3,\quad z_A\geq z_C+2\\
(B,B)&z_B\geq z_A+1,\quad z_A\geq z_B+2\\
(B,C)&z_C\geq z_A+2,\quad z_A\geq z_B+1,\quad z_B\geq z_C+1
\end{array}
\]
All four are impossible. If $s_3$ is the favorite, the choice $(s_2,s_3)=(B,C)$ is excluded by $\{A,B\}\xrightarrow{C}\{B,C\}$ for voter $3$. The other choices give rows $5$--$7$.

If $(o_2,o_3)=(A,C)$, then $s_2=C$, since choosing $s_2=B$ would make voter $3$ face identical private counts in columns $0$ and $2$ of \eqref{eq:app-C14}, with different required responses. If $s_3$ is the middle candidate, the two choices $s_3=A$ and $s_3=B$ give, respectively,
\[
\begin{array}{c|l}
s_3&\text{Forced inequalities}\\ \hline
A&z_A\geq z_C+1,\quad z_C\geq z_A+1\\
B&z_A\geq z_B+1,\quad z_B\geq z_C+2,\quad z_C\geq z_A+1
\end{array}
\]
Both are impossible. If $s_3$ is the favorite, the choice $s_3=B$ is excluded by $\{A,C\}\xrightarrow{B}\{A,B\}$ for voter $3$. Only $s_3=A$ remains, giving row $8$.

Finally, suppose that $s_1$ is the middle candidate and $s_2$ is the favorite. The restrictions give $s_2\in\{B,C\}$. If $s_2=B$, then $o_2=C$, since choosing $o_2=B$ would make voter $3$ face identical private counts in columns $0$ and $2$ of \eqref{eq:app-C14}, with different required responses, and the remaining restrictions give $\{s_3,o_3\}=\{A,C\}$. The choice $(s_3,o_3)=(C,A)$ is excluded by $\{A,B\}\xrightarrow{C}\{B,C\}$ when $s_3$ is the favorite and by $\{B,C\}\xrightarrow{A}\{A,B\}$ when it is the middle candidate. The other choice, $(s_3,o_3)=(A,C)$, gives rows $9$ and $10$.

If $s_2=C$, the restrictions force $o_2=A$ and leave $(s_3,o_3)\in\{(A,C),(B,A),(B,C)\}$. Each possibility is excluded as follows. Every displayed ballot change concerns voter $3$.
\begingroup
\small
\[
\begin{array}{c|c|l}
(s_3,o_3)&s_3&\text{Contradiction}\\ \hline
(A,C)&\text{Favorite}&z_A+2\leq z_C\leq z_A+1\\
(A,C)&\text{Middle}&\{A,B\}\xrightarrow{C}\{A,C\}\\
(B,A)&\text{Favorite}&\{A,C\}\xrightarrow{B}\{A,B\}\\
(B,A)&\text{Middle}&z_A\geq z_B+1,\quad z_B\geq z_A+2\\
(B,C)&\text{Favorite}&\{A,C\}\xrightarrow{B}\{A,B\}\\
(B,C)&\text{Middle}&\{A,B\}\xrightarrow{C}\{A,C\}
\end{array}
\]
\endgroup
This exhausts all choices and proves that every remaining case appears in \eqref{eq:app-C17}.

The ten rows reduce to the six rows of \eqref{eq:app-C15} by the following relabelings. The first and last columns give the row numbers in \eqref{eq:app-C17} and \eqref{eq:app-C15}, respectively. Perform any indicated exchange of voters before renaming candidates. Candidate renaming applies simultaneously to the candidate pairs, rankings, and inactive counts.
\begingroup
\small
\begin{equation}
\begin{array}{c|c|c|c}
\text{Row}&\text{Swap }1,2&\text{Candidate renaming}&\text{Row}\\ \hline
1&\text{No}&\text{None}&1\\
2&\text{Yes}&A\mapsto B,\ B\mapsto C,\ C\mapsto A&1\\
3&\text{No}&\text{None}&2\\
4&\text{Yes}&A\leftrightarrow C&2\\
5&\text{No}&\text{None}&6\\
6&\text{Yes}&B\leftrightarrow C&6\\
7&\text{No}&\text{None}&3\\
8&\text{Yes}&A\mapsto B,\ B\mapsto C,\ C\mapsto A&3\\
9&\text{No}&\text{None}&5\\
10&\text{No}&\text{None}&4
\end{array}
\label{eq:app-C18}
\end{equation}
\endgroup

Applying Lemma~\ref{lem:three-active-middle-score} to the required middle responses gives the forced inactive gaps in \eqref{eq:app-C15}. Thus every possible choice of paths and rankings is either excluded above or reduces to one of its six rows, whose contradictions were proved in the preceding subsection.

\endgroup

\section{Proof of Theorem~\ref{ex:six-active-cycle}}
\label{app:six-active-details}
We check that every ballot prescribed by Theorem~\ref{ex:six-active-cycle} is a strict best response at its source profile. This includes voters who keep their ballot in a round, as well as those who switch it.

At sources $S^0,S^1,S^2$, the polls are $(2,2,2)$, $(1,1,4)$, and $(0,0,6)$, respectively. For one representative of each symmetry class, the table compares the prescribed ballot with the other of her favorite and middle candidates. Private vectors use $(A,B,C)$ coordinates. The side voters have utilities $(1,\rho,0)$ in ranking order, where $9/11<\rho<1$; the two $C$-types have utilities $(1,\gamma,0)$ in ranking order, where $1/2<\gamma<1$. Every displayed coefficient is strictly positive throughout these intervals.
\begingroup
\small
\[
\begin{array}{c|c|c|c|c|c}
\text{Source}&\text{Type}&q&\text{Margin}&\text{Order}&
\text{Coefficient}\\ \hline
S^0&A\succ C\succ B&(1,2,2)&C-A&0&(2\rho-1)/3\\
S^0&C\succ A\succ B&(2,2,1)&A-C&0&(2\gamma-1)/3\\
S^1&A\succ C\succ B&(1,1,3)&C-A&1&(11\rho-9)/4\\
S^1&C\succ A\succ B&(0,1,4)&C-A&2&(19-12\gamma)/4\\
S^2&A\succ C\succ B&(0,0,5)&A-C&2&5(1-\rho)/4\\
S^2&C\succ A\succ B&(0,0,5)&C-A&2&5(1-\gamma)/4
\end{array}
\]
\endgroup

The $B$-side comparisons follow by exchanging $A$ and $B$. Lemma~\ref{lem:top-bottom-dominance} makes the favorite ballot strictly better than the least-preferred ballot. A prescribed favorite therefore beats both alternatives, and a prescribed middle ballot that beats the favorite also beats the least-preferred ballot. The table establishes every voter's unique response at all three profiles.

\section{Proof of Example~\ref{ex:truthful-four-cycle}}
\label{app:four-cycle-margins}

We verify Example~\ref{ex:truthful-four-cycle} with the normalized utilities and ballot paths in Figure~\ref{fig:truthful-four-cycle}. The complete profiles, in voter order, are
\[
\begin{aligned}
S^0&=(A,A,B,B,C,C),\\
S^1&=(B,B,B,A,B,C),\\
S^2&=(A,A,B,B,C,A),\\
S^3&=(A,A,B,A,B,C),\qquad S^4=S^0.
\end{aligned}
\]
At each source profile, voter $i$ removes her current ballot and uses $q_i^t=n^t-\mathbf e_{s_i^t}$. Each private vector has total mass five. Table~\ref{tab:four-cycle-margins} compares her prescribed next ballot with the other of her favorite and middle candidates. Voters $1$ and $2$ have identical utilities and ballots, so their comparisons coincide.

Apply Lemma~\ref{lem:directed-flow-expansion} to each displayed private vector and margin. An entry $(d,a)$ means that $H_\eta(c,c';q_i^t)=0$ for every $\eta<d$ and $2^{-d}H_d(c,c';q_i^t)=a>0$. Thus the margin has first nonzero term $ap^d$. Each sum includes all feasible directed flows at its order, with their multinomial multiplicities and uniform averages over winner sets, including mixed trembles and ties.

\begin{table}[!ht]
\centering
\caption{All favorite--middle comparisons for the truthful period-four orbit. Private counts are in $(A,B,C)$ order. The last column gives the first nonzero order and coefficient $(d,a)$ of the chosen-minus-alternative margin, so the margin is $ap^d+\mathcal O(p^{d+1})$. Every preceding coefficient is zero.}
\label{tab:four-cycle-margins}
\begin{tabular}{ccccc}
\toprule
Round & Voter & Private counts & Margin & $(d,a)$\\
\midrule
0 & 1,2 & $(1,2,2)$ & $B-A$ & $(0,1/9)$\\
0 & 3 & $(2,1,2)$ & $B-A$ & $(0,1/6)$\\
0 & 4 & $(2,1,2)$ & $A-B$ & $(0,2/9)$\\
0 & 5 & $(2,2,1)$ & $B-C$ & $(0,1/9)$\\
0 & 6 & $(2,2,1)$ & $C-A$ & $(0,1/6)$\\
\midrule
1 & 1,2 & $(1,3,1)$ & $A-B$ & $(1,5/12)$\\
1 & 3 & $(1,3,1)$ & $B-A$ & $(1,35/16)$\\
1 & 4 & $(0,4,1)$ & $B-A$ & $(2,9/4)$\\
1 & 5 & $(1,3,1)$ & $C-B$ & $(1,5/12)$\\
1 & 6 & $(1,4,0)$ & $A-C$ & $(1,1/4)$\\
\midrule
2 & 1,2 & $(2,2,1)$ & $A-B$ & $(0,1/3)$\\
2 & 3 & $(3,1,1)$ & $B-A$ & $(1,25/16)$\\
2 & 4 & $(3,1,1)$ & $A-B$ & $(1,1/24)$\\
2 & 5 & $(3,2,0)$ & $B-C$ & $(0,1/3)$\\
2 & 6 & $(2,2,1)$ & $C-A$ & $(0,1/6)$\\
\midrule
3 & 1,2 & $(2,2,1)$ & $A-B$ & $(0,1/3)$\\
3 & 3 & $(3,1,1)$ & $B-A$ & $(1,25/16)$\\
3 & 4 & $(2,2,1)$ & $B-A$ & $(0,1/6)$\\
3 & 5 & $(3,1,1)$ & $C-B$ & $(1,7/12)$\\
3 & 6 & $(3,2,0)$ & $C-A$ & $(1,1/16)$\\
\bottomrule
\end{tabular}
\end{table}

Every displayed coefficient is positive. Lemma~\ref{lem:top-bottom-dominance} makes the favorite ballot strictly better than the least-preferred ballot. Hence a prescribed favorite beats both alternatives; a prescribed middle ballot that beats the favorite also beats the least-preferred ballot. The table therefore verifies every voter's unique limiting response at all four source profiles. The initial ballots are the respective favorites, the four complete profiles are distinct, and the initial count vector $(2,2,2)$ occurs only once in the four-round block. Both the complete profile and the sequence of polls consequently have period four. Exactly five voters are active, and the winner sets are $\{A,B,C\},\{B\},\{A\},\{A\}$.

The comparisons do not require the exact utility values in the figure. With each voter's favorite and least-preferred utilities normalized to $1$ and $0$, let $\rho_i$ be her middle utility. The same enumeration verifies the same strict responses throughout the nonempty open region
\[
\begin{gathered}
\frac12<\rho_i<\frac9{11}\quad(i=1,2,5),
\qquad 0<\rho_3<\frac12,\\
\frac9{11}<\rho_4<1,
\qquad 0<\rho_6<\frac27.
\end{gathered}
\]
Indeed, each first coefficient is affine in the relevant middle utility, nonnegative at both stated interval bounds, and positive at at least one of them; every preceding coefficient vanishes throughout the interval. It is therefore positive throughout the open interval. The specified normalized utilities lie strictly inside these intervals. For each utility profile in this region, Lemma~\ref{lem:small-p} also gives the same cycle under $F_p$ for all sufficiently small positive $p$.

\section{Proof of Lemma~\ref{lem:paired-response}}\label{app:paired-response}

\begingroup

This Appendix gives the complete flow calculations for Lemma~\ref{lem:paired-response}. Only signs of leading coefficients matter, so common strictly positive factors are suppressed throughout. All coordinates are in $(A,B,C)$ order.

\begin{proof}[Proof of Lemma~\ref{lem:paired-response}]
Let $h=2s+1\geq7$. A split pair casts $(A,B)$ and a pooled pair casts $(C,C)$. Let $x$ pairs be split and put $d=h-x$. Including the $h$ added $C$-voters, the poll is
\[
(x,x,2(h-x)+h)=(x,x,x+3d).
\]
By symmetry, it is enough to consider a voter with ranking $A\succ C\succ B$ and normalized utilities
\[
 (u(A),u(C),u(B))=(1,\rho,0),\qquad \frac12<\rho<1.
\]
Her private vector is
\begin{equation}
\begin{aligned}
q^{A}&=(x-1,x,x+3d)&&\text{if split, so }x>0,\\
q^{C}&=(x,x,x+3d-1)&&\text{if pooled, so }x<h.
\end{aligned}
\label{eq:app-E1}
\end{equation}
Voting for $A$ strictly dominates voting for $B$ by Lemma~\ref{lem:top-bottom-dominance}, so only $A$ and $C$ need to be compared.

At terminal private counts $v$, both contemplated ballots leave $C$ the unique winner whenever
\[
 v_A\leq v_C-2,
 \qquad
 v_B\leq v_C-1.
\]
A nonzero contribution therefore requires
\begin{equation}
 v_A\geq v_C-1
 \qquad\text{or}\qquad
 v_B\geq v_C.
\label{eq:app-E2}
\end{equation}
The decrease in each lead of $C$ caused by one directed tremble is
\begin{equation}
\begin{array}{c|rrrrrr}
&C\mapsto A&C\mapsto B&B\mapsto A&A\mapsto B&A\mapsto C&B\mapsto C\\ \hline
v_C-v_A&2&1&1&-1&-2&-1\\
v_C-v_B&1&2&-1&1&-1&-2
\end{array}
\label{eq:app-E3}
\end{equation}
No tremble reduces either lead by more than two. A required reduction of $2r$ therefore takes at least $r$ trembles, and at that order every tremble must reduce the lead by two. For a required reduction of $2r-1$, either all $r$ trembles reduce the lead by two or exactly one reduces it by one. Together with~\eqref{eq:app-E3}, this gives the complete first contributing flow lists below and excludes every earlier order pointwise.

If $x$ is odd, $d$ is even. When $d=0$, every pair is split and the no-tremble $A-C$ margin at $q^A=(x-1,x,x)$ is $(1-2\rho)/3<0$, so all pairs pool.

For even $d\geq2$, put $t=3d/2$. At $q^C$, the first contributing order is $t-1$, with the single flow $(t-1)(C\mapsto A)$ and payoff difference $(1-\rho)/2>0$. A pooled pair therefore splits. At $q^A$, the first contributing order is $t$, with flows
\[
 t(C\mapsto A),
 \qquad
 t(C\mapsto B)
\]
Their multiplicities are equal and their payoff differences are $(1-\rho)/2$ and $-\rho/2$, respectively. The sum has sign $1-2\rho<0$, so a split pair pools. Thus every pair changes configuration when $x$ is odd.

Now let $x=2k$, so $d$ is odd. For $d\geq3$, put
\[
 t:=\frac{3d-1}{2}.
\]
At $q^C=(x,x,x+2t)$, the first contributing order is $t$. Dividing multiplicities by $\binom{x+2t}{t}$ gives
\begin{equation}
\begin{array}{c|c|c}
\text{Flow}&\text{Relative multiplicity}&
 \text{Payoff difference}\\ \hline
t(C\mapsto A)&1&1-\rho\\
(t-1)(C\mapsto A)+(C\mapsto B)&t&(1-\rho)/2\\
t(C\mapsto B)&1&-\rho/2\\
(t-1)(C\mapsto A)+(B\mapsto A)&\frac{tx}{x+t+1}&(1-\rho)/2
\end{array}
\label{eq:app-E4}
\end{equation}
The sum is a positive multiple of $\kappa_{h-k}-\rho$, where
\begin{equation}
 \kappa_{h-k}:=
 \frac{2+t+\dfrac{tx}{x+t+1}}
      {3+t+\dfrac{tx}{x+t+1}}.
\label{eq:app-E5}
\end{equation}
Thus a pooled pair splits exactly when $\rho<\kappa_{h-k}$, provided equality is excluded. A row of zero multiplicity is absent.

If $x>0$, the split voter's private counts are $q^A=(x-1,x,x+2t+1)$ and the first contributing order is $t+1$. Dividing multiplicities by $\binom{x+2t+1}{t+1}$ gives
\begin{equation}
\begin{array}{c|c|c}
\text{Flow}&\text{Relative multiplicity}&
 \text{Payoff difference}\\ \hline
(t+1)(C\mapsto A)&1&1-\rho\\
t(C\mapsto A)+(C\mapsto B)&t+1&(1-\rho)/2\\
t(C\mapsto A)+(B\mapsto A)&\frac{x(t+1)}{x+t+1}&(1-\rho)/2\\
(t+1)(C\mapsto B)&1&-\rho/2\\
t(C\mapsto B)+(C\mapsto A)&t+1&-\rho/2\\
t(C\mapsto B)+(A\mapsto B)&\frac{(x-1)(t+1)}{x+t+1}&-\rho/2
\end{array}
\label{eq:app-E6}
\end{equation}
This sum is a positive multiple of $\kappa_k-\rho$, where
\begin{equation}
 \kappa_k:=
 \frac{t+3+\dfrac{x(t+1)}{x+t+1}}
      {2t+5+\dfrac{(2x-1)(t+1)}{x+t+1}}.
\label{eq:app-E7}
\end{equation}
Thus a split pair remains split exactly when $\rho<\kappa_k$, provided equality is excluded. When $x=0$, no split-pair comparison exists; $\kappa_0=1/2$ is only the lower utility bound.

When $d=1$, some generic flow expressions coincide and three-way ties change the payoffs. Each distinct flow must be counted once. The complete calculations at $q^C=(x,x,x+2)$ and $q^A=(x-1,x,x+3)$ are
\begingroup\small
\begin{equation}
\begin{array}{c|c|c}
\multicolumn{3}{c}{q^{C}:\ \text{order }1}\\
\text{Flow}&\text{Multiplicity}&\text{Payoff difference}\\ \hline
C\mapsto A&x+2&1-\rho\\
C\mapsto B&x+2&(1-2\rho)/3\\
B\mapsto A&x&(1-\rho)/2
\end{array}
\label{eq:app-E8}
\end{equation}
\begin{equation}
\begin{array}{c|c|c}
\multicolumn{3}{c}{q^{A}:\ \text{order }2}\\
\text{Flow}&\text{Multiplicity}&\text{Payoff difference}\\ \hline
2(C\mapsto B)&\binom{x+3}{2}&-\rho/2\\
(C\mapsto A)+(C\mapsto B)&(x+3)(x+2)&(1-2\rho)/3\\
2(C\mapsto A)&\binom{x+3}{2}&1-\rho\\
(B\mapsto A)+(C\mapsto A)&x(x+3)&(1-\rho)/2\\
(A\mapsto B)+(C\mapsto B)&(x-1)(x+3)&-\rho/2
\end{array}
\label{eq:app-E9}
\end{equation}
\endgroup
Their sums are positive multiples of $\kappa_{h-k}-\rho$ and $\kappa_k-\rho$, respectively. Here $x=2k=h-1$, and
\begin{equation}
 \kappa_{h-k}:=\frac{11x+16}{13x+20},
 \qquad
 \kappa_k:=\frac{16x+20}{29x+28}.
\label{eq:app-E10}
\end{equation}

Equations~\eqref{eq:app-E7} and~\eqref{eq:app-E10} define $\kappa_k$ for $1\leq k\leq s$, while~\eqref{eq:app-E5} and~\eqref{eq:app-E10} define $\kappa_{h-k}$ for $0\leq k\leq s$. These disjoint index ranges contain all $h$ switching values.

Replacing $x$ by $x+2$ also replaces $d$ by $d-2$ and $t$ by $t-3$. Substitution into~\eqref{eq:app-E7} gives, for $x=2k>0$ and $d\geq5$,
\begingroup\small
\[
\kappa_{k+1}-\kappa_k
 =\frac{4(9dx+3d+6x^2+1)}
 {(9d^2+12dx-14x-13)(9d^2+12dx+12d+10x+3)}>0.
\]
Similarly, substitution into~\eqref{eq:app-E5} gives, for $x=2k\geq0$ and $d\geq5$,
\begin{align*}
    \kappa_{h-k}&-\kappa_{h-k-1}\\
 &=\frac{4(9d^2+36dx+36d+24x^2+36x+11)}
 {(9d^2+12dx+6d-16x-27)(9d^2+12dx+18d+8x+5)}>0.
\end{align*}

\endgroup
The two denominator factors that are not visibly positive are $9d^2-13+(12d-14)x$ and $9d^2+6d-27+(12d-16)x$; both are positive for $d\geq5$. At the two $d=3$ boundaries, with $x>0$ for the first comparison, the corresponding differences are
\[
 \kappa_{k+1}-\kappa_k
 =\frac{20x^2+109x+110}{(23x+60)(29x+86)}>0,
\]
and
\[
 \kappa_{h-k}-\kappa_{h-k-1}
 =\frac{9x^2+47x+50}{(11x+35)(13x+46)}>0.
\]
The split thresholds lie in $(1/2,2/3)$. For~\eqref{eq:app-E7}, the coefficient sum at $\rho=1/2$ has sign $1+(t+1)/(x+t+1)>0$, whereas at $\rho=2/3$ it has sign $-t-1-(x-2)(t+1)/(x+t+1)<0$; here a split voter has even $x\geq2$. The pooled threshold in~\eqref{eq:app-E5} equals $1-1/(3+t+tx/(x+t+1))$ and lies in $(2/3,1)$. The two roots in~\eqref{eq:app-E10} have the same respective ranges because $x=h-1\geq6$. Thus $\kappa_s<2/3<\kappa_{s+1}$, completing the ordering.
Consequently,
\begin{equation}
 \frac12<\kappa_1<\kappa_2<\cdots<\kappa_h<1.
\label{eq:app-E11}
\end{equation}
We can now read off the update. The odd-$x$ calculation already showed that every pair changes configuration. If $x=2k$, a split pair remains split exactly when $\rho<\kappa_k$, and a pooled pair becomes split exactly when $\rho<\kappa_{h-k}$. Giving both members of pair $i$ utilities in $(\kappa_{\tau_i},\kappa_{\tau_i+1})$ therefore gives
\[
 \Phi(I)=P\setminus I
 \qquad(x\text{ odd}),
\]
and
\[
 \Phi(I)=\{i\in I:\tau_i<k\}
 \cup\{i\in P\setminus I:\tau_i<h-k\}
 \qquad(x=2k).
\]
All inequalities are strict. If $A$ is selected, then $U^p(A)>U^p(C)$ for all sufficiently small $p$, while $U^p(A)>U^p(B)$ by Lemma~\ref{lem:top-bottom-dominance}. If $C$ is selected, then $U^p(C)>U^p(A)>U^p(B)$. Exchanging $A$ and $B$ proves the corresponding statement for the other member of every pair. Thus paired profiles map to paired profiles.

Finally, consider an added voter with ranking $C\succ A\succ B$ and normalized utilities $(1,\gamma,0)$ in ranking order, where $0<\gamma<1/2$. Her private vector is $q^C$ for every $x$, including $x=h$. The same flow sums give her $C-A$ margin the signs $1-2\gamma$ for $d=0$, $1-\gamma$ for positive even $d$, and $(13x+20)-(11x+16)\gamma$ for $d=1$. For odd $d\geq3$, with $t=(3d-1)/2$, its sign is $\left(2+t+tx/(x+t+1)\right)(1-\gamma)+1$. All four expressions are positive. Lemma~\ref{lem:top-bottom-dominance} excludes $B$, so every added voter stays at $C$.
\end{proof}
\endgroup
\end{document}